\newcommand\BibTeX{{\rmfamily B\kern-.05em \textsc{i\kern-.025em b}\kern-.08em
T\kern-.1667em\lower.7ex\hbox{E}\kern-.125emX}}
\newcommand{\dlt}[1]{}
\newcommand{\JV}[1]{#1}
\newcommand{\JVe}[1]{#1}
\newcommand{\skipifemptyarg}[1]{\ifthenelse{\isempty{#1}}{}{\left[#1\right]}}
\newcommand{\skipifscalar}[1]{\ifthenelse{\isempty{#1}}{}{;#1}}
\newcommand{\bs}[1]{\boldsymbol{#1}}
\newcommand{\scal}[2]{\bigl(#1,#2\bigr)}
\newcommand{\bildual}{^{-1}}
\newcommand{\bilf}[2]{a\ifthenelse{\isempty{#1}}{}{\bigl(#1,#2\bigr)}}
\newcommand{\bilfi}[2]{a\bildual\ifthenelse{\isempty{#1}}{}{\bigl(#1,#2\bigr)}}
\newcommand{\bilfG}[2]{\mathring{a}\ifthenelse{\isempty{#1}}{}{\bigl(#1,#2\bigr)}}
\newcommand{\bilfGi}[2]{\mathring{a}\bildual\ifthenelse{\isempty{#1}}{}{\bigl(#1,#2\bigr)}}
\newcommand{\bilfN}[2]{\widetilde{a}_\VN\ifthenelse{\isempty{#1}}{}{\bigl(#1,#2\bigr)}}
\newcommand{\bilfNi}[2]{\widetilde{a}_\VN\bildual\ifthenelse{\isempty{#1}}{}{\bigl(#1,#2\bigr)}}
\newcommand{\bilfD}[3]{\M{a}_{#1}\ifthenelse{\isempty{#2}}{}{\bigl(#2,#3\bigr)}}
\newcommand{\bilfDi}[3]{\M{a}\bildual_{#1}\ifthenelse{\isempty{#2}}{}{\bigl(#2,#3\bigr)}}
\newcommand{\bilfDF}[3]{\widehat{\M{a}}_{#1}\ifthenelse{\isempty{#2}}{}{\bigl(#2,#3\bigr)}}
\newcommand{\bilfDFi}[3]{\widehat{\M{a}}\bildual_{#1}\ifthenelse{\isempty{#2}}{}{\bigl(#2,#3\bigr)}}
\newcommand{\bilfDN}[2]{\tilde{\M{a}}_\VN\ifthenelse{\isempty{#1}}{}{\bigl(#1,#2\bigr)}}
\newcommand{\bilfDNi}[2]{\tilde{\M{a}}_\VN\bildual\ifthenelse{\isempty{#1}}{}{\bigl(#1,#2\bigr)}}
\newcommand{\obilf}[2]{\ol{a}\ifthenelse{\isempty{#1}}{}{\bigl(#1,#2\bigr)}}
\newcommand{\ubilfi}[2]{\ul{a}\bildual\ifthenelse{\isempty{#1}}{}{\bigl(#1,#2\bigr)}}
\newcommand{\set}[1]{\mathbb{#1}}
\newcommand{\M}[1]{\mat{#1}}
\newcommand{\MB}[1]{\boldsymbol{\mat{#1}}}
\newcommand{\mat}[1]{\mathsf{#1}} 
\newcommand{\V}[1]{\bs{#1}}
\newcommand{\T}[1]{\bs{#1}}
\newcommand{\ol}[1]{\overline{#1}}
\newcommand{\ul}[1]{\underline{#1}}
\newcommand{\mac}[1]{^{(#1)}}
\newcommand{\sub}[1]{^{#1}}
\newcommand{\incl}[1]{_{(#1)}}
\newcommand{\iter}[1]{_{(#1)}}
\newcommand{\puc}{\mathcal{Y}}
\newcommand{\pVN}{\meas{\VN}}
\newcommand{\ptVN}{\meas{\tVN}}
\newcommand{\ptVNr}{\meas{\tVNr}}
\newcommand{\pVM}{\meas{\VM}}
\newcommand{\pVP}{\meas{\VP}}
\newcommand{\per}{\#}
\newcommand{\eff}{{\mathrm{H}}}
\newcommand{\full}{{\mathrm{full}}}
\newcommand{\dime}{d}
\newcommand{\bound}{^{\mathrm{bound}}}
\newcommand{\Vf}{\V{f}}
\newcommand{\imu}{\mathrm{i}}		
\newcommand{\cA}{c_A}
\newcommand{\CA}{C_A}
\newcommand{\rA}{\rho_A}
\newcommand{\Aeff}{\TA_{\eff}}
\newcommand{\Beff}{\TB_{\eff}}
\newcommand{\AeffN}{\TA_{\eff,\VN}}
\newcommand{\BeffN}{\TB_{\eff,\VN}}
\newcommand{\cb}[1]{\V{U}^{(#1)}}
\newcommand{\del}{\ensuremath{\delta}}
\newcommand{\alp}{\ensuremath{\alpha}}
\newcommand{\bet}{\ensuremath{\beta}}
\newcommand{\xR}{\set{R}}
\newcommand{\xZ}{\set{Z}}
\newcommand{\xC}{\set{C}}
\newcommand{\xRd}{{\set{R}^{d}}}
\newcommand{\xCd}{\set{C}^{d}}
\newcommand{\xCdN}{\set{C}^{d\times\VN}}
\newcommand{\xRdd}{\set{R}^{d\times d}}
\newcommand{\xCdd}{\set{C}^{d\times d}}
\newcommand{\xRdN}{\xR^{d\times \VN}}
\newcommand{\xRdM}{\xR^{d\times \VM}}
\newcommand{\xCdM}{\xC^{d\times \VM}}
\newcommand{\xRdtN}{{\xR^{d\times\tVN}}}
\newcommand{\xRddspd}{\set{R}_{\mathrm{spd}}^{d\times d}}
\newcommand{\xMN}{\bigl[\set{R}^{d\times\VN}\bigr]^2}
\newcommand{\xMM}{\bigl[\set{R}^{d\times\VM}\bigr]^2}
\newcommand{\xMtN}{\bigl[\set{R}^{d\times\tVN}\bigr]^2}
\newcommand{\xhMN}{\bigl[\set{C}^{d\times\VN}\bigr]^2}
\newcommand{\xXN}{\xRdN}
\newcommand{\xXM}{\xRdM}
\newcommand{\xhXM}{\xCdM}
\newcommand{\xhXN}{\set{C}^{d\times\VN}}
\newcommand{\xX}{\set{X}}
\newcommand{\hX}{\hat{\set{X}}}
\newcommand{\cH}{\mathscr{H}}
\newcommand{\xNd}{\set{N}^d}
\newcommand{\Zd}{\set{Z}^d}
\newcommand{\ZdmO}{\set{Z}^d \backslash \{\T{0}\}}
\newcommand{\ZtNd}{\set{Z}^d_{\tVN}}
\newcommand{\ZtNrd}{\set{Z}^d_{\tVNr}}
\newcommand{\ZNd}{\set{Z}^d_{\VN}}
\newcommand{\ZMd}{\xZ^d_{\V{M}}}
\newcommand{\ZPd}{\xZ^d_{\V{P}}}
\newcommand{\xS}{\set{S}}
\newcommand{\xSd}{\set{S}^d}
\newcommand{\cU}{\mathscr{U}}
\newcommand{\cE}{\mathscr{E}}
\newcommand{\cJ}{\mathscr{J}}
\newcommand{\rcU}{\mathring{\mathscr{U}}}
\newcommand{\rcE}{\mathring{\mathscr{E}}}
\newcommand{\cEN}{\cE_\VN}
\newcommand{\cJN}{\cJ_\VN}
\newcommand{\Lper}[2]{L^{#1}_\#(\puc\skipifscalar{#2})}
\newcommand{\Ltp}{L^{2}_{\per}}
\newcommand{\Cper}[2]{C^{#1}_\#(\puc\skipifscalar{#2})}
\newcommand{\FT}[1]{\widehat{#1}}
\newcommand{\xUN}{\set{U}_\VN}
\newcommand{\xEN}{\set{E}_\VN}
\newcommand{\xJN}{\set{J}_\VN}
\newcommand{\xUNM}{\set{U}_{\VM}}
\newcommand{\xENM}{\set{E}_{\VN,\VM}}
\newcommand{\xJNM}{\set{J}_{\VN,\VM}}
\newcommand{\xUNN}{\set{U}_{\VN}}
\newcommand{\xENN}{\set{E}_{\VN,\VN}}
\newcommand{\xJNN}{\set{J}_{\VN,\VN}}
\newcommand{\xUNF}{\widehat{\set{U}}_\VN}
\newcommand{\xENF}{\widehat{\set{E}}_\VN}
\newcommand{\xJNF}{\widehat{\set{J}}_\VN}
\newcommand{\cT}{\mathscr{T}}
\newcommand{\cTN}{\cT_\VN}
\newcommand{\cTtNrd}{\cT^d_\tVNr}
\newcommand{\cTNd}{\cT_\VN^d}
\newcommand{\TA}{\ensuremath{\T{A}}}
\newcommand{\TB}{\ensuremath{\T{B}}}
\newcommand{\mB}{\ensuremath{\T{B}}}
\newcommand{\rTA}{\ensuremath{\mathring{\T{A}}}}
\newcommand{\mI}{\ensuremath{{\T{I}}}}
\newcommand{\mM}{\ensuremath{\T{M}}}
\newcommand{\mhG}{\ensuremath{\T{\hat{\Gamma}}}}
\newcommand{\VN}{\ensuremath{{\V{N}}}}
\newcommand{\tVNr}{\ensuremath{{2\VN-\V{1}}}}
\newcommand{\tVN}{\ensuremath{{2\VN}}}
\newcommand{\VM}{\ensuremath{{\V{M}}}}
\newcommand{\VP}{\ensuremath{{\V{P}}}}
\newcommand{\Ve}{\ensuremath{\V{e}}}
\newcommand{\Vj}{\ensuremath{\V{\jmath}}}
\newcommand{\rVe}{\ensuremath{\mathring{\Ve}}}
\newcommand{\gVe}{\ensuremath{\widetilde{\Ve}}}
\newcommand{\gVj}{\ensuremath{\widetilde{\Vj}}}
\newcommand{\VE}{\ensuremath{\V{E}}}
\newcommand{\VJ}{\ensuremath{\V{J}}}
\newcommand{\Vu}{\ensuremath{\V{u}}}
\newcommand{\Vv}{\ensuremath{\V{v}}}
\newcommand{\Vx}{\ensuremath{\V{x}}}
\newcommand{\Vm}{\ensuremath{{\V{m}}}}
\newcommand{\Vn}{\ensuremath{{\V{n}}}}
\newcommand{\Vk}{\ensuremath{{\V{k}}}}
\newcommand{\Vp}{\ensuremath{{\V{p}}}}
\newcommand{\Vl}{\ensuremath{{\V{l}}}}
\newcommand{\Vr}{\ensuremath{\V{r}}}
\newcommand{\Vs}{\ensuremath{\V{s}}}
\newcommand{\Vh}{{\ensuremath{\V{h}}}}
\newcommand{\Vo}{\ensuremath{{\V{0}}}}
\newcommand{\MBe}{\ensuremath{\MB{e}}}
\newcommand{\rMBe}{\ensuremath{\mathring{\MB{e}}}}
\newcommand{\MBj}{\ensuremath{\MB{j}}}
\newcommand{\gMBe}{\ensuremath{\widetilde{\MB{e}}}}
\newcommand{\gMBj}{\ensuremath{\widetilde{\MB{j}}}}
\newcommand{\MBA}{\ensuremath{\MB{A}}}
\newcommand{\MBhA}{\ensuremath{\MB{\FT{A}}}}
\newcommand{\MBtA}{\ensuremath{\MB{\widetilde{A}}}}
\newcommand{\MBC}{\ensuremath{\MB{C}}}
\newcommand{\MA}{\ensuremath{\M{A}}}
\newcommand{\Mu}{\ensuremath{\M{u}}}
\newcommand{\Mv}{\ensuremath{\M{v}}}
\newcommand{\MBu}{\ensuremath{\MB{u}}}
\newcommand{\MBv}{\ensuremath{\MB{v}}}
\newcommand{\MBhu}{\ensuremath{\MB{\FT{u}}}}
\newcommand{\MBhv}{\ensuremath{\MB{\FT{v}}}}
\newcommand{\MBx}{\ensuremath{\MB{x}}}
\newcommand{\Mx}{\ensuremath{\M{x}}}
\newcommand{\MBy}{\ensuremath{\MB{y}}}
\newcommand{\My}{\ensuremath{\M{y}}}
\newcommand{\MBB}{\ensuremath{\MB{B}}}
\newcommand{\MBS}{\ensuremath{\MB{S}}}
\newcommand{\bMBu}{\ensuremath{\breve{\MB{u}}}}
\newcommand{\bMBv}{\ensuremath{\breve{\MB{v}}}}
\newcommand{\bMu}{\ensuremath{\breve{\M{u}}}}
\newcommand{\bMv}{\ensuremath{\breve{\M{v}}}}
\newcommand{\MBG}{\MB{G}}
\newcommand{\MBhG}{\ensuremath{\MB{\widehat{G}}}}
\newcommand{\MBF}{\ensuremath{\MB{F}}}
\newcommand{\MBFi}{\ensuremath{\MB{F}^{-1}}}
\newcommand{\MBGNM}[1]{\ensuremath{\MBG_{\VN,\VM}^{#1}}}
\newcommand{\MBGNN}[1]{\ensuremath{\MBG_{\VN,\VN}^{#1}}}
\newcommand{\MBhGNM}[1]{\ensuremath{\MBhG_{\VN,\VM}^{#1}}}
\newcommand{\MBhGNN}[1]{\ensuremath{\MBhG_{\VN,\VN}^{#1}}}
\DeclareMathOperator{\rect}{rect}
\DeclareMathOperator{\tri}{tri}
\DeclareMathOperator{\eig}{eig}
\DeclareMathOperator*{\argmin}{arg\,min}
\DeclareMathOperator{\sinc}{sinc}
\DeclareMathOperator{\circl}{circ}
\DeclareMathOperator{\tr}{tr}
\DeclareMathOperator{\curl}{curl}
\let\div\undefined
\DeclareMathOperator{\div}{div}
\newcommand{\IM}[1]{\mathcal{I}_{#1}}
\newcommand{\IMi}[1]{\IM{#1}^{-1}}
\DeclareMathOperator{\esssup}{ess\,sup}
\newcommand{\norm}[2]{\bigl\| #1 \bigr\|_{#2}}
\newcommand{\meas}[1]{|#1|}
\newcommand{\mean}[1]{\langle#1\rangle}
\newcommand{\zmean}{0}
\newcommand{\D}[1]{\,{\mathrm d}#1}
\newcommand{\conj}[1]{\overline{#1}}
\newcommand{\gani}{\mathrm{GaNi}}
\newcommand{\tAeffN}{\widetilde{\TA}_{\eff,\VN}}
\newcommand{\tBeffN}{\widetilde{\TB}_{\eff,\VN}}
\theoremstyle{plain}
\newtheorem{theorem}{Theorem}
\newtheorem{definition}[theorem]{Definition}
\newtheorem{lemma}[theorem]{Lemma}
\newtheorem{corollary}[theorem]{Corollary}
\newtheorem{remark}[theorem]{Remark}
\newtheorem{proposition}[theorem]{Proposition}
\newtheorem{problem}[theorem]{Problem}
\title{Improved guaranteed computable bounds on homogenized properties of periodic media by Fourier-Galerkin method with exact integration}
\author[1,2]{Jaroslav~Vond\v{r}ejc
}
\affil[1]{New Technologies for the Information Society, Faculty of Applied Sciences, University of West Bohemia, Univerzitn\'{i} 2732/8, 306 14 Plze\v{n}, Czech Republic. E-mail: \href{mailto:vondrejc@gmail.com}{\texttt{vondrejc@gmail.com}}}
\affil[2]{Technische Universit\"{a}t Braunschweig, Institute of Scientific Computing, Mühlenpfordstrasse~23, 38106 Braunschweig, Germany}
\date{}
\begin{document}
\maketitle
\begin{abstract}
Moulinec and Suquet introduced FFT-based homogenization in 1994, and twenty years later, their approach is still effective for evaluating the homogenized properties arising from the periodic cell problem.
This paper builds on the author's (2013) variational reformulation approximated by trigonometric polynomials establishing two numerical schemes: Galerkin approximation (Ga) and a version with numerical integration (GaNi).
The latter approach, fully equivalent to the original Moulinec-Suquet algorithm, was used to evaluate guaranteed upper-lower bounds on homogenized coefficients incorporating a closed-form double grid quadrature.
Here, these concepts\JV{, based on the primal and the dual formulations,} are employed for the Ga scheme. \JVe{\JV{For the same computational effort, the Ga outperforms the GaNi with more accurate guaranteed bounds and more predictable numerical behaviors.
Quadrature technique} leading to block-sparse linear systems is extended here to materials defined via high-resolution images in a way which allows for effective treatment using the FFT.}
\JV{Memory demands are reduced by a reformulation of the double to the original grid scheme using FFT shifts.}
\JVe{Minimization of the bounds during iterations of conjugate gradients is effective, particularly when incorporating a solution from a coarser grid.
The methodology presented here for the scalar linear elliptic problem could be extended to more complex frameworks.}

\textbf{Keywords:}
Guaranteed bounds; Variational methods; Numerical homogenization; Galerkin approximation; Trigonometric polynomials; Fourier Transform
\end{abstract}

\section{Introduction}
\label{sec:introduction}
This paper is devoted to FFT-based homogenization (Fourier-Galerkin method), a numerical method for evaluating homogenized (effective) material coefficients which are essential in multiscale design. This method, which is an alternative to Finite Differences~\cite{Flaherty1973}, Finite Elements~\cite{Guedes1990,Geers2010}, Boundary Elements \cite{Eischen1993BEM,Prochazka2003BEM}, or Fast Multipole Methods~\cite{Greengard2006,Helsing2011effective}, Composite Finite Elements \cite{Hackbusch1997}, X-FEM \cite{Legrain2011}, or the Finite Cell Method \cite{Duster2012FCM_homog}, enables the direct treatment of material coefficients defined via high-resolution images.

The method's effectiveness relies on a Fast Fourier Transform (FFT) that is used for matrix-vector multiplication when solving linear systems; this can be provided in $O(N\log N)$ operations which outperforms most of the existing methods.

\JV{The method's reliability is provided by computable guaranteed upper-lower bounds on homogenized properties \cite{VoZeMa2014GarBounds}, which are based on primal and dual variational formulations \cite{Suquet1982dual} along with a conforming approximation \cite{VoZeMa2014FFTH}.
Surprisingly, this can be elegantly and efficiently treated also with the dual formulation since divergence-free fields can be constrained in the Fourier domain, noticed e.g. by \cite{Nemat-Nasser1993book,Bonnet2007}. 
Generally, the conforming approximation in the dual formulation is more difficult to provide because it requires special basis functions and techniques especially for vector-valued problems such as linearized elasticity.}

Up to now, the drawbacks of FFT-based methods compared to the above mentioned methods included low adaptability originating from the use of regular discretization grids or indirect application for materials with holes. Difficulties also arose when treating various complex (nonlinear, coupled) physical problems since the FFT-based method was originally formulated for the Lippmann-Schwinger integral equation incorporating the Green function derived for a reference medium as a parameter of the method. However, the method has already been applied to e.g. large deformations \cite{Kabel2014LargeDef}, viscoelasticity \cite{Smilauer2010identification}, thermo-elasticity \cite{Vinogradov2008AFFT}, or fracture and damage mechanics \cite{Li2012damage}.

\subsection{FFT-based Homogenization}
\label{sec:fft-based-homogenization}
\JVe{For simplicity and clarity, the methodology is presented here only for a scalar linear elliptic problem describing stationary heat transfer, electric conductivity, or diffusion.}

The model problem consists of an evaluation of homogenized properties $\Aeff\in\xRdd$ that comply with the minimization problem
\begin{align}
\label{eq:intro_HP}
\Aeff \VE\cdot\VE
&= \inf_{u\in H^1_{\per,\zmean}(\puc)} 
\int_{\puc}
\TA(\Vx)[\VE+\nabla u(\Vx)] \cdot [\VE+\nabla u(\Vx)] \D{\Vx}
\end{align}
for an arbitrary vector $\VE\in\xRd$.
The region $\puc=\prod_{\alp=1}^d\bigl(-\frac{1}{2},\frac{1}{2}\bigr)\subset{\xRd}$ accounts for a $d$-dimensional cell where the material coefficients are defined through the bounded, symmetric, and uniformly elliptic $\puc$-periodic matrix function $\TA:\xRd\rightarrow\xRdd$.
The trial space $H^1_{\per,\zmean}(\puc)$ consists of the $\puc$-periodic scalar functions $u:\puc\rightarrow\xR$ with a square integrable gradient and zero mean.

The original FFT-based method was proposed in 1994 by Moulinec and Suquet \cite{Moulinec1994FFT} as a new numerical algorithm for solving the Lippmann-Schwinger equation\JV{, derived from \eqref{eq:intro_HP}}. In this paper, FFT-based methods build upon a variational reformulation by the author and co-workers \cite{Vondrejc2013PhD,VoZeMa2014FFTH,VoZeMa2012LNSC} together with an approximation carried out with
a truncated Fourier series space $\cTNd$
with Fourier basis functions $\varphi_\Vk(\Vx)=\exp(2\pi\imu \Vk\cdot \Vx)$ having bounded frequencies $\Vk\in\ZNd$.
Thus, the reference medium parameter is naturally avoided and two discretization schemes to the cell problem \eqref{eq:intro_HP} are revealed:
Galerkin approximation (Ga)
\begin{subequations}
\label{eq:intro_Ga_both}
\begin{align}\label{eq:intro_Ga}
\TA_{\eff,\VN}\VE\cdot\VE
&= \inf_{u_{\VN}\in \cTN } \int_\puc
\TA(\Vx)[\VE + \nabla u_{\VN}(\Vx)]\cdot [\VE + \nabla u_{\VN}(\Vx)]\D{\Vx},
\end{align}
and Galerkin approximation with numerical integration (GaNi)
\begin{align}\label{eq:intro_GaNi}
\tAeffN^{\gani}\VE\cdot\VE &=
\inf_{u_{\VN}\in \cTN } \sum_{\Vk\in\ZNd} \TA(\Vx_\VN^\Vk)[\VE + \nabla u_{\VN}(\Vx_\VN^\Vk)]\cdot [\VE + \nabla u_{\VN}(\Vx_\VN^\Vk)],
\end{align}
\end{subequations}
which is provided by a trapezoidal (rectangular) rule with integration points $\Vx_\VN^\Vk$ for $\Vk\in\ZNd$ located on a regular grid, introduced in section~\ref{sec:trig_pols}, particularly in~\eqref{eq:grid_points} along with the index set $\ZNd$ in \eqref{eq:index_set}.

\JVe{The discretization of the cell problem \eqref{eq:intro_HP} with trigonometric polynomials $\cTNd$ is a standard numerical approach\JV{, which }has been used several times as the spectral Fourier-Galerkin or collocation method, e.g. \cite{Dykaar1992,Nemat-Nasser1993book,Moulinec1994FFT,Luciano1998,Vainikko2000FSLS,Naess2002,Cai2008}, mostly within the Lippmann-Schwinger equation \cite{Nemat-Nasser1993, Bonnet2007} or in a standard variational setting \cite{Luciano1998}.}
\JV{A special attention is attributed to the utilization of the FFT algorithm in \cite{Dykaar1992} or in \cite{Moulinec1994FFT} leading to the Moulinec and Suquet algorithm that was interpreted as a trigonometric collocation method for the Lippmann-Schwinger equation \cite{ZeVoNoMa2010AFFTH} or as a Fourier-Galerkin method with numerical integration \eqref{eq:intro_GaNi} in \cite[section~5.2]{VoZeMa2014FFTH}.
Bonnet in \cite{Bonnet2007} incorporated not only FFT but also exact inclusion geometries into the Lippmann-Schwinger equation; it was interpreted in \cite[section~4.2]{VoZeMa2014FFTH} as a Fourier-Galerkin method with exact integration \eqref{eq:intro_Ga} in the standard variational framework.}

The theories regarding FFT-based methods --- including discretization, convergence of approximate solutions, and solution of the corresponding linear systems --- have been provided only recently by the author and co-workers \cite{VoZeMa2014FFTH} \JV{in the standard variational setting  or later in \cite{Schneider2014convergence} for rough material coefficients within the Lippmann-Schwinger equation.}

\JV{Another theoretically supported approach by Brisard and Dormieux in \cite{Brisard2010FFT,Brisard2012FFT}} describes the method using Galerkin approximation with piece-wise constant basis functions using a Lippmann-Schwinger integral equation. However, the reference medium, a parameter of the method, influences both the quality of the approximate solutions and the convergence of linear solvers.

Apart from the Fourier-Galerkin formulations \eqref{eq:intro_Ga_both} studied in \cite[sections~4.2 and 4.3]{VoZeMa2014FFTH} and the Brisard and Dormieux approach \cite{Brisard2010FFT,Brisard2012FFT} --- both of which lead to FFT-based schemes --- there are various other modifications and improvements.
In \cite{Monchiet2012polarization}, the authors provided a polarization-based scheme which can handle arbitrary phase contrast (voids and stiff inclusions); this can be also managed by a numerical method based on augmented Lagrangians \cite{Michel2000CMB}.
Recently, Willot et al. in \cite{willot2013fourier,Willot2015} adjusted the integral kernel in the Lippmann-Schwinger equation which led to improved accuracy in approximate solutions, illustrated with a comparison using an analytical solution \cite{craster2001four}. A~possible improvement can also be achieved by smoothing of material coefficients \cite{Merkert2014},\cite[section~8.3]{VoZeMa2014GarBounds}.
Significant attention was granted to improving the linear solvers leading to the accelerated schemes \cite{Eyre1999FNS,Vinogradov2008AFFT} \JV{or to the Krylov subspace methods, such as conjugate gradients \cite{ZeVoNoMa2010AFFTH,Brisard2010FFT}; a comparison can be found in \cite{Moulinec2014comparison,MiVoZe2015jcp}.}

\subsection{Guaranteed bounds on homogenized properties}

The theory of guaranteed bounds on homogenized coefficients has been the subject of many studies in analytical homogenization theories. These techniques employ the primal-dual formulations of the cell problem \eqref{eq:intro_HP} with limited --- and often uncertain --- information about the material coefficients $\TA$. Specific examples include the Voigt \cite{Voigt1910lehrbuch}, Reuss \cite{Reuss1929}, and Hashin-Shtrikman  bounds \cite{Hashin1963variational}; see the monographs \cite{Nemat-Nasser1993book,Cherkaev2000variational,Milton2002TC,Torquato2002random,Dvorak2012micromechanics}
for a more complete overview. Because the bounds rely on limited data, their performance rapidly deteriorates for highly-contrasted media.

Relatively less attention has been given to the computable upper-lower bounds arising from a conforming  approximation to the cell problem \eqref{eq:intro_HP}.
These bounds can be made arbitrarily accurate if the approximate solutions converge to the solution of \eqref{eq:intro_HP}. Moreover, the bounds are guaranteed if they allow for closed-form evaluation.
Dvo\v{r}\'{a}k and Haslinger, to our knowledge, specified the relevant ideas in their work \cite{Dvorak1993master,Haslinger1995optimum}, which applied the approach to the $p$-version of FEM. The application to the $h$-version of FEM occurs independently in \cite{Wieckowski1995DFEM} \JV{for an elasticity} and to Fourier discretization \JV{within the Lippmann-Schwinger equation (Hashin-Shtrikman functional) \cite{Nemat-Nasser1993,Bonnet2007} or the standard variational setting \cite{Luciano1998}}.

The effective evaluation of upper-lower bounds using FFT has been provided recently in \cite{Kabel2012precisebounds} for linear elasticity and later in \cite{Bignonnet2014fft} for permeability.
Both frameworks rely on the Hashin-Shtrikman functional or the Lippmann-Schwinger equation, respectively, discretized with the Brisard--Dormieux method \cite{Brisard2012FFT} assuming piecewise-constant approximations.
However, the evaluation of bounds is based on the summation of an infinite series leading to the loss of guarantee.

The Fourier-Galerkin approaches used here in \eqref{eq:intro_Ga_both} or in \cite{Nemat-Nasser1993,Luciano1998} provide guaranteed bounds on homogenized properties, which is based on closed-form evaluation of corresponding bilinear forms in \eqref{eq:intro_Ga} for material properties that have an analytical expression of Fourier coefficients.
This approach has been already employed in \cite{VoZeMa2014GarBounds} to GaNi \eqref{eq:intro_GaNi}, while here the focus is on its generalization to and comparison with the Ga \eqref{eq:intro_Ga}.
Moreover, these bounds can be made arbitrarily accurate thanks to the convergence analysis of approximate solutions provided by Vond\v{r}ejc et al. \cite{Vondrejc2013PhD,VoZeMa2014FFTH} and improved by Schneider \cite{Schneider2014convergence} to account for rough coefficients.

The guaranteed bounds incorporating trigonometric polynomials were mostly calculated with the Hashin-Shtrikman functional \cite{Nemat-Nasser1993}.
The work  \cite{Luciano1998} equivalent to \eqref{eq:intro_Ga} is improved here by incorporating the FFT algorithm and by using a double grid quadrature, which leads to a sparse structure according to \cite{Vondrejc2013PhD,VoZeMa2014GarBounds}.
In \cite{Monchiet2015}, these ideas have been applied to linear elasticity illustrating that the classical variational formulation used here is always better than the Hashin-Shtrikman formulation.

\subsection{Content of the paper}

This paper is organized as follows.
Notation and preliminaries to the periodic functions, Fourier transform, and Helmholtz decomposition presented in section~\ref{sec:preliminaries} are followed with the continuous homogenization problem in section~\ref{sec:continuous_formulations}.
Then section~\ref{sec:homogenization_and_galerkin_methods} follows with Fourier-Galerkin discretization for both primal and dual formulations leading to guaranteed bounds on homogenized coefficients; the structure between Ga and GaNi is established here.
The methodology for evaluating guaranteed bounds is developed in section~\ref{sec:eval_integrals}. Particularly in section \ref{sec:grid_composite}, the double grid quadrature from \cite{VoZeMa2014GarBounds} is generalized for materials defined via high-resolution images.
In section \ref{sec:reduce_grid}, the numerical scheme on the double grid is reduced to the original grid using shifts of DFT \eqref{eq:DFT}.
In section \ref{sec:approx_bounds}, the material properties without analytical expression of Fourier coefficients are approximated in a way to still obtain the guaranteed bounds on the homogenized properties.
Section~\ref{sec:lin_sys} is dedicated to linear systems of Fourier-Galerkin schemes and the related computational aspects.
Numerical examples in section~\ref{sec:numerical_examples} confirm the theoretical results and provide a numerical comparison between the Ga \eqref{eq:intro_Ga} and GaNi schemes \eqref{eq:intro_GaNi}.

\section{Notation and preliminaries to the cell problem}
\label{sec:preliminaries}

In the sections \ref{sec:vectors_matrices} and \ref{sec:periodic_functions_and_fourier}, the author introduces notation and recalls some useful facts related to matrix analysis and to spaces of periodic functions and the Fourier transform. Section~\ref{sec:helmoholtz} is dedicated to the Helmholtz decomposition of vector-valued periodic functions and its description with orthogonal projections, essential for the duality arguments in both discrete and continuous settings.

\subsection{Vectors and matrices}
\label{sec:vectors_matrices}
In the subsequent section, $d$ is reserved for the dimension of the model problem, assuming $d=2,3$.
To keep the notation compact, $\xX$
abbreviates the space of scalars, vectors, or matrices, i.e. $\xR$, $\xRd$, or $\xRdd$, and $\hX$ is used for their complex counterparts, i.e.~$\xC$, $\xCd$, or $\xC^{d \times d}$. Vectors and matrices are denoted by boldface letters, e.g.
$\Vu,\Vv \in \xRd$ or $\mM \in \xRdd$, with Greek letters used when referring to
their entries; e.g. $\mM = (M_{\alp\beta})_{\alp,\beta=1,\ldots,\dime}$.
Matrix $\mI = \bigl(\del_{\alp\beta}\bigr)_{\alp\beta}$ denotes the identity matrix whereas the symbol $\del_{\alp\beta}$ is reserved for the Kronecker delta, defined as
$\delta_{\alp\beta} = 1$ for $\alp = \beta$ and $\delta_{\alp\beta} = 0$ otherwise. 

As usual, the matrix-vector product $\mM \T{v}$, the matrix-matrix product $\mM\T{L}$, dot product $\T{u} \cdot \T{v}$, and
the outer product $\T{u} \otimes \T{v}$ refer to 
\begin{align*}
(\mM \Vu)_\alp &= \sum_\beta M_{\alp\beta} u_\beta, 
&
(\mM\T{L})_{\alp\beta} &= \sum_{\gamma} M_{\alp\gamma} L_{\gamma\beta},
&
\T{u} \cdot \T{v}
&=
\sum_\alp u_\alp v_\alp,
&
(\T{u} \otimes \T{v})_{\alp\beta}
&=
u_\alp v_\beta,
\end{align*}
where the author assumes that $\alp$ and $\beta$ range from $1$ to $d$ for the sake of brevity.
Moreover, the spaces are endowed with the following inner product and norms, e.g. 
\begin{align*}
\|\T{u}\|_{\xCd}^2
&= \sum_\alp u_\alp
\overline{u_\alp},
&
\|\mM\|_{\xCdd} &= \max_{\Vu\neq\T{0}} \frac{\|\mM\T{u}\|_{\xCd}}{\|\T{u}\|_{\xCd}}.
\end{align*}

\subsection{Periodic functions and Fourier transform}
\label{sec:periodic_functions_and_fourier}
For a unit cell $\puc = \prod_\alp 
\bigl(-\frac{1}{2},
\frac{1}{2}\bigr)$, a function $\Vf : \xRd \rightarrow \xX$ is 
$\puc$-periodic if
$\Vf(\Vx + \Vk)
= \Vf(\Vx)$ for all $\Vx \in \puc$
and all $\Vk \in \Zd$.
According to \cite{Rudin1986real,Jikov1994HDOIF,SaVa2000PIaPDE},
\begin{align*}
\Lper{p}{\xX}
=
\left\{
\Vf :\puc\rightarrow\xX
: 
\Vf \text{ is $\puc$-periodic, measurable, and }\|\Vf\|_{\Lper{p}{\xX}}<\infty
\right\}
\quad\text{for }p \in \{2,\infty\}
\end{align*}
denotes the Lebesgue space equipped with the norm
\begin{align*}
\norm{\Vf}{\Lper{p}{\xX}}
=
\begin{cases}
\esssup_{\Vx\in\puc} \norm{\Vf(\Vx)}{\xX} & \text{for } p=\infty,
\\
\left( \displaystyle
\int_\puc
\norm{\Vf(\Vx)}{\xX}^2 \D{\Vx}
\right)^{1/2} & \text{for }p=2.
\end{cases}
\end{align*}
The space $\Lper{2}{\xRd}$ is also a Hilbert space with an inner product  
\begin{align*}
\scal{\Vu}{\Vv}_{\Lper{2}{\xRd}} &=
\int_{\puc} \Vu(\Vx)\cdot\Vv(\Vx) \D{\Vx}.
\end{align*}
For the sake of brevity, the author writes $\Lper{p}{}$ instead of $\Lper{p}{\xR}$, and often shortens $\Lper{2}{\xRd}$ to $\Ltp$ when referring to the norms and the inner product.

Every function $\Vf \in \Lper{2}{\xX}$ can be expressed using Fourier series
\begin{align*}
\Vf(\Vx) 
=
\sum_{\Vk \in \Zd}
\FT{\Vf}( \Vk )
\varphi_{\Vk}(\Vx)
\end{align*}
where the Fourier basis functions $\varphi_\Vk$ and Fourier coefficients $\FT{f}(\Vk)$ for $\Vk\in\Zd$ are defined by
\begin{align*}
\varphi_{\Vk}(\Vx)
&=
\exp{
\Bigl(
  2\pi\imu
    \Vk\cdot\Vx
\Bigr)}
&
\FT{\Vf}( \Vk )
&=
\int_\puc
\Vf(\Vx)
\varphi_{-\Vk}(\Vx)
\D{\Vx}
\in \FT{\xX}
\quad
\text{for }
\Vx \in \puc 
\text{ and }
\Vk \in \Zd,
\end{align*}
cf.~\cite[pp.~89--91]{Rudin1986real}. The mean value of function $\Vf\in \Lper{2}{\xX}$ over periodic cell $\puc$ is denoted as 
\begin{align*}
\mean{\Vf} = \int_{\puc}\Vf(\Vx)\D{\Vx} = \FT{\Vf}(\Vo) \in \xX
\end{align*}
and corresponds to the zero-frequency Fourier coefficient.

\subsection{Helmholtz decomposition for periodic functions}
\label{sec:helmoholtz}
Operator $\oplus$ denotes the direct sum of mutually orthogonal subspaces, e.g. $\xRd=\cb{1}\oplus\cb{2}\oplus\dotsc\oplus\cb{d}$ for vectors  $\cb{\alp} = (\del_{\alp\beta})_\beta$.
According to the Helmholtz decomposition \cite[pages~6--7]{Jikov1994HDOIF}, $\Lper{2}{\xRd}$ admits an orthogonal decomposition
\begin{align}\label{eq:Helmholtz_L2}
\Lper{2}{\xRd}=\cU\oplus\cE\oplus\cJ
\end{align}
into the subspaces of constant, zero-mean curl-free, and zero-mean divergence free fields
\begin{subequations}\label{eq:subspaces_UEJ}
\begin{align}
\cU &= \{\Vv\in \Lper{2}{\xRd}:\Vv(\Vx) = \mean{\Vv}\text{ for all }\Vx\in\puc\},
\\
\cE &= \{\Vv\in \Lper{2}{\xRd}:\curl\Vv = \Vo,\mean{\Vv}=\Vo\},
\\
\cJ &= \{\Vv\in \Lper{2}{\xRd}:\div\Vv = 0,\mean{\Vv}=\Vo\}.
\end{align}
\end{subequations}
Here, the differential operators $\curl$ and $\div$
are understood in the Fourier sense, so that
\begin{align*}
(\curl \Vu)_{\alp\beta}
& = \sum_{\Vk\in\Zd} 2\pi\imu \bigl(k_\beta\FT{u}_\alp(\Vk) - k_\alp\FT{u}_\beta(\Vk)\bigr) \varphi_\Vk,
&
\div \Vu
& = \sum_{\Vk\in\Zd} 2\pi\imu \Vk\cdot\FT{\Vu}(\Vk) \varphi_\Vk,
\end{align*}
cf.~\cite[pp.~2--3]{Jikov1994HDOIF} and \cite{SaVa2000PIaPDE}. Furthermore, the constant functions from $\cU$ are naturally identified with vectors from $\xRd$.
 
Alternatively, the subspaces arising in the Helmholtz decomposition \eqref{eq:subspaces_UEJ} can be characterized by the orthogonal projections introduced below.
\begin{definition}\label{def:projection}
Let $\mathcal{G}\sub{\cU}$, $\mathcal{G}\sub{\cE}$, and $\mathcal{G}\sub{\cJ}$ denote operators $\Lper{2}{\xRd}\rightarrow \Lper{2}{\xRd}$ defined via
\begin{align*}
\mathcal{G}\sub{\bullet} [\Vv](\Vx) &= 
\sum_{\Vk\in\Zd} \mhG\sub{\bullet}(\Vk) \FT{\Vv}(\Vk) \varphi_{\Vk}(\Vx)\quad\text{for }\bullet\in\{\cU,\cE,\cJ\},
\end{align*}
where the matrices of Fourier coefficients $\mhG\sub{\bullet}(\Vk)\in\xR^{d\times d}$ read
\begin{align*}
\mhG\sub{\cU}(\Vk) &=
\begin{cases}
\mI
\\
\Vo\otimes\Vo
\end{cases}
&
\mhG\sub{\cE}(\Vk) &=
\begin{cases}
\Vo\otimes\Vo
\\
\frac{\Vk\otimes\Vk}{\Vk\cdot\Vk}
\end{cases}
&
\mhG\sub{\cJ} (\Vk) &=
\begin{cases}
\Vo\otimes\Vo,&\text{for }\Vk = \Vo
\\
\mI - \frac{\Vk\otimes\Vk}{\Vk\cdot\Vk}&\text{for }\Vk\in\ZdmO
\end{cases}.
\end{align*}
\end{definition}
\begin{lemma}
\label{lem:Helmholtz_decomp}
Operators $\mathcal{G}\sub{\cU}$, $\mathcal{G}\sub{\cE}$, and $\mathcal{G}\sub{\cJ}$ are mutually orthogonal projections with respect to the inner product on $\Lper{2}{\xRd}$, on $\cU$,$\cE$, and $\cJ$.
\end{lemma}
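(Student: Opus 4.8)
The plan is to regard $\mathcal{G}\sub{\cU}$, $\mathcal{G}\sub{\cE}$, $\mathcal{G}\sub{\cJ}$ as Fourier multiplier operators and to reduce every assertion to a pointwise-in-$\Vk$ statement about the symbol matrices $\mhG\sub{\bullet}(\Vk)\in\xRdd$. The reduction rests on two standard facts: by Parseval's identity $\scal{\Vu}{\Vv}_{\Ltp}=\sum_{\Vk\in\Zd}\FT{\Vu}(\Vk)\cdot\overline{\FT{\Vv}(\Vk)}$, and the composition of two such multipliers is again a multiplier whose symbol is the product of the two symbol matrices. Consequently $\mathcal{G}\sub{\bullet}^2$, $\mathcal{G}\sub{\bullet}\mathcal{G}\sub{\circ}$, and the adjoint $\mathcal{G}\sub{\bullet}^{*}$ are precisely the multipliers with symbols $\mhG\sub{\bullet}(\Vk)^2$, $\mhG\sub{\bullet}(\Vk)\mhG\sub{\circ}(\Vk)$, and the conjugate transpose $\overline{\mhG\sub{\bullet}(\Vk)}^{\mathsf T}$. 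Hence it suffices to verify, for every $\Vk$, that each $\mhG\sub{\bullet}(\Vk)$ is symmetric and idempotent, that $\mhG\sub{\bullet}(\Vk)\mhG\sub{\circ}(\Vk)=\Vo\otimes\Vo$ for $\bullet\neq\circ$, and (as a consistency check with \eqref{eq:Helmholtz_L2}) that the three symbols sum to $\mI$.

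First I would dispatch the symbol identities. At $\Vk=\Vo$ the symbols are $\mI$, $\Vo\otimes\Vo$, $\Vo\otimes\Vo$, for which all relations are immediate. For $\Vk\in\ZdmO$ the matrix $\frac{\Vk\otimes\Vk}{\Vk\cdot\Vk}$ is the rank-one orthogonal projector onto the line spanned by $\Vk$, and $\mhG\sub{\cJ}(\Vk)=\mI-\frac{\Vk\otimes\Vk}{\Vk\cdot\Vk}$ is the complementary projector onto its orthogonal complement; symmetry is visible from $\Vk\otimes\Vk$, while idempotency and $\mhG\sub{\cE}(\Vk)\mhG\sub{\cJ}(\Vk)=\Vo\otimes\Vo$ follow from $\bigl(\frac{\Vk\otimes\Vk}{\Vk\cdot\Vk}\bigr)^2=\frac{\Vk\otimes\Vk}{\Vk\cdot\Vk}$. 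The products involving $\mhG\sub{\cU}(\Vk)=\Vo\otimes\Vo$ vanish trivially, and the three symbols sum to $\mI$. Transporting these back through the multiplier calculus shows that each $\mathcal{G}\sub{\bullet}$ is idempotent and, since the symbols are real, symmetric and even in $\Vk$ (so the image of a real field is real), self-adjoint with respect to $\scal{\cdot}{\cdot}_{\Ltp}$, with $\mathcal{G}\sub{\bullet}\mathcal{G}\sub{\circ}=0$ for $\bullet\neq\circ$. This already yields that the three operators are mutually orthogonal projections whose ranges are mutually orthogonal.

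It then remains to identify the ranges with $\cU$, $\cE$, $\cJ$. Since $\mhG\sub{\cU}$ retains only the $\Vk=\Vo$ mode with multiplier $\mI$, one has $\mathcal{G}\sub{\cU}[\Vv]=\FT{\Vv}(\Vo)=\mean{\Vv}$, whose range is exactly the constant fields $\cU$. For the other two I would compare the ranges of the projectors $\frac{\Vk\otimes\Vk}{\Vk\cdot\Vk}$ and $\mI-\frac{\Vk\otimes\Vk}{\Vk\cdot\Vk}$ with the Fourier-sense definitions of $\curl$ and $\div$ recalled above: a field lies in the range of $\mathcal{G}\sub{\cE}$ iff $\FT{\Vv}(\Vk)$ is parallel to $\Vk$ for all $\Vk\neq\Vo$ and $\FT{\Vv}(\Vo)=\Vo$, which is precisely $k_\bet\FT{v}_\alp(\Vk)=k_\alp\FT{v}_\bet(\Vk)$ (i.e. $\curl\Vv=\Vo$) together with $\mean{\Vv}=\Vo$; similarly a field lies in the range of $\mathcal{G}\sub{\cJ}$ iff $\Vk\cdot\FT{\Vv}(\Vk)=0$ (i.e. $\div\Vv=0$) and $\mean{\Vv}=\Vo$. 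Matching these with \eqref{eq:subspaces_UEJ} identifies the ranges as $\cE$ and $\cJ$.

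The only step beyond symbol bookkeeping is the linear-algebra equivalence underpinning the range identification: that the skew combination $k_\bet\FT{v}_\alp(\Vk)-k_\alp\FT{v}_\bet(\Vk)$ vanishes for all $\alp,\bet$ exactly when $\FT{\Vv}(\Vk)$ is a scalar multiple of $\Vk$, equivalently when $\FT{\Vv}(\Vk)$ is fixed by $\frac{\Vk\otimes\Vk}{\Vk\cdot\Vk}$. I expect this to be the main point to make carefully; it is verified directly by noting that the linear map $\Vw\mapsto w_\alp\Vk-k_\alp\Vw$ has kernel the line $\xR\Vk$. Everything else reduces to the routine multiplier algebra set up in the first paragraph.
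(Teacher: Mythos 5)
The paper states this lemma without proof, deferring implicitly to the cited monographs on the Helmholtz decomposition, so there is no in-paper argument to compare against; your proof therefore has to stand on its own, and it does. Reducing idempotency, self-adjointness, mutual annihilation, and the partition of unity $\sum_\bullet\mhG\sub{\bullet}(\Vk)=\mI$ to pointwise identities for the symbol matrices is exactly the right mechanism, the symbol algebra is verified correctly in both cases $\Vk=\Vo$ and $\Vk\in\ZdmO$, and you correctly isolate the one genuinely non-routine step: that the vanishing of all components $k_\beta\FT{v}_\alpha(\Vk)-k_\alpha\FT{v}_\beta(\Vk)$ forces $\FT{\Vv}(\Vk)$ to lie on the line spanned by $\Vk$, which is what matches the range of $\mathcal{G}\sub{\cE}$ with the Fourier-sense definition of $\cE$ in \eqref{eq:subspaces_UEJ}. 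Two minor points you should make explicit in a written-out version: the Fourier coefficients are complex, so the kernel of $\Vw\mapsto\bigl(k_\beta w_\alpha-k_\alpha w_\beta\bigr)_{\alpha\beta}$ is the complex line $\xC\Vk$ rather than $\xR\Vk$ (harmless, since the real-field constraint is the Hermitian symmetry $\FT{\Vv}(-\Vk)=\conj{\FT{\Vv}(\Vk)}$, which your evenness observation handles); and the multiplier calculus tacitly uses the uniform bound $\norm{\mhG\sub{\bullet}(\Vk)}{\xCdd}\leq 1$, which is what makes each $\mathcal{G}\sub{\bullet}$ a well-defined bounded operator on $\Lper{2}{\xRd}$ before one can speak of its square or adjoint. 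Neither affects correctness.
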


\subsection{Continuous formulation}
\label{sec:continuous_formulations}
Here and in the next sections, the matrix field $\TA:\puc\rightarrow\xRddspd$ is reserved for material coefficients which  are required to be essentially
bounded, symmetric, and uniformly elliptic:
\begin{align}
\label{eq:A}
\TA \in \Lper{\infty}{\xRddspd},
&&
\cA \norm{\Vv}{\xRd}^2
\leq 
\TA(\Vx)\Vv\cdot\Vv
\leq \CA \norm{\Vv}{\xRd}^2,
\end{align}
almost everywhere in $\puc$ and for all $\Vv \in \xRd$ with  $0 < \cA \leq \CA < +\infty$. By $\rA = \CA / \cA$, the author denotes the contrast in coefficients $\TA$.

Employing material coefficients, bilinear forms $\bilf{}{},\bilfi{}{}:\Lper{2}{\xRd}\times\Lper{2}{\xRd}\rightarrow\xR$ 
are defined as
\begin{align}\label{eq:bilinear_forms}
\bilf{\Vu}{\Vv} &:= \scal{\TA\Vu}{\Vv}_{\Lper{2}{\xRd}}
&
\bilfi{\Vu}{\Vv} &:= \scal{\TA^{-1}\Vu}{\Vv}_{\Lper{2}{\xRd}}.
\end{align}

Next, the author defines homogenization problem \eqref{eq:intro_HP} in both primal and dual formulations.
\begin{definition}[Homogenized matrices]
\label{def:homog_problem}
Let material coefficients satisfy \eqref{eq:A}. Then the primal and the dual homogenized matrices $\Aeff, \Beff \in \xR^{d\times d}$ satisfy
\begin{subequations}
\label{eq:homog_problem}
\begin{align}
\label{eq:homog_problem_primal}
\scal{\Aeff \VE}{\VE}_{\xRd} &= \min_{\Ve\in\cE} \bilf{\VE+\Ve}{\VE+\Ve}=\bilf{\VE+\Ve\mac{\VE}}{\VE+\Ve\mac{\VE}},
\\
\scal{\Beff\VJ}{\VJ}_{\xRd} &= \min_{\Vj\in\cJ} \bilfi{\VJ+\Vj}{\VJ+\Vj}=\bilfi{\VJ+\Vj\mac{\VJ}}{\VJ+\Vj\mac{\VJ}}
\end{align}
\end{subequations}
for arbitrary quantities $\VE,\VJ\in\xRd$ and minimizers $\Ve\mac{\VE}$ and $\Vj\mac{\VJ}$.
\end{definition}
\begin{remark}\label{eq:eqiv2pot}
Notice that the primal formulation \eqref{eq:homog_problem_primal} coincides with the problem \eqref{eq:intro_HP} in the introduction, because the subspace $\cE$ from \eqref{eq:Helmholtz_L2} admits an equivalent characterization
$\cE = \{ \nabla f:f\in H^1_{\per}(\puc,0) \}$.
\end{remark}
\begin{remark}[Observations]Thanks to the boundedness and ellipticity of material coefficients \eqref{eq:A}, the homogenization problems in Definition~\ref{def:homog_problem} are
well-posed with unique minimizers. Homogenized matrices are symmetric, positive definite, and thus invertible in accordance with the periodic homogenization theory, e.g.~\cite{Bensoussan1978per_structures,Jikov1994HDOIF,Cioranescu1999Intro2Homog}. Moreover, they are mutually inverse
\begin{align}\label{eq:Aeff_duality}
\Aeff = \Beff^{-1}
\end{align}
in compliance with standard duality arguments \cite{ekeland1976convex}.
\end{remark}

\section{Fourier-Galerkin discretization}
\label{sec:homogenization_and_galerkin_methods}

This section describes the discretization of the homogenization problem \eqref{eq:homog_problem} utilizing the Galerkin method in section~\ref{sec:Galerkin_approx} with an approximation space consisting of trigonometric polynomials  in section~\ref{sec:trig_pols}. Then, the homogenized matrices defined with discrete problems are studied in section~\ref{sec:bounds} to provide a structure of guaranteed bounds on homogenized properties.

\subsection{Approximation space of trigonometric polynomials}
\label{sec:trig_pols}
\begin{definition}[Trigonometric polynomials]\label{def:trig_pol}
The number
\begin{align}
\label{eq:N_odd}
\VN\in\xNd
\quad\text{such that }
N_\alp
\text{ is odd for all }
\alp
\end{align}
is an order of the approximation space of $\xRd$-valued trigonometric polynomials defined as
\begin{align}
\label{eq:trig_space}
\cTNd &= 
\Bigl\{\sum_{\Vk\in\ZNd}{\MB{\FT{v}}^{\Vk}\varphi_{ \Vk }} : \MB{\FT{v}}^{\Vk} = \overline{\MB{\FT{v}}^{-\Vk}} \in\xCd \Bigr\}
\quad\text{for }
\varphi_{\Vk}(\Vx) = \exp\left(2\pi\imu \Vk\cdot \Vx\right) \text{ with } \Vx\in\puc,
\end{align}
where the frequencies $\Vk$ are confined within the index set 
\begin{align}
\label{eq:index_set}
\ZNd  &= 
  \Bigl\{ \Vk \in \set{Z}^d : 
    -\frac{N_\alpha}{2} \leq k_\alpha < \frac{N_\alpha}{2} \Bigr\}.
\end{align}
\end{definition}
The approximation space \eqref{eq:trig_space} consists of real-valued functions only due to the prescribed Hermitian symmetry of the Fourier coefficients, $\MB{\FT{v}}^{\Vk} = \overline{\MB{\FT{v}}^{-\Vk}}$. Because of this and the necessity of having a conforming approximation space in order to get guaranteed bounds on homogenized properties, 
the highest~(Nyquist) frequencies $k_\alp = -N_\alp/2$ are fully omitted in the definition resulting from the restriction to
the odd order $\VN$ according to \eqref{eq:N_odd}.

Next, I present the crucial property of trigonometric polynomials from space $\cTNd$: they can be uniquely expressed on grid points
\begin{align}\label{eq:grid_points}
\Vx_\VM^\Vm
=
\sum_{\alp}\frac{m_{\alp}}{M_{\alp}}\cb{\alp}\quad
\text{for }
\VM\in\xNd
\text{ such that }M_\alp \geq N_\alp
\text{ and } 
\Vm \in \ZMd,
\end{align}
depicted in Figure~\ref{fig:grid}. This can be shown with Discrete Fourier Transform (DFT) coefficients presented here along with their orthogonality property
\begin{align}
\label{eq:DFT_coef}
\omega_{\VM}^{\Vk\Vn} &=
\exp   \biggl(2 \pi \imu\sum_{\alp} \frac{k_\alp n_\alp}{M_\alp}  \biggr), 
&
\sum_{\Vm\in\ZMd}\omega_{\VM}^{-\Vk\Vm}\omega_{\VM}^{\Vm\Vn} &= \meas{\VM}\del_{\Vk\Vn}\quad\text{for }\Vk,\Vn\in\ZMd.
\end{align}
Therefore, a Fourier series of function $\Vu_\VN\in\cTNd$, noting that $\FT{\Vu}_{\VN}(\Vk)=0$ for $\Vk\in\ZMd\setminus\ZNd$ and $\VM\in\xNd$ such that $M_\alp\geq N_\alp$, can be recast by substituting of the DFT coefficients \eqref{eq:DFT_coef} into
\begin{align}
\label{eq:trig_conn_raw}
\Vu_{\VN}(\Vx) & =  
\sum_{\Vk\in\ZMd}\FT{\Vu}_{\VN}(\Vk)\varphi_{\Vk}(\Vx)
=
\sum_{\Vm\in\ZMd}
\underbrace{\sum_{\Vn\in\ZMd}\omega_\VM^{\Vm\Vn}\FT{\Vu}_{\VN}(\Vn)}_{\Vu_{\VN}(\Vx_\VM^{\Vm})}
\underbrace{\sum_{\Vk\in\ZMd} 
\frac{1}{\meas{\VM}}
\omega_\VM^{-\Vk\Vm}
\varphi_{\Vk}(\Vx)}_{\varphi_{\VM,\Vm}(\Vx)},
\end{align}
which results in a fundamental trigonometric polynomial, see Figure~\ref{fig:trig_basis_shape}, satisfying the Dirac delta property on grid points
\begin{align}
\label{eq:trig_basis_shape}
\varphi_{\VM,\Vm}( \Vx )
&=
\frac{1}{\pVM}
\sum_{\Vk \in \ZMd}
\omega_{\VM}^{-\Vk\Vm} \varphi_{\Vk}(\Vx),
&
\varphi_{\VM,\Vm}(\Vx_\VM^{\Vn}) &= \del_{\Vm\Vn}.
\end{align}
This implies that the inverse DFT of Fourier coefficients $\FT{\Vu}_\VN(\Vn)$ equals the function values on grid points, i.e. $\sum_{\Vn\in\ZMd}\omega_\VM^{\Vm\Vn}\FT{\Vu}_{\VN}(\Vn)= \Vu_{\VN}(\Vx_\VM^{\Vm})$. Thus, the trigonometric polynomial can be uniquely defined with the Fourier coefficients or with the values at grid points.

In the following text, the letter $\VN$ will be systematically used for odd approximation order \eqref{eq:N_odd} of trigonometric polynomials, while $\VM$ will represent the number of grid points \eqref{eq:grid_points} without a restriction to be odd.

\begin{figure}[htp]
\centering
\subfigure[\scriptsize Grid points]{
\includegraphics[scale=0.55]{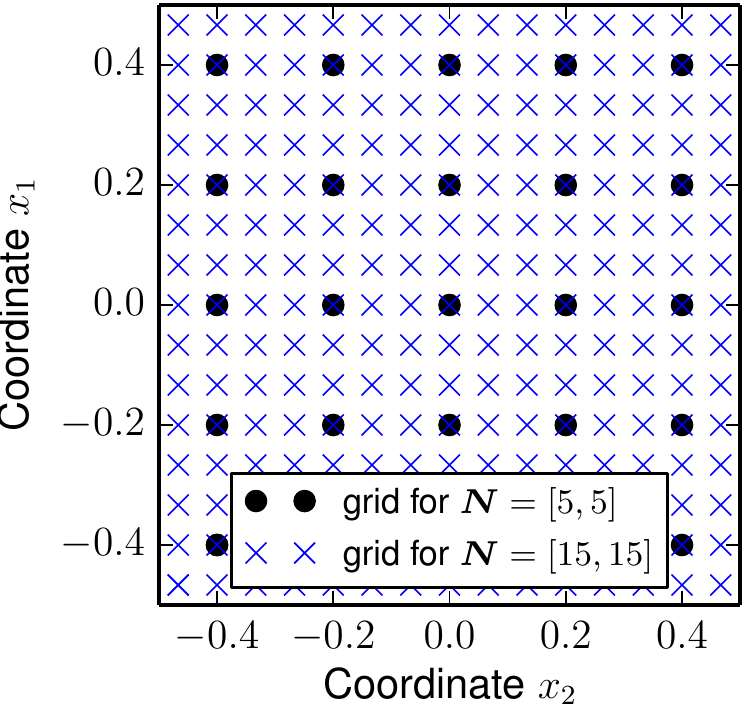}
\label{fig:grid}
}
\subfigure[\scriptsize Trigonometric polynomials]{
\includegraphics[scale=0.55]{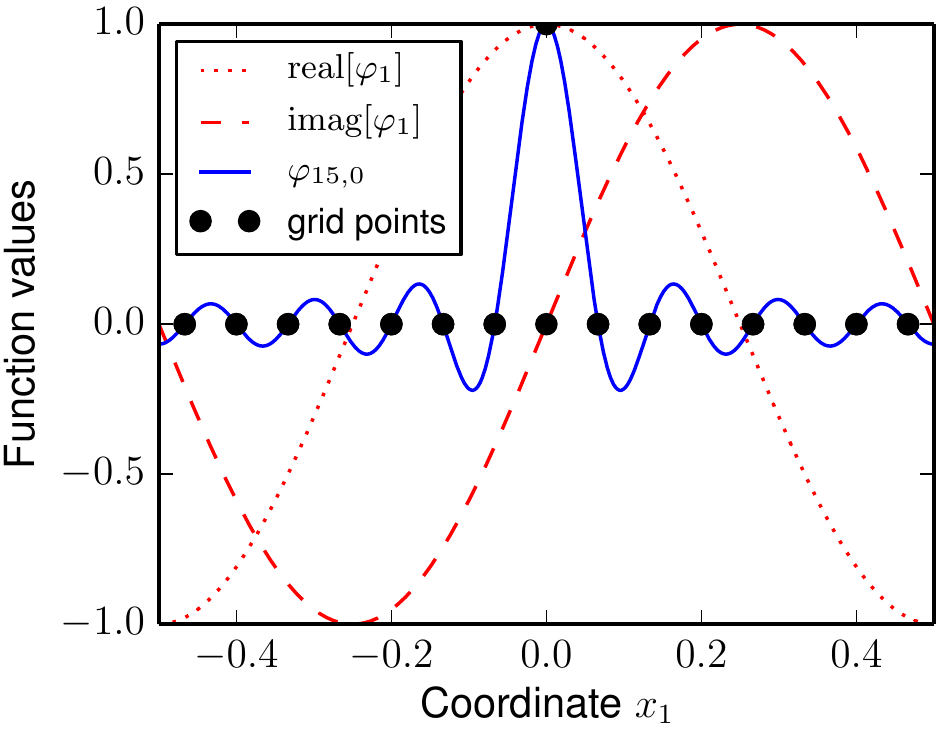}
\label{fig:trig_basis_shape}
}
\caption{Examples of (a) two-dimensional grid points \eqref{eq:grid_points} and (b) one-dimensional complex-valued Fourier $\varphi_1$
and real-valued fundamental $\varphi_{15,0}$ trigonometric polynomials \eqref{eq:trig_basis_shape}
}
\end{figure}

\subsection{Galerkin approximations}
\label{sec:Galerkin_approx}
Generally, the difficulties in a conforming discretization of the continuous homogenization problem \eqref{eq:homog_problem} arise when approximating Helmholtz decomposition subspaces \eqref{eq:Helmholtz_L2}, especially in dual formulations for the space of divergence free fields $\cJ$; the approximation of curl-free space $\cE$ in the primal formulation can be resolved with a conforming approximation of the space of potentials $H^1_{\per,0}(\puc)$.

Luckily, in the case of trigonometric polynomials space \eqref{eq:trig_space}, a Helmholtz decomposition
\begin{align}
\label{eq:Helmholtz_trig}
 \cTNd &= \cU \oplus \cEN \oplus \cJN,
 &
 \cEN&=\cE\bigcap\cTNd,
 &
 \cJN&=\cJ\bigcap\cTNd
\end{align}
can be easily performed with the orthogonal projections $\mathcal{G}\sub{\cU}$, $\mathcal{G}\sub{\cE}$, $\mathcal{G}\sub{\cJ}$ introduced in Definition~\ref{def:projection} for the continuous problem;
the procedure is based on Lemma~\ref{lem:Helmholtz_decomp} and the property $\mathcal{G}\sub{\bullet}[\cTNd] \subset\cTNd$ observed directly from Definition~\ref{def:projection}.
Moreover, these projections enable not only a proper discretization but also an effective numerical treatment of discrete problems, cf.~sections~\ref{sec:discrete_sp} and \ref{sec:linear_system}.

\JVe{Then, the discrete homogenized problems are represented with two schemes, the Galerkin approximation (Ga) and by its version with numerical integration (GaNi) corresponding to the original Moulinec and Suquet scheme \cite{Moulinec1994FFT}. These Fourier-Galerkin schemes occurring in
\cite{Dykaar1992,Nemat-Nasser1993,Luciano1998,Vainikko2000FSLS,Naess2002,Bonnet2007,Cai2008} as spectral methods  were studied in the variational setting in \cite[sections~4.2 and 4.3]{VoZeMa2014FFTH} for an odd approximation order \eqref{eq:N_odd} and in \cite{VoZeMa2014GarBounds} for a general one.}

\begin{definition}[Galerkin approximation --- Ga]
\label{def:Ga}
The primal and the dual homogenization matrices $\AeffN,\BeffN\in \xRdd$ of Ga satisfy
\begin{subequations}
\label{eq:Ga}
\begin{align}
\label{eq:GA_prim}
\AeffN \VE\cdot\VE &= \inf_{\Ve_\VN\in\cEN} \bilf{\VE+\Ve_\VN}{\VE+\Ve_\VN} = \bilf{\VE+\Ve_\VN\mac{\VE}}{\VE+\Ve_\VN\mac{\VE}}
\\
\BeffN\VJ\cdot\VJ 
&= \inf_{\Vj_\VN\in\cJN} \bilfi{\VJ+\Vj_\VN}{\VJ+\Vj_\VN} = \bilfi{\VJ+\Vj_\VN\mac{\VJ}}{\VJ+\Vj_\VN\mac{\VJ}}
\end{align}
\end{subequations}
for arbitrary quantities $\VE,\VJ\in\xRd$.
\end{definition}
\begin{remark}Similarly to Remark~\ref{eq:eqiv2pot} for a continuous setting,
the primal formulation \eqref{eq:GA_prim} coincides with the problem \eqref{eq:intro_Ga}, because the subspace $\cEN$ from \eqref{eq:Helmholtz_trig} admits an equivalent characterization
$\cEN = \{ \nabla f:f\in \cTN \}$.
\end{remark}
Evaluation of the integrals in \eqref{eq:homog_problem}, described in section~\ref{sec:eval_integrals}, is generally unfeasible in closed form. This weakness can be compensated for with the trapezoidal (rectangular) integration rule leading to the approximated bilinear forms $\bilfN{}{}, \bilfNi{}{} :\cTNd\times\cTNd\rightarrow\xR$ expressed as
\begin{subequations}
\label{eq:approx_bilin}
\begin{align}
\bilf{\Vu_\VN}{\Vv_\VN}\approx\bilfN{\Vu_\VN}{\Vv_\VN} &:= \sum_{\Vk\in\ZNd}\TA(\Vx_\VN^\Vk)\Vu_\VN(\Vx_\VN^\Vk)\Vv_\VN(\Vx_\VN^\Vk),
\\
\bilfi{\Vu_\VN}{\Vv_\VN}\approx\bilfNi{\Vu_\VN}{\Vv_\VN} &:= \sum_{\Vk\in\ZNd}\TA^{-1}(\Vx_\VN^\Vk)\Vu_\VN(\Vx_\VN^\Vk)\Vv_\VN(\Vx_\VN^\Vk).
\end{align}
\end{subequations}
Noting that the objects related to this numerical integration are consistently denoted with the tilde symbol, e.g. $\bilfN{}{}$, $\gVe_\VN\mac{\VE}$, $\tAeffN$, or $\tAeffN\bound$.
\begin{definition}[Galerkin approximation with numerical integration --- GaNi]
\label{def:GaNi}
Let material coefficients \eqref{eq:A} be additionally Riemann integrable or continuous $\TA\in\Cper{0}{\xRddspd}$. Then, the primal and the dual homogenized coefficients $\tAeffN, \tBeffN\in\xR^{d\times d}$ satisfy
\begin{subequations}
\label{eq:GaNi}
\begin{align}
\label{eq:GaNi_primal}
\tAeffN\VE\cdot\VE &= \inf_{\Ve_{\VN}\in \cEN } \bilfN{\VE+\Ve_{\VN}}{\VE+\Ve_{\VN}} = \bilfN{\VE+\gVe_{\VN}\mac{\VE}}{\VE+\gVe_{\VN}\mac{\VE}},
\\
\label{eq:GaNi_dual}
\tBeffN\VJ\cdot\VJ &= \inf_{\Vj_{\VN}\in \cJN} \bilfNi{\VJ+\Vj_{\VN}}{\VJ+\Vj_{\VN}} = \bilfNi{\VJ+\gVj_{\VN}\mac{\VJ}}{\VJ+\gVj_{\VN}\mac{\VJ}},
\end{align}
\end{subequations}
for arbitrary quantities $\VE,\VJ\in\xRd$.
\end{definition}
\begin{remark}[Duality in GaNi]
\label{rem:duality_gani}
Surprisingly, according to \cite[Proposition~34]{VoZeMa2014GarBounds}, the duality of homogenized matrices from the continuous formulation \eqref{eq:Aeff_duality} is inherited in the GaNi for the odd grids \eqref{eq:N_odd}, particularly
\begin{align*}
\tBeffN^{-1} = \tAeffN. 
\end{align*}
\end{remark}
In \cite[section~6]{VoZeMa2014GarBounds}, the minimizers of  
the GaNi scheme $\gVe_\VN\mac{\VE},\gVj_\VN\mac{\VJ}$
were used to a posteriori define the approximate homogenized coefficients $\tAeffN\bound, \tBeffN\bound\in\xRdd$ satisfying for all $\VE,\VJ\in\xRd$
\begin{subequations}
\label{eq:GaNi_apost}
\begin{align}
\tAeffN\bound \VE\cdot\VE &=
\bilf{\VE+\gVe_\VN\mac{\VE}}{\VE+\gVe_\VN\mac{\VE}},
\\
\tBeffN\bound\VJ\cdot\VJ &=
\bilfi{\VJ+\gVj_\VN\mac{\VJ}}{\VJ+\gVj_\VN\mac{\VJ}}.
\end{align}
\end{subequations}
\subsection{Structure of guaranteed bounds on homogenized properties}
\label{sec:bounds}
In \cite[section~6]{VoZeMa2014GarBounds}, we have shown that the matrices in \eqref{eq:GaNi_apost} are guaranteed bounds on homogenized coefficients, i.e. $$\bigl(\tBeffN\bound\bigr)^{-1} \preceq \Beff^{-1} = \Aeff \preceq  \tAeffN\bound.$$
where the L\"{o}wner partial order $\preceq$ is used on a space of symmetric positive definite matrices $\xRddspd$ according to~\cite[section~7.7]{horn2012matrix} as
\begin{align*}
\text{for }\T{L},\T{M}\in\xRddspd:\quad \T{L}\preceq\T{M}\quad\text{if}\quad \T{L} \Vu\cdot\Vu \leq \T{M} \Vu \cdot \Vu \quad\text{for all }\Vu\in\xRd.
\end{align*}
Here, this structure is extended for the homogenized coefficients defined with the Ga scheme \eqref{eq:Ga}.
\begin{proposition}[Structure of guaranteed bounds]
\label{lem:struct_bounds}
The homogenized coefficients occurring in equations \eqref{eq:homog_problem}, \eqref{eq:Ga}, and \eqref{eq:GaNi_apost} have the following structure
\begin{align}\label{eq:structure_homog_matrices}
\bigl(\tBeffN\bound\bigr)^{-1} \preceq \BeffN^{-1} \preceq \Beff^{-1} = \Aeff \preceq \AeffN \preceq \tAeffN\bound.
\end{align}
\end{proposition}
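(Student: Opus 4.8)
The plan is to reduce every matrix relation in \eqref{eq:structure_homog_matrices} to a scalar comparison of quadratic forms, since by definition $\T{L}\preceq\T{M}$ means $\T{L}\Vu\cdot\Vu\le\T{M}\Vu\cdot\Vu$ for all $\Vu\in\xRd$. All six matrices are symmetric positive definite: for the continuous ones this is the content of the Observations remark, and for $\AeffN,\BeffN,\tAeffN\bound,\tBeffN\bound$ it follows because the associated forms are symmetric and, using the orthogonality $\cU\perp\cE$ (resp.\ $\cU\perp\cJ$) from the Helmholtz decomposition \eqref{eq:Helmholtz_L2} together with the ellipticity \eqref{eq:A}, one gets $\AeffN\VE\cdot\VE\ge\cA\norm{\VE}{\xRd}^2$ and analogously for the others. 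Hence each matrix is invertible, and on $\xRddspd$ inversion reverses the order, $\T{L}\preceq\T{M}$ being equivalent to $\T{M}^{-1}\preceq\T{L}^{-1}$. I therefore intend to prove the two ``unreduced'' chains $\Aeff\preceq\AeffN\preceq\tAeffN\bound$ and $\Beff\preceq\BeffN\preceq\tBeffN\bound$, and then to recover \eqref{eq:structure_homog_matrices} by inverting the dual chain and inserting the continuous duality $\Beff^{-1}=\Aeff$ from \eqref{eq:Aeff_duality}.

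For the primal chain, the first inequality $\Aeff\preceq\AeffN$ is pure Galerkin monotonicity: by \eqref{eq:Helmholtz_trig} the discrete space satisfies $\cEN=\cE\cap\cTNd\subseteq\cE$, so the admissible set $\VE+\cEN$ in \eqref{eq:GA_prim} is contained in the admissible set $\VE+\cE$ of \eqref{eq:homog_problem_primal}; minimizing the same functional $\bilf{\VE+\cdot}{\VE+\cdot}$ over a smaller set can only increase the infimum, giving $\Aeff\VE\cdot\VE\le\AeffN\VE\cdot\VE$ for every $\VE$. The second inequality $\AeffN\preceq\tAeffN\bound$ compares two evaluations of the \emph{same} exact bilinear form: by Definition~\ref{def:Ga}, $\AeffN\VE\cdot\VE$ is the minimum of $\bilf{\VE+\Ve_\VN}{\VE+\Ve_\VN}$ over all $\Ve_\VN\in\cEN$, whereas the GaNi minimizer $\gVe_\VN\mac{\VE}$ entering \eqref{eq:GaNi_apost} is merely one feasible element of $\cEN$, optimal for the quadrature form $\bilfN{}{}$ but not for $\bilf{}{}$. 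Thus $\AeffN\VE\cdot\VE\le\bilf{\VE+\gVe_\VN\mac{\VE}}{\VE+\gVe_\VN\mac{\VE}}=\tAeffN\bound\VE\cdot\VE$.

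The dual chain $\Beff\preceq\BeffN\preceq\tBeffN\bound$ follows by the identical argument with $(\bilfi{}{},\cJ,\cJN,\Vj_\VN\mac{\VJ},\gVj_\VN\mac{\VJ})$ in place of $(\bilf{}{},\cE,\cEN,\Ve_\VN\mac{\VE},\gVe_\VN\mac{\VE})$, using $\cJN=\cJ\cap\cTNd\subseteq\cJ$ from \eqref{eq:Helmholtz_trig}. Inverting this chain reverses the order to $(\tBeffN\bound)^{-1}\preceq\BeffN^{-1}\preceq\Beff^{-1}$, and substituting $\Beff^{-1}=\Aeff$ and concatenating with the primal chain yields \eqref{eq:structure_homog_matrices}. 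I expect the only genuinely non-routine point to be the positive definiteness, hence invertibility, of the four discrete matrices, which is what licenses the inversion step and the very appearance of $\BeffN^{-1}$ and $(\tBeffN\bound)^{-1}$; everything else is nested-set monotonicity of infima and feasibility of a prescribed minimizer. A secondary item to verify is that $\VE\mapsto\AeffN\VE\cdot\VE$ and its analogues are genuine quadratic forms with \emph{symmetric} representing matrices, so that the scalar inequalities for all $\VE$ indeed encode the relations $\preceq$; this holds because the minimizer depends linearly on $\VE$ and the underlying bilinear form is symmetric.
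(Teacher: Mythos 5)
Your proposal is correct and follows essentially the same route as the paper's proof: Galerkin monotonicity from $\cEN\subseteq\cE$ gives $\Aeff\preceq\AeffN$, feasibility of the GaNi minimizer in the Ga functional gives $\AeffN\preceq\tAeffN\bound$, the dual chain is obtained identically, and the final structure follows from the order-reversing property of inversion on $\xRddspd$ together with $\Beff^{-1}=\Aeff$. The only difference is that you spell out the symmetry and positive definiteness of the discrete matrices needed to license the inversion step, which the paper leaves implicit in its earlier well-posedness remarks.
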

\begin{proof}
First of all, the proof of inequality $\Aeff\preceq\AeffN$ is based on the conformity of approximate space $\cEN\subset\cE$, recall~\eqref{eq:Helmholtz_trig}. Indeed, as
the minimization space of the homogenization problem \eqref{eq:homog_problem_primal} is reduced, the minimum has to remain or increase, i.e.
\begin{align*}
\Aeff\VE\cdot\VE &= \inf_{\Ve\in\cE} \bilf{\VE+\Ve}{\VE+\Ve}
\leq \inf_{\Ve_\VN\in\cEN} \bilf{\VE+\Ve_\VN}{\VE+\Ve_\VN} = \AeffN\VE\cdot\VE.
\end{align*}
Now, the following inequality $\AeffN\preceq\tAeffN$ is proven by substituting the minimizer $\Ve_\VN\mac{\VE}$ in 
the GA scheme \eqref{eq:GA_prim} with the approximate minimizers $\gVe_\VN\mac{\VE}$ of the GaNi scheme \eqref{eq:GaNi_primal}, i.e.
\begin{align*}
\AeffN\VE\cdot\VE &= \inf_{\Ve_\VN\in\cEN} \bilf{\VE+\Ve_\VN}{\VE+\Ve_\VN}
\leq \bilf{\VE+\gVe_\VN\mac{\VE}}{\VE+\gVe_\VN\mac{\VE}} = \tAeffN\bound\VE\cdot\VE.
\end{align*}
Since the primal inequalities $\Aeff \preceq \AeffN \preceq \tAeffN$ are in hand, the dual formulation reveals the upper bounds on the dual matrix
$\Beff \preceq \BeffN \preceq \tBeffN\bound$ according to the same arguments.
The proof then arises from the inverse inequality, according to \cite[Corollary~7.7.4.(a)]{horn2012matrix}, i.e.
\begin{align}
\label{eq:spd_inverse_ineq}
\text{for }\T{L},\T{M}&\in\xRddspd
&
 \T{L} \preceq \T{M}
 \quad &\Longleftrightarrow\quad
  \T{M}^{-1} \preceq \T{L}^{-1},
\end{align}
 and from the duality of the continuous homogenization problem \eqref{eq:Aeff_duality}, i.e.
$\Beff^{-1} = \Aeff$,
which follows by the standard duality arguments in \cite{ekeland1976convex,Suquet1982dual} or  \cite[Proposition~7 and Corollary~9]{VoZeMa2014GarBounds}.
\end{proof}

\section{Efficient numerical integration}
\label{sec:eval_integrals}

The computation of the guaranteed bounds on the homogenized coefficients \eqref{eq:structure_homog_matrices} consists of the evaluation of bilinear forms \eqref{eq:bilinear_forms} occurring in the formulation of the Ga scheme \eqref{eq:Ga} or in the a posteriori estimate with the GaNi minimizers \eqref{eq:GaNi_apost}, i.e. integrals of the type
\begin{align}\label{eq:integral2eval}
\scal{\TA \Vu_{\VN} }{ \Vv_{\VN} }
_{\Lper{2}{\xRd}}\quad\text{ for }\TA\in\Lper{\infty}{\xRdd}\text{ and }\Vu_\VN,\Vv_\VN\in\cTNd.
\end{align}

This integral evaluation based on double grid quadrature has already been analyzed in \cite[section~6]{VoZeMa2014GarBounds}, which is summarized in section~\ref{sec:calculation_bounds} for a matrix-inclusion composite \eqref{eq:inclcomp}.
Then, in section~\ref{sec:grid_composite}, the methodology is generalized to an effective evaluation of integral \eqref{eq:integral2eval} for grid-based composites, which are defined via high-resolution images assuming piece-wise constant or piece-wise bilinear material coefficients.

In section~\ref{sec:reduce_grid}, I show that the double grid quadrature can be reduced to the original grid, resulting in a reduction of memory requirements.
In section~\ref{sec:approx_bounds}, I show that the guaranteed bounds on homogenized coefficients are not confined by a closed-form evaluation of \eqref{eq:integral2eval} requiring the knowledge of Fourier coefficient $\FT{\TA}(\Vk)$;
however, the approximation of $\TA$ can be used, leading to upper-upper and lower-lower bounds.

\subsection{Notation}
\label{sec:notation_discr}
A multi-index notation is systematically employed in which $\xX^{\VM}$
represents $\xX^{M_1 \times \cdots \times M_\dime}$ for $\VM\in\set{N}^d$. Then the sets $\xXM$ and
$\xMM$ or their complex counterparts represent the space of vectors
and matrices denoted by bold serif font, e.g. $\MBu =
\bigl(u_{\alp}^\Vk\bigr)_{\alp}^{\Vk\in\ZMd} \in \xXM$ and $\MB{U} =
(U_{\alp\bet}^{\Vk\Vm})_{\alp,\bet}^{\Vk,\Vm\in\ZMd} \in \xMM$
with the index set $\ZMd$ defined in \eqref{eq:index_set}. 

Sub-vectors and
sub-matrices are designated by superscripts, e.g. $\MBu^\Vk =
(u_{\alp}^\Vk)_{\alp} \in \xRd$ or $\MB{U}^{\Vk\Vm} =
(U_{\alp\bet}^{\Vk\Vm})_{\alp,\bet} \in \xRdd$. The scalar products on $\xXM$ and $\xhXM$
are defined as
\begin{align*}
\scal{\MB{u}}{\MBv}_{\xXM} 
&= 
\frac{1}{\meas{\VM}}
\sum_{\Vk\in\ZMd} 
\MBu^{\Vk}\cdot\MBv^{\Vk},
&
\scal{\MB{u}}{\MBv}_{\xhXM} 
&= 
\sum_{\Vk\in\ZMd} 
\MBu^{\Vk}\cdot\conj{\MBv^{\Vk}},
\end{align*}
where $|\VM| = \prod_\alp M_\alp$ stands for the number of discretization points.
Moreover, the matrix-vector or matrix-matrix multiplications
follow from
\begin{align*}
(\MB{U}\MBv)^{\Vk} 
= 
\sum_{\Vm\in\ZNd}
\MB{U}^{\Vk\Vm}
\MBv^{\Vm}
\in \xRd
\quad
\text{or}
\quad
(\MB{U} \MB{V})^{\Vk\Vm} 
=
\sum_{\Vn\in\ZNd}
\MB{U}^{\Vk\Vn}\MB{V}^{\Vn\Vm}
\in \xRdd,
\end{align*}
for $\Vk,\Vm \in \ZNd$ and $\MB{V} \in \xMM$.

The discretization operator that stores the values of a function at grid points \eqref{eq:grid_points} is defined by
\begin{align}
\label{eq:IN}
\IM{\VM}&:\Cper{0}{\xRd}\rightarrow\xXM,
&
\IM{\VM}[\Vu] &= \bigl( u_{\alp}(\Vx_\VM^{\Vm}) \bigr)_{\alp=1,\dotsc,d}^{\Vm\in\ZMd}.
\end{align}
When the polynomial of order $\VN$ is expressed on a grid with the higher number of points $\VM$,
\begin{align*}
\MBu_\VN = \IM{\VM}[\Vu_\VN] \in \xR^{d\times\VM}
\quad\text{for }
\Vu_\VN\in\cTNd
\text{ and }
\VM,\VN\in\xRd
\text{ such that }
M_\alp>N_\alp,
\end{align*}
the discrete representation of the polynomial $\MBu_\VN$ is kept with subscript $\VN$ to emphasize the polynomial order rather than the vector size.
The actual dimension of $\MBu_\VN$ is understood implicitly from the context, so that terms such as $\scal{\MBA_\VM\MBu_\VN}{\MBu_\VN}_{\xR^{d\times\VM}}$ with $\MBA_\VM\in\bigl[\xR^{d\times\VM}\bigr]^2$ remain well-defined.

Using the discretization operator \eqref{eq:IN}, the relation \eqref{eq:trig_conn_raw} between trigonometric polynomials expressed in Fourier and space domain can be represented in compact form as
\begin{align}\label{eq:trig_conn}
\Vu_{\VN} &=  \sum_{\Vk\in\ZNd}\MBhu_\VN^{\Vk}\varphi_{\Vk} = \sum_{\Vm\in\ZMd}{\MBu_{\VN}^{\Vm}\varphi_{\VM,\Vm}}
\quad\text{with }
\MBu_\VN = \IM{\VM}[\Vu_\VN]
\text{ and }
\MBhu_\VN = \MBF_{\VN} \MBu_\VN\in\xCdN,
\end{align}
where matrices of vector-valued DFT and inverse DFT are defined as
\begin{align}\label{eq:DFT}
\MBF_{\VN} &= \frac{1}{\pVN} \bigl( \del_{\alp\beta} \omega_{\VN}^{-\Vm\Vk} \bigr)_{\alp,\beta}^{\Vm,\Vk\in\ZNd} \in\xhMN,
&
\MBFi_{\VN} &= \bigl( \del_{\alp\beta} \omega_{\VN}^{\Vm\Vk} \bigr)_{\alp,\beta}^{\Vm,\Vk\in\ZNd} \in\xhMN.
\end{align}
\begin{lemma}\label{lem:IN_iso}
For odd $\VN$ according to \eqref{eq:N_odd} and $\VM\in\xNd$ such that $M_\alp\geq N_\alp$,
discretization operator $\IM{\VM}$ is an isometric isomorphism (one-to-one map that preserves distances) between trigonometric polynomials $\cTNd$ defined in \eqref{eq:trig_space} and vector space $\IM{\VM}[\cTNd]\subseteq\xRdM$, with equality $\IM{\VN}[\cTNd]=\xRdN$ for $\VN=\VM$. 
\end{lemma}

\subsection{Methodology}
\label{sec:calculation_bounds}
Here, the basic concepts regarding numerical integration of \eqref{eq:integral2eval} are summarized according to \cite{VoZeMa2014GarBounds}.
The evaluation of bilinear forms in the conventional FEM leads to sparse matrices, which also is the case for the GaNi \eqref{eq:GaNi}.
However, more complicated scenarios arise with the Ga~\eqref{eq:Ga}. A direct integration  on the original grid leads to a fully populated matrix, see Lemma~\ref{lem:FD_matrix_full}, while a double-grid quadrature produces a sparse matrix with the same block diagonal structure as the GaNi; see Lemma~\ref{lem:integ_eval_VM} and compare with Remark~\ref{rem:FD_GaNi}.
\begin{remark}[Rectangular integration rule]\label{rem:FD_GaNi}Using the notation about discrete spaces from section~\ref{sec:notation_discr}, the numerical integration in GaNi, recall \eqref{eq:approx_bilin}, leads to the following expression
\begin{align*}
\bilfN{\Vu_\VN}{\Vv_\VN} = \scal{\MBtA_\VN \MBu_\VN}{\MBv_\VN}_{\xRdN}
\quad
\text{for }
\Vu_\VN,\Vv_\VN\in\cTNd
\text{ and } 
\MBu_\VN = \IM{\VN}[\Vu_\VN], \MBv_\VN = \IM{\VN}[\Vv_\VN]
\end{align*}
with $\MBtA_\VN^{\Vk\Vl} = \del_{\Vk\Vl}\TA(\Vx_\VN^\Vk)$ for $\Vk,\Vl\in\ZNd$.
\end{remark}
\begin{lemma}[Fully populated expression]
\label{lem:FD_matrix_full}
According to \cite[Lemma~37]{VoZeMa2014GarBounds}, the integral \eqref{eq:integral2eval} equals
\begin{align*}
\scal{\TA\Vu_\VN}{\Vv_\VN}_{\Lper{2}{\xRd}} &= 
\scal{\MBhA^{\full}_\VN
\MBhu_\VN}{\MBhv_\VN}_{\xhXN} = \scal{\MBA^{\full}_\VN\MBu_\VN}{\MBv_\VN}_{\xXN}
\end{align*}
where vectors $\MBu_\VN,\MBv_\VN\in\xXN$ and $\MBhu_\VN,\MBhv_\VN\in\xhXN$ are defined via
\begin{align*}
\MBu_\VN &= \IM{\VN}[\Vu_\VN],
&
\MBv_\VN &= \IM{\VN}[\Vv_\VN],
&
\MBhu_\VN &= \MBF_\VN\MBu_\VN,
&
\MBhv_\VN &= \MBF_\VN\MBv_\VN,
\end{align*}
and matrices $\MBA^{\full}_\VN\in\xMN$ and $\MBhA^{\full}_\VN\in\xhMN$ as
\begin{align*}
\bigl( \MBhA^{\full}_\VN \bigr)^{\Vl\Vk}
&= \int_{\puc} \TA(\Vx) \varphi_\Vk(\Vx)  \varphi_{-\Vl}(\Vx) \D{\Vx}
\quad\text{for }\Vl,\Vk\in\ZNd,
&
\MBA^{\full}_\VN &= \MBF_\VN \MBhA^{\full}_\VN \MBFi_\VN.
\end{align*}
\end{lemma}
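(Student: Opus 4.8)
The plan is to verify the two stated equalities in turn: first the Fourier-domain identity $\scal{\TA\Vu_\VN}{\Vv_\VN}_{\Lper{2}{\xRd}} = \scal{\MBhA^{\full}_\VN\MBhu_\VN}{\MBhv_\VN}_{\xhXN}$, and then to transfer it to the space domain through the discrete Fourier transform. For the first I would start from the definition of the inner product, $\scal{\TA\Vu_\VN}{\Vv_\VN}_{\Lper{2}{\xRd}} = \int_{\puc}\TA(\Vx)\Vu_\VN(\Vx)\cdot\Vv_\VN(\Vx)\D{\Vx}$, insert the Fourier expansion $\Vu_\VN = \sum_{\Vk\in\ZNd}\MBhu_\VN^{\Vk}\varphi_{\Vk}$ from \eqref{eq:trig_conn}, and --- crucially --- use the real-valuedness of $\Vv_\VN$ (the Hermitian symmetry $\MBhv_\VN^{\Vl}=\conj{\MBhv_\VN^{-\Vl}}$ built into \eqref{eq:trig_space}) to rewrite $\Vv_\VN = \sum_{\Vl\in\ZNd}\conj{\MBhv_\VN^{\Vl}}\varphi_{-\Vl}$. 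This conjugation and sign flip are exactly what is needed to match the Hermitian product on $\xhXN$.

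Expanding the integrand by bilinearity factors the oscillatory part out of the matrix action, leaving for each pair $\Vk,\Vl$ the $d\times d$ coefficient $\int_{\puc}\TA(\Vx)\varphi_{\Vk}(\Vx)\varphi_{-\Vl}(\Vx)\D{\Vx} = \bigl(\MBhA^{\full}_\VN\bigr)^{\Vl\Vk}$. Since $\varphi_{\Vk}\varphi_{-\Vl}=\varphi_{\Vk-\Vl}$, this coefficient is precisely the exact Fourier coefficient $\FT{\TA}(\Vl-\Vk)$, so that the integration introduces no aliasing; this is the content of ``exact integration'' and the point of contrast with the quadrature underlying GaNi. Collecting the resulting double sum $\sum_{\Vk,\Vl}\bigl[\bigl(\MBhA^{\full}_\VN\bigr)^{\Vl\Vk}\MBhu_\VN^{\Vk}\bigr]\cdot\conj{\MBhv_\VN^{\Vl}}$ and reading it against the definition $\scal{\MBu}{\MBv}_{\xhXN}=\sum_{\Vk}\MBu^{\Vk}\cdot\conj{\MBv^{\Vk}}$ gives the first claimed equality.

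For the space-domain identity I would substitute $\MBhu_\VN=\MBF_\VN\MBu_\VN$ and $\MBhv_\VN=\MBF_\VN\MBv_\VN$ from \eqref{eq:trig_conn}--\eqref{eq:DFT} into the Fourier-domain expression and invoke the discrete Parseval relation $\scal{\MBF_\VN\MBu_\VN}{\MBF_\VN\MBv_\VN}_{\xhXN}=\scal{\MBu_\VN}{\MBv_\VN}_{\xXN}$, which follows from the orthogonality of the DFT coefficients \eqref{eq:DFT_coef} (equivalently, from the isometry asserted in Lemma~\ref{lem:IN_iso}) together with $\MBF_\VN\MBFi_\VN=\MBI$. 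Combined with the definition $\MBA^{\full}_\VN=\MBF_\VN\MBhA^{\full}_\VN\MBFi_\VN$, this moves the matrix through the transform and leaves $\scal{\MBA^{\full}_\VN\MBu_\VN}{\MBv_\VN}_{\xXN}$, completing the proof.

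I expect the main difficulty to be bookkeeping rather than conceptual: keeping the conjugations and the order of the matrix-times-vector action versus the dot product consistent throughout, and checking that the normalization carried by $\MBF_\VN$ and by the two different inner products on $\xXN$ and $\xhXN$ cancels exactly, so that conjugation by the DFT matrices realizes precisely the change between the Hermitian product and the real one. Because the integral is a fixed real scalar, any admissible placement of the DFT factors yields the same bilinear form on the real, Hermitian-symmetric data, so the expression is unambiguous once these symmetries are tracked with care.
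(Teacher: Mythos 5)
The paper gives no proof of this lemma --- it is imported verbatim from \cite[Lemma~37]{VoZeMa2014GarBounds} --- so your argument has to be judged on its own terms. The first half is correct and is the standard route: expanding $\Vu_\VN$ in Fourier series, using the realness of $\Vv_\VN$ to write $\Vv_\VN=\sum_{\Vl\in\ZNd}\conj{\MBhv_\VN^{\Vl}}\varphi_{-\Vl}$, and recognizing $\int_\puc\TA\varphi_{\Vk}\varphi_{-\Vl}\D{\Vx}=\FT{\TA}(\Vl-\Vk)$ yields $\scal{\MBhA^{\full}_\VN\MBhu_\VN}{\MBhv_\VN}_{\xhXN}$ exactly as you describe.

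The second half has a genuine gap, located precisely in the sentence you use to dismiss the bookkeeping: the claim that ``any admissible placement of the DFT factors yields the same bilinear form on the real, Hermitian-symmetric data'' is false. Carrying out your own Parseval step carefully, one writes $\MBhA^{\full}_\VN\MBhu_\VN=\MBF_\VN\MB{w}_\VN$ with $\MB{w}_\VN=\MBFi_\VN\MBhA^{\full}_\VN\MBF_\VN\MBu_\VN$, checks that $\MB{w}_\VN$ is real (it is, because $\ZNd$ is symmetric for odd $\VN$ and $\TA$ is real --- a hypothesis your discrete Parseval identity actually needs and which you do not verify), and obtains
\begin{align*}
\scal{\MBhA^{\full}_\VN\MBhu_\VN}{\MBhv_\VN}_{\xhXN}
&=\scal{\MBFi_\VN\MBhA^{\full}_\VN\MBF_\VN\,\MBu_\VN}{\MBv_\VN}_{\xXN}.
\end{align*}
The matrix you land on is therefore $\MBFi_\VN\MBhA^{\full}_\VN\MBF_\VN$, not the $\MBF_\VN\MBhA^{\full}_\VN\MBFi_\VN$ printed in the statement, and these two real symmetric matrices do \emph{not} induce the same bilinear form on $\xRdN$: conjugating by the transforms in the opposite order replaces $\FT{\TA}(\Vl-\Vk)$ by $\FT{\TA}(\Vk-\Vl)=\conj{\FT{\TA}(\Vl-\Vk)}$, which changes the result whenever $\TA$ is not an even function of $\Vx$. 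A one-dimensional check with $A(x)=1+\sin(2\pi x)$, $u(x)=\sin(2\pi x)$, $N=3$ gives $+1/2$ for the $\Vk=\Vo$ component of $\MBFi_\VN\MBhA^{\full}_\VN\MBF_\VN\MBu_\VN$ (the correct value, namely the projection of $Au$ onto $\cTNd$ evaluated at the origin) and $-1/2$ for the other ordering. Note that $\MBA^{\full}_\VN=\MBFi_\VN\MBhA^{\full}_\VN\MBF_\VN$ is also the ordering consistent with the decomposition \eqref{eq:Afull_decomp} in Lemma~\ref{lem:reduce_grid}; the formula in the statement appears to have $\MBF_\VN$ and $\MBFi_\VN$ interchanged. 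So your argument does prove the two displayed scalar identities, but only after correcting the definition of $\MBA^{\full}_\VN$; as written, the final step cannot be justified for general non-even $\TA$, and the appeal to ``unambiguity'' hides rather than resolves this.
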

\begin{lemma}[Sparse expression on a double grid, Lemma~39 in \cite{VoZeMa2014GarBounds}]
\label{lem:integ_eval_VM}
For an odd order $\VN$ according to \eqref{eq:N_odd} and $\VM$ such that $M_\alp\geq 2N_\alp -1$, 
the integral \eqref{eq:integral2eval} equals
\begin{align*}
\scal{\TA\Vu_\VN}{\Vv_\VN}_{\Lper{2}{\xRd}} &=
\scal{\MBA_\VM\MBu_\VN}{\MBv_\VN}_{\xRdM}
\end{align*}
where $\MBu_\VN = \IM{\VM}[\Vu_\VN]$, $\MBv_\VN=\IM{\VM}[\Vv_\VN]\in\xXM$, and
$\MBA_\VM\in\xMM$ with components
\begin{align}\label{eq:MBA_GA}
(\MBA_{\VM})_{\alp\beta}^{\Vk\Vl} &=
\del_{\Vk\Vl} \sum_{\Vn\in\ZtNrd}
\omega_{\VM}^{\Vk\Vn}
\FT{A}_{\alp\beta}(\Vn)
\quad
\text{for }\Vk,\Vl\in\ZMd\text{ and }\alp,\beta=1,\dotsc,d.
\end{align}
\end{lemma}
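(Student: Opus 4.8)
The plan is to reduce both sides to the same expansion in the Fourier coefficients $\FT{A}_{\alp\bet}$ of the material coefficients, the key structural fact being that a product of two trigonometric polynomials of order $\VN$ is again a trigonometric polynomial whose frequency support lies in $\ZtNrd$. Writing out the integrand gives $(\TA\Vu_\VN)\cdot\Vv_\VN = \sum_{\alp,\bet} A_{\alp\bet}\,w_{\alp\bet}$ with $w_{\alp\bet} := v_\alp u_\bet$, and since $\Vu_\VN,\Vv_\VN\in\cTNd$ have components with frequencies in $\ZNd$ (maximal magnitude $(N_\alp-1)/2$ per direction, because $\VN$ is odd by \eqref{eq:N_odd}), each $w_{\alp\bet}$ is a trigonometric polynomial with $\FT{w}_{\alp\bet}(\Vn)=0$ outside $\ZtNrd$. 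First I would expand $w_{\alp\bet}$ in its finitely many Fourier modes and use the elementary identity $\int_\puc A_{\alp\bet}\varphi_\Vp\D{\Vx}=\FT{A}_{\alp\bet}(-\Vp)$ to obtain the continuous side
\[
\scal{\TA\Vu_\VN}{\Vv_\VN}_{\Lper{2}{\xRd}}
= \sum_{\alp,\bet}\sum_{\Vn\in\ZtNrd} \FT{A}_{\alp\bet}(\Vn)\,\FT{w}_{\alp\bet}(-\Vn),
\]
after relabelling $\Vp\mapsto-\Vn$ (using that $\ZtNrd$ is symmetric about $\Vo$).

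Next I would expand the discrete side. Using the block-diagonal form $(\MBA_\VM)^{\Vk\Vl}=\del_{\Vk\Vl}(\MBA_\VM)^{\Vk\Vk}$ from \eqref{eq:MBA_GA}, together with $\MBu_\VN=\IM{\VM}[\Vu_\VN]$ and $\MBv_\VN=\IM{\VM}[\Vv_\VN]$ from \eqref{eq:IN} so that $(\MBu_\VN^\Vk)_\bet=u_\bet(\Vx_\VM^\Vk)$ and $(\MBv_\VN^\Vk)_\alp=v_\alp(\Vx_\VM^\Vk)$, the inner product collapses to a single sum over grid points,
\[
\scal{\MBA_\VM\MBu_\VN}{\MBv_\VN}_{\xRdM}
= \sum_{\alp,\bet}\sum_{\Vn\in\ZtNrd}\FT{A}_{\alp\bet}(\Vn)\,
\frac{1}{\meas{\VM}}\sum_{\Vk\in\ZMd}\omega_\VM^{\Vk\Vn}\,w_{\alp\bet}(\Vx_\VM^\Vk).
\]
It then remains to identify the bracketed quantity with $\FT{w}_{\alp\bet}(-\Vn)$, i.e. to show that the forward DFT on the $\VM$-grid reproduces the exact Fourier coefficients of $w_{\alp\bet}$.

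This last identification is the crux and the only non-routine step. Substituting the grid representation $w_{\alp\bet}(\Vx_\VM^\Vk)=\sum_{\Vp\in\ZtNrd}\FT{w}_{\alp\bet}(\Vp)\,\omega_\VM^{\Vk\Vp}$ from \eqref{eq:trig_conn_raw} and exchanging the order of summation, the inner sum becomes $\sum_{\Vp\in\ZtNrd}\FT{w}_{\alp\bet}(\Vp)\,\frac{1}{\meas{\VM}}\sum_{\Vk\in\ZMd}\omega_\VM^{\Vk(\Vn+\Vp)}$. The orthogonality relation \eqref{eq:DFT_coef} collapses this to $\FT{w}_{\alp\bet}(-\Vn)$, but only because both $-\Vn$ and the summation frequencies $\Vp$ lie in $\ZMd$, so that no aliasing occurs; this is precisely where the hypothesis $M_\alp\ge 2N_\alp-1$ enters, guaranteeing $\ZtNrd\subseteq\ZMd$ (equivalently $|n_\alp+p_\alp|\le 2N_\alp-2<M_\alp$ forces $\Vn+\Vp=\Vo$). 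With the bracket identified, the two displays coincide and the lemma follows. (Alternatively one could start from the fully populated expression of Lemma~\ref{lem:FD_matrix_full}, rewriting $(\MBhA^{\full}_\VN)^{\Vl\Vk}=\FT{A}(\Vl-\Vk)$ and resumming; but the direct grid computation above makes the role of the double grid and of the degree count most transparent.)
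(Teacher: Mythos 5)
Your proof is correct and follows essentially the same route as the paper's: both hinge on the fact that the product $u_{\VN,\beta}v_{\VN,\alp}$ is a trigonometric polynomial of order at most $\tVNr$, hence exactly resolved on the $\VM$-grid with $M_\alp\geq 2N_\alp-1$, so that the DFT of its grid values recovers its Fourier coefficients without aliasing. The paper invokes this via \eqref{eq:trig_conn_raw} and works from the continuous integral outward, whereas you compute both sides and meet in the middle, re-deriving the no-aliasing step directly from the orthogonality relation \eqref{eq:DFT_coef}; the substance is identical.
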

Because of the requirements in section~\ref{sec:reduce_grid}, this lemma generalizes \cite[Lemma~39]{VoZeMa2014GarBounds} that confines $\VM=2\VN-\V{1}$; thus, a proof is presented here.
\begin{proof}
Because the product of two trigonometric polynomials $\Vu_\VN \Vv_{\VN}=\bigl( u_{\VN,\alp} v_{\VN,\alp}\bigr)_{\alp}\in\cTtNrd$ has an order bounded by $\tVNr$, it can be expressed, in accordance with \eqref{eq:trig_conn_raw}, on any grid $\VM$ such that $M_\alp\geq 2N_\alp-1$, i.e.
\begin{align*}
u_{\VN,\beta}v_{\VN,\alp} &= \sum_{\Vm\in\ZMd} \FT{u_{\VN,\beta}v_{\VN,\alp}}(\Vm) \varphi_\Vm = \sum_{\Vm\in\ZMd} u_{\VN,\beta}(\Vx_\VM^\Vm)
v_{\VN,\alp}(\Vx_\VM^\Vm) \varphi_{\VM,\Vm} 
\end{align*}
where $\FT{u_{\VN,\beta}v_{\VN,\alp}}(\Vm) = 0$ for
$\Vm\in\ZMd \setminus \ZtNrd$.

Substitution into \eqref{eq:integral2eval} and direct computation reveals
\begin{align*}
\scal{\TA\Vu_\VN}{\Vv_\VN}_{L^2_{\per}} &= \sum_{\alp,\beta} \sum_{\Vm\in\ZtNrd} \FT{A}_{\alp\beta}(-\Vm) \FT{\Vu_{\VN,\beta}\Vv_{\VN,\alp}}(\Vm)
\\
&= 
\sum_{\alp,\beta} \sum_{\Vm\in\ZtNrd} \FT{A}_{\alp\beta}(-\Vm)
\sum_{\Vk\in\ZMd} \frac{\omega_\VM^{-\Vm\Vk}}{\pVM} u_{\VN,\beta}(\Vx_\VM^\Vk) v_{\VN,\alp}(\Vx_\VM^\Vk).
\end{align*}
The statement of the lemma follows by substituting $\Vn$ with $-\Vm$.
\end{proof}
\begin{remark}[Material coefficients leading to guaranteed bounds]
\label{rem:anal_Fourier}
The closed-form evaluation of integral \eqref{eq:integral2eval} leading to the fully discrete matrix \eqref{eq:MBA_GA} rests upon recognition of the Fourier series expansion for material coefficients $\TA$. The space of functions having analytical expression of Fourier coefficients constitutes a linear space thanks to the linearity of the integration. Three suitable examples, which are also used for the numerical examples in section~\ref{sec:numerical_examples}, are introduced here.

Let $\Vh\in\xR^d$ and $r\in\xR$ be parameters such that $0<h_\alp\leq 
1$ and $r\leq 1$, then the periodic 
functions
$\rect_{\Vh},\tri_{\Vh},
\circl_r\in L^{\infty}_{\per}(\puc;\xR)$ are defined on
$\puc$ along with their Fourier coefficients as
\begin{subequations}
\label{eq:char_func_Four}
\begin{align}
\label{eq:def_cube}
\rect_{\Vh}(\Vx) &=
\begin{cases}
1
&
\text{if }
|x_\alp|<\frac{h_\alp}{2}\text{ for all }\alp
\\
0
&
\text{otherwise}
\end{cases},
&
\FT{\rect}_{\Vh}(\Vk) &
= \prod_{\alp} h_\alp \sinc\left(
h_\alpha k_{\alp} \right),
\\
\label{eq:def_tri}
\tri_{\Vh}(\Vx) &= \prod_{\alp}\max\{1-|\frac{x_\alp}{h_\alp}|,0\},
&
\FT{\tri}_{\Vh}(\Vk) &
= \prod_{\alp} h_{\alp} \sinc^2
\left(h_\alpha k_{\alp} \right),
\\
\circl_r(\Vx) &= 
\begin{cases}
1
&
\text{for }\|\Vx\|_2 < r\\
0
&
\text{otherwise}
\end{cases},
&
\FT{\circl}_{r}(\Vk) &
= 
\begin{cases}
\pi r^2
&
\text{for }\Vk=\Vo
\\
r^2\frac{B_1(2\pi r\|\Vk\|_2)}{r\|\Vk\|_2}
&
\text{otherwise}
\end{cases},
\end{align}
\end{subequations}
where 
$\sinc (x) = 
\begin{cases}
1
&
\text{for }x=0
\\
\frac{\sin(\pi x)}{\pi x}
&
\text{for }x\neq 0
\end{cases}
$ and $B_1$ is the Bessel function of the first kind.
\end{remark}
In \cite[Lemma~38]{VoZeMa2014GarBounds}, 
the incorporation of characteristic functions \eqref{eq:char_func_Four} has been elaborated in detail for the inclusion-matrix composites, characterized  by coefficients in the form
\begin{align}\label{eq:inclcomp}
\TA(\Vx) &= \TA\incl{0}+\sum_{i=1}^n f\incl{i} (\Vx-\Vx\incl{i}) \TA\incl{i}
\end{align}
where matrices $\TA\incl{i}\in\xRdd$ for $i=0,\dotsc,n$ represent the coefficients of the matrix phase and inclusions, functions $f\incl{i}\in L^{\infty}_\per(\puc)$
quantify the distribution of coefficients within inclusions, centered at $\Vx\incl{i}$, along with their topology. 
Their discrete coefficients \eqref{eq:MBA_GA} read as
\begin{align}\label{eq:inclcomp_formula}
\MBA_{\VM}^{\Vk\Vl} 
&= \del_{\Vk\Vl} \left[ \TA\incl{0}  + 
\sum_{i=1}^n \TA\incl{i} 
\left( \sum_{\Vm\in\ZtNrd} \omega_{\VM}^{\Vk\Vm}
\varphi_{-\Vm}(\Vx\incl{i})
\FT{f}\incl{i}(\Vm) \right)
\right]
\quad\text{for }\Vk,\Vl\in\ZMd.
\end{align}
\subsection{Material coefficients defined on grids}
\label{sec:grid_composite}
As an alternative to the inclusion-matrix composite \eqref{eq:inclcomp}, the grid-based composite is expressed as a linear combination of characteristic functions $\psi$ concentrated on grid points $\Vx_\VP^\Vm$ for $\Vm\in\ZPd$ and $\VP\in\xNd$, i.e.
\begin{align}\label{eq:gridcomp}
\TA(\Vx) &=
\sum_{\Vp\in\ZPd}\psi(\Vx-\Vx^{\Vp}_{\VP})\MBC_\VP^{\Vp}
\quad\text{for }\VP\in\xNd, \Vx\in\puc,\text{ and }\MBC_\VP\in \xR^{d\times d\times\VP}.
\end{align}
The material resolution is systematically expressed by the symbol $\VP$, which corresponds to the number of grid points and is independent of the order $\VN\in\xNd$ of trigonometric polynomials $\cTNd$. 
\begin{remark}
The material coefficients \eqref{eq:gridcomp} are represented with a pixel- or voxel-based image with a resolution $\VP\in\xNd$. If the basis function $\psi$ is taken as \eqref{eq:def_cube} or \eqref{eq:def_tri}, the coefficients are piece-wise constant or bilinear, respectively.
\end{remark}
\begin{lemma}[Grid-based composites]
\label{lem:grid_compos}
The matrix \eqref{eq:MBA_GA} for material coefficients \eqref{eq:gridcomp} is given by
\begin{align}
\label{eq:gridcomp_formula}
\bigl( \MBA_{\VM} \bigr)^{\Vk\Vl}_{\alp\beta}
&= \del_{\Vk\Vl}\pVP
\sum_{\Vm\in\ZtNrd} \omega_{\VM}^{\Vk\Vm} \FT{\psi} (\Vm) 
\left( \sum_{\Vp\in \ZPd} \frac{\omega_{\VP}^{-\Vm\Vp}}{\pVP} 
 \M{C}_{\VP,\alp\beta}^{\Vp} \right)
\end{align}
for 
 $\Vk,\Vl\in\ZMd$ and $\alp,\beta=1,\dotsc,d$,
where $\FT{\psi}(\Vm)$ for $\Vm\in\ZtNrd$ are the Fourier coefficients of $\psi$.
\end{lemma}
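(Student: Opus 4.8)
The plan is to compute the exact Fourier coefficients $\FT{A}_{\alp\beta}(\Vn)$ of the grid-based coefficients \eqref{eq:gridcomp} in closed form and then insert them into the double-grid expression \eqref{eq:MBA_GA}. Since every ingredient is linear, I would first exploit the linearity of the Fourier transform. Applying the integral definition of $\FT{\TA}$ to \eqref{eq:gridcomp} gives
\[
\FT{A}_{\alp\beta}(\Vn) = \sum_{\Vp\in\ZPd} \M{C}_{\VP,\alp\beta}^{\Vp} \int_{\puc} \psi(\Vx - \Vx_\VP^{\Vp}) \varphi_{-\Vn}(\Vx) \D{\Vx},
\]
so the whole computation reduces to the Fourier coefficient of a single translated bump $\psi(\cdot - \Vx_\VP^{\Vp})$.

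Second, I would apply the translation rule for periodic Fourier coefficients. Substituting $\Vy = \Vx - \Vx_\VP^{\Vp}$ and using the $\puc$-periodicity of the integrand together with the multiplicative splitting $\varphi_{-\Vn}(\Vy + \Vx_\VP^{\Vp}) = \varphi_{-\Vn}(\Vy)\,\varphi_{-\Vn}(\Vx_\VP^{\Vp})$, the inner integral factors as $\varphi_{-\Vn}(\Vx_\VP^{\Vp})\,\FT{\psi}(\Vn)$. Here $\FT{\psi}(\Vn)$ is exactly the analytically known Fourier coefficient of the chosen basis bump, e.g. the $\rect$ or $\tri$ coefficients in \eqref{eq:char_func_Four}, so no approximation enters. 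Third, I would rewrite the evaluation of the Fourier mode at a grid point as a DFT coefficient: since $\Vx_\VP^{\Vp} = \sum_\alp (p_\alp/P_\alp)\cb{\alp}$, the definitions of $\varphi_{-\Vn}$ and of the DFT kernel \eqref{eq:DFT_coef} give $\varphi_{-\Vn}(\Vx_\VP^{\Vp}) = \omega_\VP^{-\Vn\Vp}$. Combining these two steps yields the closed form
\[
\FT{A}_{\alp\beta}(\Vn) = \FT{\psi}(\Vn)\sum_{\Vp\in\ZPd} \omega_\VP^{-\Vn\Vp}\,\M{C}_{\VP,\alp\beta}^{\Vp}.
\]

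Finally, I would substitute this into \eqref{eq:MBA_GA}, keeping the factor $\del_{\Vk\Vl}$ and the restriction of the frequency sum to $\ZtNrd$, and rename the summation index $\Vn$ to $\Vm$. Inserting and immediately cancelling a factor of $\pVP$ recovers precisely the stated expression \eqref{eq:gridcomp_formula}; the $\pVP$ written outside and the $1/\pVP$ written inside serve only to display the inner weighted sum as a normalized inverse DFT of $\M{C}_\VP$.

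I expect no serious obstacle; the computation is essentially a linearity argument followed by the shift rule. The only point that needs care is the validity of the translation rule for frequencies $\Vn \in \ZtNrd$ that may lie outside $\ZPd$. This is harmless, because I work with the genuine continuous Fourier coefficients of the periodic function $\TA$, whose modes exist at all frequencies, and the identity $\varphi_{-\Vn}(\Vx_\VP^{\Vp}) = \omega_\VP^{-\Vn\Vp}$ holds for every integer $\Vn$ irrespective of aliasing, so no orthogonality of the DFT on the grid $\ZPd$ is invoked.
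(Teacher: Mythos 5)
Your proposal is correct and follows essentially the same route as the paper's proof: linearity of the Fourier coefficient, the translation (affine substitution) rule giving $\int_{\puc}\psi(\Vx-\Vx_\VP^{\Vp})\varphi_{-\Vm}(\Vx)\D{\Vx}=\omega_{\VP}^{-\Vm\Vp}\FT{\psi}(\Vm)$, and direct substitution into \eqref{eq:MBA_GA}. Your closing remark that the identity $\varphi_{-\Vn}(\Vx_\VP^{\Vp})=\omega_\VP^{-\Vn\Vp}$ holds for all integer frequencies without any DFT orthogonality is a valid clarification that the paper leaves implicit.
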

\begin{proof}
Using an affine substitution, the following formula for $\Vm,\Vp\in\xZ^d$ is deduced
\begin{align*}
\int_{\puc} {\psi(\Vx-\Vx^{\Vp}_{\VP}) \varphi_{-\Vm}(\Vx)}{\D{\Vx}} 
= \int_{\puc} {\psi(\Vx)\varphi_{-\Vm}(\Vx+\Vx^{\Vp}_{\VP})} {\D{\Vx}} 
=  \omega_{\VP}^{-\Vm\Vp} \FT{\psi}(\Vm).
\end{align*}
This, \eqref{eq:MBA_GA}, and \eqref{eq:gridcomp} enable computation
\begin{align*}
\bigl( \MBA_{\VM} \bigr)_{\alp\beta}^{\Vk\Vl} &= \del_{\Vk\Vl} \sum_{\Vm\in\ZtNrd}
\omega_{\VM}^{\Vk\Vm}
\int_{\puc}
A_{\alp\beta}(\Vx) \varphi_{-\Vm}(\Vx)\D{\Vx}
\\
&= \del_{\Vk\Vl} \sum_{\Vm\in\ZtNrd}
\omega_{\VM}^{\Vk\Vm}
\sum_{\Vp\in\ZPd} \M{C}_{\VP,\alp\beta}^{\Vp}
\int_{\puc}
\psi(\Vx-\Vx^{\Vp}_{\V{P}}) \varphi_{-\Vm}(\Vx)\D{\Vx}
\\
&= \del_{\Vk\Vl} \sum_{\Vm\in\ZtNrd}
\omega_{\VM}^{\Vk\Vm}
\sum_{\Vp\in\ZPd} \M{C}_{\VP,\alp\beta}^{\Vp} \omega_{\VP}^{-\Vm\Vp} \FT{\psi}(\Vm). 
\end{align*}
\end{proof}

\subsection{Reduction from a double to the original grid}
\label{sec:reduce_grid}
In Lemma~\ref{lem:integ_eval_VM}, the integral \eqref{eq:integral2eval} occurring in the Ga scheme \eqref{eq:Ga} has been evaluated on a double grid of size $\VM$ such that $M_\alp\geq 2N_\alp-1$.
Here in Lemma~\ref{lem:reduce_grid}, it is reformulated using DFT shifts to the original size $\VN$ occurring also in the GaNi scheme, cf.~Remark~\ref{rem:FD_GaNi}.
This procedure reduces memory requirements which is discussed in Remark~\ref{rem:mem_req}.
\begin{lemma}[Reduction to the original grid]
\label{lem:reduce_grid}
Let $\xSd=\prod_\alp\{0,1\}$ and $\MBA_\tVN$ be a matrix from Lemma~\ref{lem:integ_eval_VM}. Then
the fully populated matrices in Lemma~\ref{lem:FD_matrix_full}
can be reformulated to
\begin{subequations}
\label{eq:Afull_decomp}
\begin{align}
\label{eq:Afull_decomp_space}
\MBA^\full_\VN &= 2^{-d}\sum_{\Vs\in\xSd} \MBFi_\VN \MBS_\VN^*(\Vs) 
\MBF_\VN 
\MBA_{\VN}
(\Vs)\MBFi_\VN\MBS_\VN(\Vs) \MBF_\VN,
\\
\label{eq:Afull_decomp_Fourier}
\MBhA^\full_\VN  &= 2^{-d} \sum_{\Vs\in\xSd} \MBS_\VN^*(\Vs) 
\MBF_\VN 
\MBA_{\VN}
(\Vs)\MBFi_\VN\MBS_\VN(\Vs),
\end{align}
\end{subequations}
where the matrices
$\MBS_\VN(\Vs),\MBA_{\VN}(\Vs)\in\xhMN$ are defined for $\Vs\in\xSd$ as
\begin{align*}
\MBS_\VN(\Vs) &=
( \del_{\alp\beta}\del_{\Vm\Vn} \omega_\tVN^{-\Vs\Vm})^{\Vm,\Vn\in\ZNd}_{\alp,\beta},
&
\MBA_{\VN} (\Vs) &= (\del_{\Vk\Vr}\MBA^{(2\Vr-\Vs)(2\Vr-\Vs)}_{\tVN,\alp\beta})_{\alp,\beta}^{\Vk,\Vr\in\ZNd}.
\end{align*}
\end{lemma}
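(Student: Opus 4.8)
The plan is to prove the Fourier-domain identity \eqref{eq:Afull_decomp_Fourier} by a direct entrywise computation; the space-domain identity \eqref{eq:Afull_decomp_space} then follows at once by conjugating with the DFT, i.e.\ from the change of basis relating $\MBA^\full_\VN$ and $\MBhA^\full_\VN$ recorded in Lemma~\ref{lem:FD_matrix_full}. The two inputs I would write down first are the closed forms of both sides as Fourier data of $\TA$. From Lemma~\ref{lem:FD_matrix_full}, using $\varphi_\Vk\varphi_{-\Vl}=\varphi_{\Vk-\Vl}$, the target block is $(\MBhA^\full_\VN)^{\Vl\Vk}=\FT{\TA}(\Vl-\Vk)$ for $\Vl,\Vk\in\ZNd$, so it is block-Toeplitz with differences confined to $\ZtNrd$. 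The double-grid matrix of Lemma~\ref{lem:integ_eval_VM} taken at $\VM=\tVN$ is block-diagonal with $(\MBA_\tVN)^{\Vk\Vk}=\sum_{\Vn\in\ZtNrd}\omega_\tVN^{\Vk\Vn}\FT{\TA}(\Vn)$. This is exactly why Lemma~\ref{lem:integ_eval_VM} was generalized beyond $\VM=2\VN-\V{1}$: here the even grid $\tVN=2\VN$ is needed so that the index $2\Vr-\Vs$ sweeps $\ZtNd$ bijectively as $(\Vr,\Vs)$ runs over $\ZNd\times\xSd$.

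Next I would expand one summand of the right-hand side of \eqref{eq:Afull_decomp_Fourier}. Since $\MBS_\VN(\Vs)$ and $\MBS_\VN^*(\Vs)$ are diagonal (scalar-times-identity in the $d\times d$ block index) and $\MBA_\VN(\Vs)$ is block-diagonal with $\Vr$-th block $(\MBA_\tVN)^{(2\Vr-\Vs)(2\Vr-\Vs)}$, collapsing the Kronecker deltas gives
\begin{align*}
\bigl[\MBS_\VN^*(\Vs)\MBF_\VN\MBA_\VN(\Vs)\MBFi_\VN\MBS_\VN(\Vs)\bigr]^{\Vl\Vk}
&= \frac{\omega_\tVN^{\Vs(\Vl-\Vk)}}{\meas{\VN}}\sum_{\Vr\in\ZNd}\omega_\VN^{\Vr(\Vk-\Vl)}(\MBA_\tVN)^{(2\Vr-\Vs)(2\Vr-\Vs)}.
\end{align*}
Substituting the Fourier expansion of $(\MBA_\tVN)^{(2\Vr-\Vs)(2\Vr-\Vs)}$ and using $\omega_\tVN^{2\Vr\Vq}=\omega_\VN^{\Vr\Vq}$, the full right-hand side of \eqref{eq:Afull_decomp_Fourier} becomes
\begin{align*}
\frac{2^{-d}}{\meas{\VN}}\sum_{\Vq\in\ZtNrd}\FT{\TA}(\Vq)\Bigl(\sum_{\Vs\in\xSd}\omega_\tVN^{\Vs(\Vl-\Vk-\Vq)}\Bigr)\Bigl(\sum_{\Vr\in\ZNd}\omega_\VN^{\Vr(\Vq-\Vl+\Vk)}\Bigr).
\end{align*}
The $d\times d$ matrix structure simply rides along, so $\FT{\TA}(\Vq)$ is kept as a matrix factor throughout.

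The final step is to evaluate the two inner sums. The $\Vr$-sum over $\ZNd$ is the orthogonality relation \eqref{eq:DFT_coef}: it equals $\meas{\VN}$ when $q_\alp\equiv l_\alp-k_\alp\pmod{N_\alp}$ for all $\alp$ and vanishes otherwise, which cancels the prefactor $\meas{\VN}$ and restricts the $\Vq$-sum to this residue class. The $\Vs$-sum factorizes as $\prod_\alp\bigl(1+\exp(\pi\imu(l_\alp-k_\alp-q_\alp)/N_\alp)\bigr)$. Here is the crux and the only genuine obstacle: for $\Vq\in\ZtNrd$ lying in the residue class of $\Vl-\Vk$, each coordinate difference $q_\alp-(l_\alp-k_\alp)$ is forced into $\{0,\pm N_\alp\}$, since both indices range over $\{-(N_\alp-1),\dots,N_\alp-1\}$ and their difference cannot reach $\pm 2N_\alp$. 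The shift sum then contributes the factor $2$ when this difference is $0$ and the factor $1+\exp(\pm\pi\imu)=0$ when it is $\pm N_\alp$. Thus every aliased frequency that the single-grid $\Vr$-sum would otherwise admit is annihilated, leaving only $\Vq=\Vl-\Vk$ (which indeed lies in $\ZtNrd$) with total weight $2^d$, cancelling the $2^{-d}$. This yields $\FT{\TA}(\Vl-\Vk)=(\MBhA^\full_\VN)^{\Vl\Vk}$ and hence \eqref{eq:Afull_decomp_Fourier}, while \eqref{eq:Afull_decomp_space} follows by the DFT conjugation of Lemma~\ref{lem:FD_matrix_full}. The heart of the argument is precisely this de-aliasing bookkeeping: the even double grid together with the $2^d$ half-cell shifts $\Vs$ separates frequencies that differ by a full period $\VN$ and would otherwise collapse on a single grid of size $\VN$.
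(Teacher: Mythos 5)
Your proof is correct, but it follows a genuinely different route from the paper's. The paper never computes the entries of $\MBhA^\full_\VN$; instead it works with the quadratic form on the double grid, writes $\scal{\MBA_\tVN\bMBu_\VN}{\bMBv_\VN}_{\xRdtN}$ via the index split $\Vk=2\Vr-\Vs$, uses the interpolation identity $\bMBu_\VN^{2\Vr-\Vs}=\bigl[\MBFi_\VN\MBS_\VN(\Vs)\FT{\MBu}_\VN\bigr]^{\Vr}$ together with the symmetry of $\ZNd$ for odd $\VN$, and reassembles the result as $2^{-d}\sum_{\Vs}\scal{\MBS_\VN^*(\Vs)\MBF_\VN\MBA_\VN(\Vs)\MBFi_\VN\MBS_\VN(\Vs)\FT{\MBu}_\VN}{\FT{\MBv}_\VN}_{\xhXN}$; the matrix identity is then read off from the agreement of bilinear forms on all test pairs. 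You instead verify \eqref{eq:Afull_decomp_Fourier} entry by entry: $(\MBhA^\full_\VN)^{\Vl\Vk}=\FT{\TA}(\Vl-\Vk)$ on one side, and on the other side the DFT orthogonality over $\ZNd$ restricts $\Vq$ to the residue class of $\Vl-\Vk$ modulo $\VN$ while the sum over the $2^d$ half-grid shifts $\Vs$ annihilates exactly the aliases $\Vq=\Vl-\Vk\pm N_\alp\Ve_\alp$ and leaves $\Vq=\Vl-\Vk$ with weight $2^d$. Your computation checks out in every step (including the bijectivity of $(\Vr,\Vs)\mapsto 2\Vr-\Vs$ onto $\ZtNd$ and the identity $\omega_\tVN^{2\Vr\Vq}=\omega_\VN^{\Vr\Vq}$), and it buys something the paper's argument does not: an explicit explanation of the mechanism --- the even double grid plus the $2^d$ shifts act as a de-aliasing filter --- and a genuine matrix identity on all of $\xCdN$, whereas the paper's route tests only against the Hermitian-symmetric subspace $\MBF_\VN[\xRdN]$ and needs a small extra remark to conclude the same. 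One caution on your final step: to pass from \eqref{eq:Afull_decomp_Fourier} to \eqref{eq:Afull_decomp_space} you must use the conjugation in the form $\MBA^\full_\VN=\MBFi_\VN\MBhA^\full_\VN\MBF_\VN$, which is what equating $\scal{\MBhA^\full_\VN\MBF_\VN\MBu_\VN}{\MBF_\VN\MBv_\VN}_{\xhXN}$ with $\scal{\MBA^\full_\VN\MBu_\VN}{\MBv_\VN}_{\xXN}$ actually yields (the order of the factors displayed in Lemma~\ref{lem:FD_matrix_full} is reversed relative to this, so cite the scalar-product identity rather than that displayed formula).
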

\begin{proof}
According to Lemma~\ref{lem:FD_matrix_full} and \ref{lem:integ_eval_VM}, integral \eqref{eq:integral2eval} can be expressed in three ways
\begin{align}\label{eq:bilinear_tVN_VN}
 \scal{\TA\Vu_\VN}{\Vv_\VN}_{\Ltp} &= \scal{\MBA_\VN^\full\MBu_\VN}{\MBv_\VN}_{\xRdN} = \scal{\MBhA_\VN^\full\FT{\MBu}_\VN}{\FT{\MBv}_\VN}_{\xhXN}
 = \scal{\MBA_\tVN\bMBu_\VN}{\bMBv_\VN}_{\xRdtN}
\end{align}
where 
\begin{align*}
\Vu_\VN &= \IMi{\VN}[\MBu_\VN] = \IMi{\VN}[\MBFi_\VN\FT{\MBu}_\VN] = \IMi{\tVN}[\bMBu_\VN],
&
\Vv_\VN &= \IMi{\VN}[\MBv_\VN] = \IMi{\VN}[\MBFi_\VN\FT{\MBv}_\VN] = \IMi{\tVN}[\bMBv_\VN].
\end{align*}

Now, the matrix $\MBA_\tVN$ will be decomposed to meet \eqref{eq:Afull_decomp}.
In order to reduce the double grid sizing $\tVN$ to original grid sizing $\VN$, index $\Vk\in\ZtNd$ is uniquely split into 
\begin{align}\label{eq:index_split}
\Vk = 2\Vr - \Vs
\quad\text{with }\Vr\in\ZNd\text{ and }\Vs\in\xS. 
\end{align}
Using the connection of the two representations of trigonometric polynomials via DFT stated in~\eqref{eq:trig_conn}, it holds for $\Vk\in\ZtNd$ according to \eqref{eq:index_split} that
\begin{align*}
\bMBu^{\Vk}_\VN &= \Vu_\VN(\Vx^\Vk_\tVN) 
= \sum_{\Vm\in\ZNd} 
\omega_\tVN^{\Vk\Vm}\FT{\Vu}_\VN(\Vm)
= \sum_{\Vm\in\ZNd}
\omega_\VN^{\Vr\Vm}\omega_\tVN^{-\Vs\Vm} \FT{\Vu}_\VN(\Vm)
= \bigl[ \MBFi_\VN \MBS_\VN(\Vs) \FT{\MBu}_\VN \bigr]^{\Vr},
\end{align*}
where summation occurs only over $\ZNd$ instead of $\ZtNd$ since the trigonometric polynomial $\Vu_\VN$ belongs to $\cTNd$.
Using the notation $\MBA_\tVN^{\Vk}:=\MBA_\tVN^{\Vk\Vk}\in\xRdd$ for $\Vk\in\ZtNd$, the bilinear form \eqref{eq:bilinear_tVN_VN} can
be reformulated into
\begin{align*}
\scal{\MBA_\tVN\bMBu_\VN}{\bMBv_\VN}_{\xRdtN}
&= \frac{1}{\ptVN} \sum_{\Vk\in\ZtNd}\scal{\MBA_\tVN^{\Vk}\bMBu_\VN^\Vk}{\bMBv_\VN^\Vk}_{\xRd}
\\
&= \frac{1}{\ptVN} \sum_{\alp,\beta} \sum_{\Vs\in\xSd} \sum_{\Vr\in\ZNd}
\MA_{\tVN,\alp\beta}^{2\Vr-\Vs} \bMu_{\VN,\beta}^{2\Vr-\Vs}
\bMv_{\VN,\alp}^{2\Vr-\Vs}
\\
&= \frac{1}{2^d\pVN}\sum_{\alp,\beta} \sum_{\Vs\in\xSd} \sum_{\Vr,\Vm,\Vn\in\ZNd}
\MA_{\tVN,\alp\beta}^{2\Vr-\Vs} \left(
\omega_\VN^{\Vr\Vm}\omega_\tVN^{-\Vs\Vm} \FT{\Mu}_{\VN,\beta}^\Vm\right) 
\left( \omega_\VN^{\Vr\Vn}\omega_\tVN^{-\Vs\Vn}
\FT{\Mv}_{\VN,\alp}^{\Vn}\right).
\end{align*}
In the latter brackets, the sign of index $\Vn$ can be changed thanks to the symmetry of $\ZNd$ for odd grids \eqref{eq:N_odd}, i.e.
$\Vn\in\ZNd \Rightarrow -\Vn\in\ZNd$,
which allow for expressing summations as a scalar product on $\xhXN$, so
\begin{align*}
\scal{\MBA_\tVN\bMBu_\VN}{\bMBv_\VN}_{\xRdtN}
&= \frac{1}{2^d\pVN}\sum_{\alp,\beta} \sum_{\Vs\in\xSd} \sum_{\Vr,\Vm,\Vn\in\ZNd}
\MA_{\tVN,\alp\beta}^{2\Vr-\Vs} \left(
\omega_\VN^{\Vr\Vm}\omega_\tVN^{-\Vs\Vm} \FT{\Mu}_{\VN,\beta}^\Vm\right) 
\left( \omega_\VN^{-\Vr\Vn} \omega_\tVN^{\Vs\Vn}
\FT{\Mv}_{\VN,\alp}^{-\Vn}\right)
\\
&= \frac{1}{2^d}\sum_{\alp,\beta} \sum_{\Vs\in\xSd} \sum_{\Vr,\Vm,\Vn\in\ZNd} \ol{\omega_\tVN^{-\Vs\Vn}} \frac{ \omega_\VN^{-\Vr\Vn}}{\pVN}
\MA_{\tVN,\alp\beta}^{2\Vr-\Vs} \left(
\omega_\VN^{\Vr\Vm}\omega_\tVN^{-\Vs\Vm} \FT{\Mu}_{\VN,\beta}^\Vm\right) 
\left(
\ol{\FT{\Mv}_{\VN,\alp}^{\Vn}}\right)
\\
&= \frac{1}{2^d} \sum_{\Vs\in\xSd} \scal{\MBS_\VN(\Vs) \MBF_\VN \MBA_{\VN}
(\Vs)\MBFi_\VN\MBS_\VN(\Vs) \FT{\MBu}_\VN
}{\FT{\MBv}_\VN}_{\xhXN}
\end{align*}
where the conjugate symmetry of $\FT{\MBv}_\VN^{\Vn}$ and $\omega_\tVN^{\Vs\Vn}$ has been used. Using substitution
$\FT{\MBu}_\VN = \MBF_\VN\MBu_\VN$ and $\FT{\MBv}_\VN = \MBF_\VN\MBv_\VN$,
it can be reformulated as a scalar product on $\xXN$
\begin{align*}
\scal{\MBA_\tVN\bMBu_\VN}{\bMBv_\VN}_{\xRdtN}
&= \frac{1}{2^d} \sum_{\Vs\in\xSd} \scal{\MBFi_\VN \MBS_\VN(\Vs) \MBF_\VN \MBA_{\VN}
(\Vs)\MBFi_\VN\MBS_\VN(\Vs)\MBF_\VN\MBu_\VN
}{\MBv_\VN}_{\xXN}.
\end{align*}
\end{proof}
\subsection{Approximation of guaranteed bounds}
\label{sec:approx_bounds}
Throughout section~\ref{sec:eval_integrals}, the author presents the methodology for evaluating guaranteed bounds on homogenized properties relying on determination of the matrix \eqref{eq:MBA_GA}.
For closed-form evaluation, an analytical expression of Fourier coefficients $\FT{A}_{\alp\beta}(\Vm)$ for $\Vm\in\ZtNrd$ is required, cf.~Lemma~\ref{lem:integ_eval_VM}.

For general material coefficients, an approximate evaluation of the integral \eqref{eq:integral2eval} can violate the structure of guaranteed bounds \eqref{eq:structure_homog_matrices}; this can be resolved by appropriate adjustment of the material coefficients presented in the following Lemma~\ref{lem:bounds_bounds} with a particular example in Remark~\ref{rem:adjusting_A}. This approach is inspired by \cite{Dvorak1993master,Haslinger1995optimum}, which incorporated outer approximation of inclusion topology in the FEM framework.

\begin{lemma}[Upper-upper and lower-lower guaranteed bounds]
\label{lem:bounds_bounds}
Let $\overline{\TA},\underline{\TA}\in
L^{\infty}_{\per}(\puc;\xR^{d\times d})$ be upper and lower approximations of material coefficients \eqref{eq:A} satisfying
\begin{align}\label{eq:L_approx}
\underline{\TA}(\Vx) \preceq \TA(\Vx) \preceq
\overline{\TA}(\Vx)
\quad\text{for almost all }\Vx\in\puc,
\end{align}
and let $\obilf{}{}, \ubilfi{}{}:\Lper{2}{\xRd}\times\Lper{2}{\xRd}\rightarrow\xR$ be corresponding bilinear forms
\begin{align*}
\obilf{\Vu}{\Vv} &:=
 \scal{\ol{\TA} \Vu }{ \Vv }_{\Lper{2}{\xRd}},
&
\ubilfi{\Vu}{\Vv} &:= \scal{\ul{\TA}^{-1} \Vu}{
\Vv }_{\Lper{2}{\xRd}}.
\end{align*}
Then matrices $\ol{\TA}_{\eff}, \ul{\TA}_{\eff}, \ol{\TA}_{\eff,\VN}, \ul{\TA}_{\eff,\VN}\in\xRdd$  defined for arbitrary quantities $\VE, \VJ\in\xRd$
\begin{align*}
\scal{\ol{\TA}_{\eff} \VE}{\VE}_{\xRd} &=
 \inf_{\Ve\in\cE}\obilf{\VE + \Ve}{ \VE + \Ve },
&
\scal{\ol{\mB}_{\eff} \VJ}{\VJ}_{\xRd} &=
 \inf_{\Vj\in\cJ}\ubilfi{\VJ + \Vj}{ \VJ + \Vj },
\\
\scal{\ol{\TA}_{\eff,\VN} \VE}{\VE}_{\xRd} &=
 \inf_{\Ve\in\cE_\VN}\obilf{\VE + \Ve}{ \VE + \Ve },
&
\scal{\ol{\mB}_{\eff,\VN} \VJ}{\VJ}_{\xRd} &=
 \inf_{\Vj\in\cJ_\VN}\ubilfi{\VJ + \Vj}{ \VJ + \Vj },
 \\
\scal{\ol{\widetilde{\TA}}_{\eff,\VN} \VE}{\VE}_{\xRd} &=
 \obilf{\VE + \gVe_\VN\mac{\VE}}{ \VE + \gVe_\VN\mac{\VE}},
&
\scal{\ol{\widetilde{\mB}}_{\eff,\VN} \VJ}{\VJ}_{\xRd} &=
 \ubilfi{\VJ + \gVj_\VN\mac{\VJ}}{ \VJ + \gVj_\VN\mac{\VJ} }
\end{align*}
with minimizers $\gVe_\VN\mac{\VE},\gVj_\VN\mac{\VJ}$ of the GaNi scheme \eqref{eq:GaNi}
comply with the following structure of guaranteed bounds, i.e.
\begin{align*}
\begin{array}{ccccccccccccccc}
\ol{\widetilde{\mB}}_{\eff,\VN}^{-1} &\preceq& \ol{\mB}_{\eff,\VN}^{-1} & \preceq & \ol{\TB}_{\eff}^{-1} & \preceq & \Beff^{-1} & = & \Aeff & \preceq & \ol{\TA}_{\eff} & \preceq & \ol{\TA}_{\eff,\VN}  & \preceq & \ol{\widetilde{\TA}}_{\eff,\VN}
\\
 &  & \ol{\mB}_{\eff,\VN}^{-1} & \preceq & \BeffN^{-1}  & \preceq & \Beff^{-1} & = & \Aeff & \preceq & \AeffN & \preceq & \ol{\TA}_{\eff,\VN}  &  & 
\end{array}.
\end{align*}
\end{lemma}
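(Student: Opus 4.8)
The plan is to reproduce, for the perturbed bilinear forms $\obilf{}{}$ and $\ubilfi{}{}$, the same three-principle argument that drives Proposition~\ref{lem:struct_bounds}: pointwise monotonicity of the energy in the material coefficient, conformity of the trigonometric subspaces, and replacement of the true minimizer by a non-optimal (here the GaNi) competitor. The single new ingredient is the translation of the pointwise Löwner hypothesis \eqref{eq:L_approx} into an ordering of the bilinear forms. First I would argue the primal side: since $\TA(\Vx)\preceq\ol{\TA}(\Vx)$ a.e., for every $\Vw\in\Lper{2}{\xRd}$ the integrand obeys $\TA(\Vx)\Vw(\Vx)\cdot\Vw(\Vx)\leq\ol{\TA}(\Vx)\Vw(\Vx)\cdot\Vw(\Vx)$ a.e., and integrating gives $\bilf{\Vw}{\Vw}\leq\obilf{\Vw}{\Vw}$. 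For the dual side I would apply the pointwise form of the inverse inequality \eqref{eq:spd_inverse_ineq}: $\ul{\TA}(\Vx)\preceq\TA(\Vx)$ forces $\TA^{-1}(\Vx)\preceq\ul{\TA}^{-1}(\Vx)$ a.e., whence $\bilfi{\Vw}{\Vw}\leq\ubilfi{\Vw}{\Vw}$.

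Next I would assemble the two chains emanating from the centre. To the right, monotonicity of the forms yields $\Aeff\preceq\ol{\TA}_{\eff}$ (take the infimum over $\cE$ of the ordered integrands); the conformity $\cEN\subset\cE$ of \eqref{eq:Helmholtz_trig} raises the minimum, giving $\ol{\TA}_{\eff}\preceq\ol{\TA}_{\eff,\VN}$; and because $\gVe_\VN\mac{\VE}\in\cEN$ is an admissible but generally suboptimal competitor for the $\ol{\TA}$-minimisation, $\ol{\TA}_{\eff,\VN}\preceq\ol{\widetilde{\TA}}_{\eff,\VN}$. The dual chain $\Beff\preceq\ol{\TB}_{\eff}\preceq\ol{\mB}_{\eff,\VN}\preceq\ol{\widetilde{\mB}}_{\eff,\VN}$ follows verbatim with $\cJN\subset\cJ$, the form $\ubilfi{}{}$, and the competitor $\gVj_\VN\mac{\VJ}$; inverting each link through \eqref{eq:spd_inverse_ineq} reverses it into $\ol{\widetilde{\mB}}_{\eff,\VN}^{-1}\preceq\ol{\mB}_{\eff,\VN}^{-1}\preceq\ol{\TB}_{\eff}^{-1}\preceq\Beff^{-1}$. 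Gluing the two chains at the centre by the continuous duality $\Beff^{-1}=\Aeff$ of \eqref{eq:Aeff_duality} then produces the first row.

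For the second row, the central segment $\BeffN^{-1}\preceq\Beff^{-1}=\Aeff\preceq\AeffN$ is already delivered by Proposition~\ref{lem:struct_bounds}, so only the two outer links remain; each compares matrices defined over the \emph{same} discrete space but with the true versus the approximate coefficient. Minimising the ordered integrands over $\cEN$ gives $\AeffN\preceq\ol{\TA}_{\eff,\VN}$, while minimising over $\cJN$ gives $\BeffN\preceq\ol{\mB}_{\eff,\VN}$, which inverts to $\ol{\mB}_{\eff,\VN}^{-1}\preceq\BeffN^{-1}$. I expect no deep obstacle: the one point demanding care is the dual translation in the first step — the implication $\ul{\TA}\preceq\TA\Rightarrow\TA^{-1}\preceq\ul{\TA}^{-1}$, the existence of every inverse, and the well-posedness and symmetric positive definiteness of every perturbed homogenised matrix all tacitly require $\ul{\TA}$ and $\ol{\TA}$ to be a.e. symmetric positive definite and uniformly elliptic (bounded above and below), which is what legitimises applying \eqref{eq:spd_inverse_ineq} at each inversion.
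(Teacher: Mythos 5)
Your proposal is correct and follows essentially the same route as the paper: the new cross-links are obtained from the monotonicity of the Lebesgue integral under the pointwise ordering \eqref{eq:L_approx} applied to a common (sub)space of competitors, the remaining links are exactly the conformity and suboptimal-competitor arguments of Proposition~\ref{lem:struct_bounds} applied to $\ol{\TA}$, and the dual row is recovered through the inverse inequality \eqref{eq:spd_inverse_ineq}. Your explicit remarks on the pointwise inversion $\ul{\TA}\preceq\TA\Rightarrow\TA^{-1}\preceq\ul{\TA}^{-1}$ and on the tacit ellipticity of $\ol{\TA},\ul{\TA}$ are a welcome (if minor) elaboration of what the paper leaves implicit.
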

\begin{proof}
It is possible to prove only the inequalities coming from the primal formulations since the dual part follows from the inverse inequality \eqref{eq:spd_inverse_ineq}. The inequalities $\Aeff \preceq
\AeffN$ and $\ol{\TA}_{\eff} \preceq \ol{\TA}_{\eff,\VN}\preceq \ol{\widetilde{\TA}}_{\eff,\VN}$ have already been proven in Proposition~\ref{lem:struct_bounds} for material coefficients $\TA$ and  $\ol{\TA}$, respectively.
\\
In order to prove the rest, the following inequality is deduced for arbitrary $\Vv\in\mathcal{X}\subseteq\cE$
\begin{align*}
\inf_{\Ve\in\mathcal{X}\subseteq\cE} \bilf{\VE + \Ve}{ \VE + \Ve} \leq \bilf{\VE + \Vv}{ \VE + \Vv} \leq  \obilf{\VE + \Vv}{ \VE + \Vv},
\end{align*}
where \eqref{eq:L_approx} and the monotonicity of the Lebesgue integration are used for the latter inequality. Since the first term is independent of $\Vv$, it is possible to add an infimum, i.e.
$\inf_{\Ve\in\mathcal{X}\subseteq\cE} \bilf{\VE + \Ve}{ \VE + \Ve} \leq \inf_{\Ve\in\mathcal{X}\subseteq\cE} \obilf{\VE + \Ve}{ \VE + \Ve}$.
The proof of $\Aeff\preceq \ol{\TA}_{\eff}$ and $\AeffN\preceq \ol{\TA}_{\eff,\VN}$ now follows for the choices $\mathcal{X} = \cE$ and $\mathcal{X}=\cEN$, respectively.
\end{proof}
\begin{remark}[Choice of $\ol{\TA}$ and $\ul{\TA}$] \label{rem:adjusting_A} To comply with requirement \eqref{eq:L_approx} in the previous lemma, 
a possible choice of material coefficients consists of local approximations with piece-wise constant functions in a grid-based composite \eqref{eq:gridcomp}. This material is then characterized with a pixel- or voxel-based image defined via the following formula
\begin{align*}
\ol{\TA} &= \sum_{\Vp\in\ZPd} \rect_\Vh (\Vx - \Vx_\VP^{\Vp}) \MBC_\VP^{\Vp}
\quad\text{with }
\MBC_\VP^{\Vp} = \norm{\TA(\cdot - \Vx_\VP^\Vp)}{L_\per^{\infty}(\Omega_\Vh;\xRdd)}\cdot\mI,
\end{align*}
where vector $\VP\in\xNd$ denotes an image resolution and where region $\Omega_\Vh = \Pi_\alp \bigl( -\frac{h_\alp}{2},\frac{h_\alp}{2} \bigr)$ for $h_\alp=\frac{1}{P_\alp}$ represents a pixel or voxel placed at the origin with characteristic function $\rect_\Vh$ defined in \eqref{eq:def_cube}. Factor $\norm{\TA(\cdot - \Vx_\VP^\Vk)}{L_\per^{\infty}(\Omega_\Vh;\xRdd)}$ then indicates the largest eigenvalue of material coefficients $\TA(\Vx)\in\xRdd$ over a pixel or voxel $(\Vx_\VP^\Vp+\Omega_\Vh)$ located at the corresponding grid point $\Vx_\VP^\Vp$.
\end{remark}
\begin{remark}The previous approximation of material coefficients, according to Lemma~\ref{lem:bounds_bounds}, leads to guaranteed bounds. The following approximation with piece-wise bilinear functions
\begin{align*}
 \TA(\Vx) \approx \sum_{\Vp\in\ZPd} \tri_\Vh (\Vx - \Vx_\VP^{\Vp})  \TA(\Vx_\VP^\Vp)
 \quad\text{for }h_\alp = \frac{1}{P_\alp}
\end{align*}
enables the closed-form computation of bilinear forms; however, the guaranteed bounds are only approximated.
\end{remark}

\section{Linear systems and computational aspects}
\label{sec:lin_sys}
The focus of this section is on the resolution of minimizers defined by the Galerkin approximations in section~\ref{sec:Galerkin_approx}. Using the results about numerical integration in section \ref{sec:eval_integrals}, the linear systems are described in section~\ref{sec:linear_system} with the help of discretization spaces of trigonometric polynomials, introduced in section~\ref{sec:discrete_sp}. Then, computational aspects are discussed in section~\ref{sec:computational_aspects}.
\subsection{Discretization spaces to trigonometric polynomials}
\label{sec:discrete_sp}
\begin{definition}[Discrete spaces]
Using discretization operator $\IM{\VM}$ and DFT matrix $\MBF$ defined in \eqref{eq:IN} and \eqref{eq:DFT} respectively, the discrete spaces are introduced as
\begin{subequations}
\label{eq:discrete_sp}
\begin{align}
\xUNM &= \IM{\VM}[\cU],
&
\xENM &= \IM{\VM}[\cEN],
&
\xJNM &= \IM{\VM}[\cJN],
\\
\xUNF &= \MBF_\VN[\xUNN],
&
\xENF &= \MBF_\VN[\xENN],
&
\xJNF &= \MBF_\VN[\xJNN].
\end{align}
\end{subequations}
\end{definition}
Thanks to the operator $\IM{\VM}$ being an isometric isomorphism, see Lemma~\ref{lem:IN_iso}, Helmholtz decomposition \eqref{eq:Helmholtz_trig} for trigonometric polynomials is transformed to discrete spaces
\begin{align*}
\IM{\VN}[\cTNd] &= \xUNN \oplus \xENN \oplus \xJNN = \xRdN,
&
\IM{\VM}[\cTNd] &= \xUNM \oplus \xENM \oplus \xJNM \subset \xRdM.
\end{align*}
Now, the discrete projections on subspaces \eqref{eq:discrete_sp} are defined; for better orientation among operators and subspaces, see following diagram
\begin{align}
\label{eq:oper_diagram}
  \begin{array}{ccccccccc}
\xENF & \stackrel{\MBF_\VN}{\longleftarrow} & \xENN & \stackrel{\IM{\VN}}{\longleftarrow} & \cEN & \stackrel{\IM{\VM}}{\longrightarrow} & \xENM & &
 \\  
 \rotatebox{90}{$\stackrel{\MBhGNN{\cE}}{\longrightarrow}$} &  & \rotatebox{90}{$\stackrel{\MBGNN{\cE}}{\longrightarrow}$} &  & \rotatebox{90}{$\stackrel{\mathcal{G}\sub{\cE}}{\longrightarrow}$} &  & \rotatebox{90}{$\stackrel{\MBGNM{\cE}}{\longrightarrow}$} & &
\\
\MBF_\VN[\xRdN] & \stackrel{\MBF_\VN}{\longleftarrow} & \xRdN & \stackrel{\IM{\VN}}{\longleftarrow} & \cTNd & \stackrel{\IM{\VM}}{\longrightarrow} & \IM{\VM}[\cTNd] & \subseteq & \xRdM
\\
\rotatebox{90}{$\stackrel{\MBhGNN{\cJ}}{\longleftarrow}$}  &  & \rotatebox{90}{$\stackrel{\MBGNN{\cJ}}{\longleftarrow}$} &  & \rotatebox{90}{$\stackrel{\mathcal{G}\sub{\cJ}}{\longleftarrow}$} &  & \rotatebox{90}{$\stackrel{\MBGNM{\cJ}}{\longleftarrow}$} & &
\\
\xJNF & \stackrel{\MBF_\VN}{\longleftarrow} & \xJNN & \stackrel{\IM{\VN}}{\longleftarrow} & \cJN & \stackrel{\IM{\VM}}{\longrightarrow} & \xJNM & &
  \end{array}.
\end{align}
\begin{definition}[Discrete projections]
\label{def:discrete_proj}
Let $\VN\in\xRd$ satisfy the odd grid assumption \eqref{eq:N_odd}, $\VM\in\xRd$ be a vector such that $M_\alp\geq N_\alp$ for all $\alp$.
Then for $\bullet\in\{\cU,\cE,\cJ\}$, matrices $ \MBhGNM{\bullet},\MBGNM{\bullet}\in\xMM$ are defined as
\begin{align*}
 \Bigl(\MBhGNM{\bullet}\Bigr)^{\Vk\Vl} &= 
 \begin{cases}
  \mhG^\bullet(\Vk)\del_{\Vk\Vl},
  &
  \text{for }\Vk,\Vl\in\ZNd
  \\
  \Vo
  &
  \text{for }\Vk,\Vl\in\ZMd\setminus\ZNd
 \end{cases},
 &
 \MBGNM{\bullet} &= \MBFi_\VM \MBhGNM{\bullet} \MBF_\VM.
\end{align*}
where the matrices $\mhG^\bullet(\Vk)\in\xRdd$ are Fourier coefficients of continuous projections introduced in Definition \ref{def:projection}, and where $\MBF_\VM$ is the DFT matrix from \eqref{eq:DFT}.
\end{definition}
\begin{lemma}[Discrete projections]
For $\bullet\in\{\cU,\cE,\cJ\}$,
\begin{enumerate}
\item operators $\MBhG_{\VN,\VN}^{\bullet}:\xCdN\rightarrow\xCdN$ are orthogonal projections on $\xUNF,\xENF$, and $\xJNF$,
\item operators
 $\MBGNM{\bullet}:\xRdM\rightarrow\xRdM$ are orthogonal projections on $\xUNM,\xENM$, and $\xJNM$.
\end{enumerate}
\end{lemma}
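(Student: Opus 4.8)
The plan is to reduce the whole statement to two facts: the Fourier-domain operators are block-diagonal with orthogonal-projection blocks, and the DFT is a discrete Parseval isometry. First I would read off from Definition~\ref{def:discrete_proj} that $\MBhGNM{\bullet}$ is block-diagonal in the Fourier index, its only nonzero blocks being $\mhG^\bullet(\Vk)$ for $\Vk\in\ZNd$ on the diagonal. Each block is a real symmetric idempotent matrix: for $\bullet=\cU$ the block is $\mI$ or $\Vo\otimes\Vo$; for $\bullet\in\{\cE,\cJ\}$ and $\Vk\neq\Vo$ the identity $(\Vk\otimes\Vk)(\Vk\otimes\Vk)=(\Vk\cdot\Vk)(\Vk\otimes\Vk)$ shows that $\frac{\Vk\otimes\Vk}{\Vk\cdot\Vk}$ and its complement $\mI-\frac{\Vk\otimes\Vk}{\Vk\cdot\Vk}$ are orthogonal projection matrices. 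Because the Hermitian product $\scal{\cdot}{\cdot}_{\xhXM}$ pairs only equal frequencies, a block-diagonal operator with symmetric idempotent blocks is itself idempotent and self-adjoint, hence an orthogonal projection on $\xhXM$. Specializing to $\VM=\VN$ already delivers the projection property claimed in part~(i).

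To pass to the spatial operators of part~(ii) I would conjugate by the DFT. Using the orthogonality relation \eqref{eq:DFT_coef} one checks the discrete Parseval identity $\scal{\MBu}{\MBv}_{\xXM}=\scal{\MBF_\VM\MBu}{\MBF_\VM\MBv}_{\xhXM}$, so that $\MBF_\VM$ is unitary between the two inner products and $\MBFi_\VM$ is its inverse. Since $\MBGNM{\bullet}=\MBFi_\VM\MBhGNM{\bullet}\MBF_\VM$, idempotency of $\MBhGNM{\bullet}$ transfers verbatim, while self-adjointness transfers through the Parseval identity; thus $\MBGNM{\bullet}$ is an orthogonal projection on $\xRdM$.

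It then remains to identify the ranges, which I would do with the commutative diagram \eqref{eq:oper_diagram}. Its key instance is $\MBGNM{\bullet}\,\IM{\VM}[\Vu_\VN]=\IM{\VM}[\mathcal{G}\sub{\bullet}[\Vu_\VN]]$ for $\Vu_\VN\in\cTNd$: indeed $\MBF_\VM\IM{\VM}[\Vu_\VN]$ is supported on $\ZNd$ by \eqref{eq:trig_conn}, so applying the blocks $\mhG^\bullet(\Vk)$ reproduces exactly the Fourier coefficients of $\mathcal{G}\sub{\bullet}[\Vu_\VN]$. By Lemma~\ref{lem:Helmholtz_decomp} the continuous operators $\mathcal{G}\sub{\bullet}$ are orthogonal projections onto $\cU,\cE,\cJ$, and since $\mathcal{G}\sub{\bullet}[\cTNd]\subset\cTNd$ a standard argument (restriction of an orthogonal projection to an invariant subspace projects onto the intersection) shows their restrictions to $\cTNd$ project onto $\cU$, $\cEN=\cE\cap\cTNd$, and $\cJN=\cJ\cap\cTNd$. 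Applying $\IM{\VM}$ carries these ranges onto $\xUNM,\xENM,\xJNM$; the reverse inclusion $\mathrm{range}(\MBGNM{\bullet})\subseteq\IM{\VM}[\cTNd]$ is immediate because $\MBhGNM{\bullet}$ annihilates every frequency outside $\ZNd$. The same reasoning with $\VM=\VN$ followed by $\MBF_\VN$ identifies the ranges $\MBF_\VN[\xUNN]=\xUNF$, $\xENF$, $\xJNF$ in part~(i).

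I expect the range identification, rather than the algebraic projection properties, to be the delicate step. Two points need care: in the case $\VM>\VN$ one must keep track of the fact that $\MBGNM{\bullet}$ is a projection on the whole of $\xRdM$ yet kills the high-frequency complement of $\IM{\VM}[\cTNd]$, so its range still lands precisely in the low-frequency discrete subspaces; and for part~(i) one should note that $\mhG^\bullet(-\Vk)=\mhG^\bullet(\Vk)$, so $\MBhGNN{\bullet}$ preserves the Hermitian symmetry of $\MBF_\VN[\xRdN]$ and its range on that physically relevant data is exactly $\xUNF$, $\xENF$, $\xJNF$. Once this frequency bookkeeping is in place the statement follows.
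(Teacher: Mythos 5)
Your proposal is correct and follows essentially the same route as the paper's proof: the projection property is verified by a direct calculation inherited from the continuous projections of Definition~\ref{def:projection} (block-diagonal symmetric idempotent blocks, transferred by the unitarity of the DFT), and the ranges are identified through the conjugation identities $\MBGNM{\bullet}=\IM{\VM}\circ\mathcal{G}\sub{\bullet}\circ\IMi{\VM}$ and $\MBhGNN{\bullet}=\MBF_\VN\circ\IM{\VN}\circ\mathcal{G}\sub{\bullet}\circ\IMi{\VN}\circ\MBFi_\VN$ encoded in the diagram \eqref{eq:oper_diagram}. You merely spell out the frequency bookkeeping and Hermitian-symmetry checks that the paper leaves implicit.
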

\begin{proof}
The fact that operators $\MBhG_{\VN,\VN}^{\bullet}$ and  $\MBGNM{\bullet}$ are mutually orthogonal projections follows from direct calculation; the properties are inherited from continuous projections in Definition~\ref{def:projection}. The images of individual projections follow from the properties of operators $\IM{\VM}$, $\MBF_\VN$, $\mathcal{G}\sub{\bullet}$
along with the definition of subspaces \eqref{eq:discrete_sp} and \eqref{eq:Helmholtz_trig}. Indeed, the discrete projections can be expressed as
\begin{align*}
\MBhG_{\VN,\VN}^{\bullet}\FT{\MBv}_\VN &= \MBF_\VN\circ\IM{\VN}\circ\mathcal{G}\sub{\bullet}\circ\IMi{\VN}\circ\MBFi_\VN\FT{\MBv}_\VN,
& 
\MBGNM{\bullet}\MBv_\VN &= \IM{\VM}\circ\mathcal{G}\sub{\bullet}\circ\IMi{\VM}\MBv_\VN
\end{align*}
for $\FT{\MBv}_\VN\in\MBF_\VN[\xRdN]$ and $\MBv_\VN\in\IM{\VM}[\cTNd]$; see the structure of operators and subspaces \eqref{eq:oper_diagram}.
\end{proof}
\subsection{Linear systems}
\label{sec:linear_system}
This section deals with resolutions of discrete minimizers from linear systems. This topic has already been studied in \cite[section~5]{VoZeMa2014FFTH} and \cite[section~7]{VoZeMa2014GarBounds} for the GaNi scheme \eqref{eq:GaNi}.
Here, the concept is summarized and extended to the Ga scheme \eqref{eq:Ga}.
\begin{proposition}[From minimization to linear system]
\label{lem:min2linear_sys}
Let $\cH$ be a Hilbert space with a nontrivial orthogonal
decomposition $\cH = \mathring{\cU} \oplus \mathring{\cE} \oplus \mathring{\cJ}$, where $\mathring{\cU}$ is isometrically isomorphic with $\xRd$. Next, let bilinear form $\bilfG{}{}:\cH\times\cH\rightarrow\xR$ be defined as $\bilfG{\Vu}{\Vv} = \scal{\rTA\Vu}{\Vv}_{\cH}$,
for the symmetric, coercive, and bounded linear operator $\rTA:\cH\rightarrow\cH$, i.e. there exist $c_{\rTA}>0$ and $C_{\rTA}>0$ such that $c_{\rTA} \|\Vu\|_{\cH} \leq \scal{\rTA\Vu}{\Vu}_{\cH} \leq C_{\rTA} \|\Vu\|_{\cH}$ for all $\Vu\in\cH$.
Then a problem for $\VE\in\rcU$ to find a minimizer $\rVe\mac{\VE}\in\rcE$ of
\begin{subequations}
\begin{align}\label{eq:q2l_minimization}
  \rVe\mac{\VE} = \argmin_{\rVe\in\rcE} \bilfG{\VE + \rVe}{\VE + \rVe}
\end{align}
is equivalent to finding the solution $\rVe\mac{\VE}\in\rcE$ of the following equation in $\cH$
\begin{align}\label{eq:q2l_equation}
 \mathring{\T{G}}\rTA\rVe\mac{\VE} = - \mathring{\T{G}}\rTA\VE,
\end{align}
\end{subequations}
where $\mathring{\T{G}}$ is an orthogonal projection on $\rcE$.
\end{proposition}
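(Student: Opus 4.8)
The plan is to recognize \eqref{eq:q2l_minimization}--\eqref{eq:q2l_equation} as the classical equivalence between a quadratic minimization over a closed subspace and its associated Galerkin (variational) equation, and to prove it in both directions. First I would settle existence and uniqueness of the minimizer. The functional $J(\rVe)=\bilfG{\VE+\rVe}{\VE+\rVe}$ is, by symmetry, boundedness, and coercivity of $\rTA$, a strictly convex, continuous, and coercive quadratic on the closed subspace $\rcE$ of the complete space $\cH$; hence either the direct method or, equivalently, the Lax--Milgram theorem applied to the bilinear form $\bilfG{\cdot}{\cdot}$ on $\rcE\times\rcE$ with right-hand side $\Vw\mapsto-\bilfG{\VE}{\Vw}$ yields a unique $\rVe\mac{\VE}\in\rcE$. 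Completeness of $\rcE$ as a closed summand of $\cH=\rcU\oplus\rcE\oplus\rcJ$ is what makes $\mathring{\T{G}}$ a well-defined orthogonal projection and is implicitly used throughout.

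Next I would derive the first-order optimality condition. Since $J$ is a symmetric quadratic, its Gateaux derivative at $\rVe\mac{\VE}$ in an admissible direction $\Vw\in\rcE$ equals $2\bilfG{\VE+\rVe\mac{\VE}}{\Vw}$, so stationarity of the convex functional is equivalent to the Galerkin orthogonality
\begin{align*}
\bilfG{\VE+\rVe\mac{\VE}}{\Vw}=0\quad\text{for all }\Vw\in\rcE.
\end{align*}
Rewriting the bilinear form via $\bilfG{\Vu}{\Vv}=\scal{\rTA\Vu}{\Vv}_{\cH}$, this states precisely that $\rTA(\VE+\rVe\mac{\VE})$ is orthogonal to $\rcE$ in $\cH$. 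Because $\mathring{\T{G}}$ is the orthogonal projection with range exactly $\rcE$, orthogonality to $\rcE$ is the same as annihilation by $\mathring{\T{G}}$, i.e. $\mathring{\T{G}}\rTA(\VE+\rVe\mac{\VE})=\Vo$; by linearity of $\mathring{\T{G}}\rTA$ this is exactly the asserted equation \eqref{eq:q2l_equation}.

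Finally I would close the equivalence in the reverse direction: if $\rVe\mac{\VE}\in\rcE$ solves \eqref{eq:q2l_equation}, then unwinding the same identity returns $\bilfG{\VE+\rVe\mac{\VE}}{\Vw}=0$ for all $\Vw\in\rcE$, and convexity of $J$ upgrades this stationarity to global minimality, so $\rVe\mac{\VE}$ is the minimizer. The argument is essentially routine; the only point requiring care --- and what I would regard as the main (modest) obstacle --- is the clean translation between the two formulations of the optimality condition, namely that testing against all $\Vw\in\rcE$ is equivalent to applying $\mathring{\T{G}}$. This hinges on $\mathring{\T{G}}$ being the orthogonal projection onto $\rcE$, together with the fact that convexity (inherited from symmetry and coercivity of $\rTA$) is what permits passing between the variational equation and the minimization in \emph{both} directions, rather than merely extracting a necessary condition.
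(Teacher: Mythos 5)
Your proposal is correct and follows essentially the same route as the paper: pass from the minimization to the Galerkin orthogonality condition $\bilfG{\VE+\rVe\mac{\VE}}{\Vw}=0$ for all $\Vw\in\rcE$, and then use the fact that $\mathring{\T{G}}$ is the orthogonal projection onto $\rcE$ (the paper phrases this as enlarging the test space to $\cH$ via $\mathring{\T{G}}\Vv$ and invoking its self-adjointness) to obtain the operator equation. Your treatment is in fact slightly more complete, since you explicitly record existence/uniqueness and the converse implication via convexity, which the paper leaves implicit.
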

\begin{proof}
The proof starts with an optimality condition of
\eqref{eq:q2l_minimization}, namely
$\bilfG{\rVe\mac{\VE}_\VN}{\Vv} = -\bilfG{\VE}{\Vv}$ for all $\Vv\in\rcE$.
Then, the projection is incorporated in order to enlarge the space of test functions
\begin{align*}
\bilfG{\rVe\mac{\VE}}{\mathring{\T{G}}\Vv} &= -\bilfG{\VE}{\mathring{\T{G}}\Vv}
\quad\forall\Vv\in\cH,
\\
\scal{\mathring{\T{G}}\rTA\rVe\mac{\VE}_\VN}{\Vv}_{\cH} &= -\scal{\mathring{\T{G}}\rTA\VE}{\Vv}_{\cH}
\quad\forall\Vv\in\cH,
\end{align*}
where the orthogonality (symmetry) of 
$\mathring{\T{G}}$ has also been used.
Now, it is possible to remove the scalar product and deduce the required \eqref{eq:q2l_equation}.
\end{proof}
\begin{remark}[Linear systems for the GaNi]
\label{lem:GaNi_linsys}
According to \cite[Proposition~12]{VoZeMa2014FFTH}, with the notation from Remark~\ref{rem:FD_GaNi},
the minimizers $\IM{\VN}[\gVe_\VN\mac{\VE}]=\gMBe_{\VN}\mac{\VE}\in\xENN$ and $\IM{\VN}[\gVj_\VN\mac{\VJ}]=\gMBj_{\VN}\mac{\VJ}\in\xJNN$ for $\VE,\VJ\in\xRd$ in the GaNi scheme \eqref{eq:GaNi} satisfy the following equations
\begin{align}
\label{eq:GaNi_linsys}
\MBGNN{\cE}\MBtA_\VN \gMBe\mac{\VE}_\VN &= -\MBGNN{\cE}\MBtA_\VN\VE,
&
\MBGNN{\cJ}\MBtA_\VN^{-1} \gMBj\mac{\VJ}_\VN &= -\MBGNN{\cJ}\MBtA_\VN^{-1}\VJ,
\end{align}
where $\MBGNN{\cE}$ and $\MBGNN{\cJ}$ are projection matrices from Def.~\ref{def:discrete_proj} and $\MBtA_\VN^{\Vk\Vm}=\del_{\Vk\Vm}\FT{\TA}(\Vk)$ for $\Vk,\Vm\in\ZNd$.
\end{remark}
\dlt{\begin{remark}[Pocet FFT]
reseni v $\xC$:
\begin{align}
\MBhGNN{\cE}2^{-d} \sum_{\Vs\in\xSd} \MBS_\VN^*(\Vs) 
\MBF_\VN 
\MBA_{\VN}
(\Vs)\MBFi_\VN\MBS_\VN(\Vs)
\end{align}
pocet FFT: $2^{d+1}$, tj. pro $d=2,3$ dostavame: $8, 16$
\\
reseni v $\xR$:
\begin{align}
\MBFi_\VN \MBGNN{\cE} 2^{-d}\left(\MBF_\VN 
\MBA_{\VN}
(\Vs) + \sum_{\Vs\in\xSd\setminus\{\Vo\}}  \MBS_\VN^*(\Vs) 
\MBF_\VN 
\MBA_{\VN}
(\Vs)\MBFi_\VN\MBS_\VN(\Vs) \MBF_\VN \right),
\end{align}
pocet FFT: $3\cdot (2^{d} - 1) + 2$, tj. pro $d=2,3$:  dostavame: $11, 23$
\end{remark}}

\begin{corollary}[Linear systems for the Ga]
\label{lem:Ga_linsys}
Let $\MBGNM{\bullet},\MBhGNM{\bullet}$
for $\bullet\in\{\cE,\cJ\}$ 
be projection matrices from Definition~\ref{def:discrete_proj}.
Then, for the minimizers $\Ve_\VN\mac{\VE}\in\cEN$ and $\Vj_\VN\mac{\VJ}\in\cJN$ of the Ga scheme \eqref{eq:Ga}  for $\VE,\VJ\in\xRd$, the following hold:
\begin{enumerate}
 \item For $\VM=\tVNr$, the minimizers $\MBe_\VN\mac{\VE}:=\IM{\VM}[\Ve_\VN\mac{\VE}]\in\xENM$ and $\MBj_\VN\mac{\VJ}:=\IM{\VM}[\Vj_\VN\mac{\VJ}]\in\xJNM$ satisfy
\begin{subequations}
\label{eq:Ga_linsys}
\begin{align}
\label{eq:Ga_linsys_orig}
\MBGNM{\cE} \MBA_\VM \MBe\mac{\VE}_\VN &= -\MBGNM{\cE} \MBA_\VM\VE,
&
\MBGNM{\cJ} \MBB_\VM \MBj\mac{\VJ}_\VN &= -\MBGNM{\cJ} \MBB_\VM\VJ,
\end{align}
where $\MBA_\VM, \MBB_\VM\in\xMM$ are defined in \eqref{eq:MBA_GA}, particularly in \eqref{eq:inclcomp_formula} and \eqref{eq:gridcomp_formula} for an inclusion-based \eqref{eq:inclcomp} and grid-based \eqref{eq:gridcomp} composite.
\item The minimizers $\FT{\MBe}_\VN\mac{\VE}:=\MBF_\VN\IM{\VN}[\Ve_\VN\mac{\VE}]\in\xENF$ and $\FT{\MBj}_\VN\mac{\VJ}:=\MBF_\VN\IM{\VN}[\Vj_\VN\mac{\VJ}]\in\xJNF$ satisfy
\begin{align}
\label{eq:Ga_linsys_R}
\MBhGNN{\cE} \MBhA_\VN^\full \FT{\MBe}\mac{\VE}_\VN &= -\MBhGNN{\cE} \MBhA_\VN^\full\FT{\VE},
&
\MBhGNN{\cJ}\FT{\MBB}_\VN^\full \FT{\MBj}\mac{\VJ}_\VN &= -\MBhGNN{\cJ}\FT{\MBB}_\VN^\full\FT{\VJ},
\end{align}
\end{subequations}
where matrices $\MBhA_\VN^\full$, $\FT{\MBB}_\VN^\full\in\xhMN$ are defined according to sparse decomposition \eqref{eq:Afull_decomp_Fourier}.
\end{enumerate}
 \end{corollary}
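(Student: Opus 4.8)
The plan is to obtain every system as a direct instance of Proposition~\ref{lem:min2linear_sys}, by exhibiting, for each formulation in \eqref{eq:Ga}, a concrete Hilbert space together with its Helmholtz splitting and the matrix representing the bilinear form on trigonometric polynomials. For the primal system in \eqref{eq:Ga_linsys_orig} I take $\cH=\IM{\VM}[\cTNd]$ with $\VM=\tVNr$, endowed with $\scal{\cdot}{\cdot}_{\xRdM}$ and the orthogonal decomposition $\cH=\xUNM\oplus\xENM\oplus\xJNM$ transported from \eqref{eq:Helmholtz_trig} by the isometry $\IM{\VM}$ of Lemma~\ref{lem:IN_iso}; since $\xUNM=\IM{\VM}[\cU]$ is isometrically isomorphic to $\xRd$, the hypothesis on $\rcU$ is met. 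The dual case is identical with $\cEN,\xENM,\MBA_\VM,\MBGNM{\cE}$ replaced by $\cJN,\xJNM,\MBB_\VM,\MBGNM{\cJ}$.

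First I would check the operator hypotheses. By Lemma~\ref{lem:integ_eval_VM}, for $\VM=\tVNr$ one has $\bilf{\Vu_\VN}{\Vv_\VN}=\scal{\MBA_\VM\MBu_\VN}{\MBv_\VN}_{\xRdM}$; the form is symmetric because $\TA$ is, and carrying the ellipticity bounds \eqref{eq:A} through the isometry $\IM{\VM}$ makes it coercive with constant $\cA$ and bounded with constant $\CA$. The self-adjoint operator required by the proposition is the compression $P_\cH\MBA_\VM$ onto $\cH$ (needed because $\MBA_\VM$ does not leave $\cH$ invariant). Proposition~\ref{lem:min2linear_sys} then delivers $\mathring{\T{G}}\,P_\cH\MBA_\VM\,\MBe_\VN\mac{\VE}=-\mathring{\T{G}}\,P_\cH\MBA_\VM\,\VE$, with $\mathring{\T{G}}$ the orthogonal projection onto $\xENM$, i.e.\ $\MBGNM{\cE}$; the minimizer is $\MBe_\VN\mac{\VE}=\IM{\VM}[\Ve_\VN\mac{\VE}]$ because $\IM{\VM}$ maps the minimization over $\cEN$ onto that over $\xENM$, and $\VE$ is read as the constant field in $\IM{\VM}[\cU]\subset\cH$.

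The step I expect to require the most care, and the only non-routine point, is discarding the compression so that $\mathring{\T{G}}\,P_\cH\MBA_\VM$ collapses to $\MBGNM{\cE}\MBA_\VM$ and the equation becomes exactly \eqref{eq:Ga_linsys_orig}. This follows from the observation that, in the DFT domain, both $P_\cH$ and $\MBhGNM{\cE}$ annihilate every frequency in $\ZMd\setminus\ZNd$ (no aliasing being guaranteed by $M_\alp\geq N_\alp$ in Lemma~\ref{lem:IN_iso}), whence $\MBGNM{\cE}P_\cH=\MBGNM{\cE}$ on all of $\xRdM$ and the projection $P_\cH$ may be dropped once $\MBGNM{\cE}$ has been applied. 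The dual identity in \eqref{eq:Ga_linsys_orig} is produced by the same reasoning applied to $\bilfi{}{}$, $\MBB_\VM$ and $\MBGNM{\cJ}$.

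For part~(ii) I would run the argument in the grid-$\VN$ realization, where $\VM=\VN$ gives $\IM{\VN}[\cTNd]=\xRdN$ by Lemma~\ref{lem:IN_iso}, so the splitting $\xRdN=\xUNN\oplus\xENN\oplus\xJNN$ is exact and no compression is needed. Representing $\bilf{}{}$ by the full matrix $\MBA_\VN^\full$ of Lemma~\ref{lem:FD_matrix_full}, Proposition~\ref{lem:min2linear_sys} gives $\MBGNN{\cE}\MBA_\VN^\full\MBe_\VN\mac{\VE}=-\MBGNN{\cE}\MBA_\VN^\full\VE$; conjugating by the unitary DFT $\MBF_\VN$, which sends $\MBA_\VN^\full\mapsto\MBhA_\VN^\full$, $\MBGNN{\cE}\mapsto\MBhGNN{\cE}$ and $\MBe_\VN\mac{\VE}\mapsto\FT{\MBe}_\VN\mac{\VE}$, yields \eqref{eq:Ga_linsys_R}. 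Equivalently the proposition may be applied directly on $\cH=\MBF_\VN[\xRdN]$ with the inner product $\scal{\cdot}{\cdot}_{\xhXN}$. The dual system follows symmetrically, with $\FT{\MBB}_\VN^\full$ assembled through \eqref{eq:Afull_decomp_Fourier}.
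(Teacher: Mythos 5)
Your proposal is correct and follows exactly the route the paper intends: the corollary is stated without proof as a direct instantiation of Proposition~\ref{lem:min2linear_sys}, using Lemma~\ref{lem:integ_eval_VM} (resp.\ Lemma~\ref{lem:FD_matrix_full}) to represent the bilinear forms and Lemma~\ref{lem:IN_iso} together with Definition~\ref{def:discrete_proj} to transport the Helmholtz splitting and projections. Your explicit treatment of the compression $P_{\cH}\MBA_\VM$ on the proper subspace $\IM{\VM}[\cTNd]\subsetneq\xRdM$, and the observation that $\MBGNM{\cE}P_{\cH}=\MBGNM{\cE}$ lets you drop it, correctly fills the one detail the paper leaves implicit.
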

\subsection{Computational and implementation issues}
\label{sec:computational_aspects}
Here, practical aspects regarding the resolution of minimizers from linear systems are discussed.
\begin{remark}[Solution by conjugate gradients]
\label{rem:CG}
The discrete problems, see the Ga \eqref{eq:Ga} and GaNi \eqref{eq:GaNi} schemes, can be effectively solved with Krylov subspace methods \cite{Trefethen1997NLA,Saad2003IMSL}, particularly conjugate gradients \cite[Algorithm~6.18]{Saad2003IMSL}. It was pointed out in \cite{ZeVoNoMa2010AFFTH,Brisard2010FFT} and explained by variational reformulation in \cite{VoZeMa2012LNSC,VoZeMa2014FFTH} for the GaNi scheme \eqref{eq:GaNi}.

Using the general notation from Proposition~\ref{lem:min2linear_sys}, the minimization problems of both discrete schemes \eqref{eq:Ga} and \eqref{eq:GaNi} rely on the quadratic functional \eqref{eq:q2l_minimization} with a symmetric and positive definite matrix.
Thus, conjugate gradients (CG) can be employed as the minimization over subspace $\rcE$ is carried out with projection operator $\mathring{\T{G}}$.

The minimization process also corresponds to the solution of the linear system \eqref{eq:q2l_equation} with an initial approximation $\rMBe_{(0)}\mac{\VE}$ from the minimization space $\rcE$, which ensures that a residual vector
\begin{align*}
 \MB{r}\iter{k} = -\mathring{\T{G}}\rTA\rVe\mac{\VE}\iter{k} - \mathring{\T{G}}\rTA\VE
\end{align*}
is from the subspace $\rcE$ for arbitrary $k$-th iteration.
Then, the CG algorithm is interpreted as a minimization
\begin{align}
\label{eq:minim_krylov}
  \rVe\mac{\VE}\iter{k} = \argmin_{\rVe\in\rVe\mac{\VE}\iter{0}+\set{K}_{(i)}} \bilfG{\VE + \rVe}{\VE + \rVe}
\end{align}
over Krylov subspaces defined for $i=1,2,\dotsc$ as
\begin{align*}
\set{K}_{(i)} 
=
\mathrm{span}
\Bigl\{ \MB{r}\iter{0}, 
  \mathring{\T{G}} \rTA
  \MB{r}\iter{0}, 
  \ldots, 
  (\mathring{\T{G}} \rTA)^{i-1} \MB{r}\iter{0} 
\Bigr\}
\quad\text{satisfying }
\set{K}_{(i)} \subseteq \set{K}_{(i+1)} \subseteq\rcE=\mathring{\T{G}}[\cH].
\end{align*}
The application of the CG algorithm only requires the implementation of the matrix-vector multiplication of the linear system.  For the GaNi \eqref{eq:GaNi_linsys} and the Ga \eqref{eq:Ga_linsys}, it is outlined in Algorithms~\ref{alg:GaNi}, \ref{alg:Ga_orig}, and \ref{alg:Ga_reduced}.
\end{remark}
\begin{algorithm}
\caption{
Matrix-vector multiplication for the primal formulation in the GaNi \eqref{eq:GaNi_linsys}}
\label{alg:GaNi}
\begin{algorithmic}[1]
\Require
{$\MBA\in\xR^{d\times d\times\VN}\leftarrow$ nonzero elements of  $\MBtA_\VN\in\xMN$
}
\Comment{$\M{A}_{\alp,\beta}^{\Vk}=A_{\alp\beta}(\Vx_\VN^\Vk)$ for $\alp,\beta=1,\dotsc,d$ and $\Vk\in\ZNd$; optionally $\MBhG\in\xR^{d\times d\times\VN}\leftarrow$ nonzero elements of $\MBhGNN{\cE}\in\xMN$}
\Procedure{multiplication}{$\MBA$, $\MBx$}\Comment{calculates $\MBy=\MBGNN{\cE}\MBtA_\VN\MBx\in\xRdN$ for $\MBx\in\xRdN$}
\State{$\MBy\leftarrow\MBtA_\VN\MBx$}
\Comment{$\My_{\alp}^{\Vk} = \sum_{\beta} \MA_{\alp\beta}^\Vk \Mx_{\beta}^\Vk$ for all $\alp$ and $\Vk\in\ZNd$}
\State{$\MBy\leftarrow \MBF_\VN\MBy$}
\Comment{$\MBy_\alp = \mathrm{FFT}_\VN(\MBy_\alp)$ for all $\alp$ with an FFT of size $\VN$}
\State{$\MBy\leftarrow\MBhGNN{\cE}\MBy$}
\Comment{$\My_{\alp}^{\Vk} = \sum_{\beta} \frac{k_\alp k_\beta}{\|\Vk\|_\xRd}
\My_{\beta}^\Vk$ for all $\alp$ and $\Vk\in\ZNd$}
\State{$\MBy\leftarrow \MBFi_\VN\MBy$}
\Comment{$\MBy_\alp = \mathrm{iFFT}_\VN(\MBy_\alp)$ for all $\alp$ with an inverse FFT of size $\VN$}
%
\EndProcedure{\textbf{: return }$\MBy$}
\end{algorithmic}

\end{algorithm}
\begin{algorithm}
\caption{
Matrix-vector multiplication for the primal formulation in the double grid Ga \eqref{eq:Ga_linsys_orig}}
\label{alg:Ga_orig}
\begin{algorithmic}[1]
\Require
{$\MBA\in\xR^{d\times d\times\VM}\leftarrow$ nonzero elements of $\MBA_\VM\in\xMM$ for $\VM=\tVNr$}
\Comment{for an evaluation, see Remark~\ref{rem:eval_with_FFT};
optionally $\MBhG\in\xR^{d\times d\times\VN}\leftarrow$ nonzero elements of $\MBhGNM{\cE}\in\xMM$}
\Procedure{multiplication}{$\MBA$, $\MBx$}\Comment{calculates $\MBy=\MBGNM{\cE}\MBA_\VM\MBx\in\xRdM$ for $\MBx\in\xRdM$}
\State{$\MBy\leftarrow\MBA_\VM\MBx$}
\Comment{$\My_{\alp}^{\Vk} = \sum_{\beta} \MA_{\alp\beta}^\Vk \Mx_{\beta}^\Vk$ for $\alp=1,\dotsc,d$ and  $\Vk\in\ZMd$}
\State{$\MBy\leftarrow\MBF_\VM\MBy$}
\Comment{$\MBy_\alp = \mathrm{FFT}_\VM(\MBy_\alp)$ for all $\alp$ with an FFT of size $\VM$}
\State{$\MBy\leftarrow\MBhGNM{\cE}\MBy$}
\Comment{$\My_{\alp}^{\Vk} = \sum_{\beta} \frac{k_\alp k_\beta}{\|\Vk\|_\xRd}
\My_{\beta}^\Vk$ for $\Vk\in\ZNd$ and $\MBy_{\alp}^{\Vk}=0$ for $\Vk\in\ZMd\setminus\ZNd$ and all $\alp$}
\State{$\MBy\leftarrow \MBFi_\VM\MBy$}
\Comment{$\MBy_\alp = \mathrm{iFFT}_\VM(\MBy_\alp)$ for all $\alp$ with an inverse FFT of size $\VM$}
\EndProcedure{\textbf{: return }$\MBy$}
\end{algorithmic}
\end{algorithm}
\begin{algorithm}
\caption{Matrix-vector multiplication for the primal formulation in the reduced Ga \eqref{eq:Ga_linsys_R}}
\label{alg:Ga_reduced}
\begin{algorithmic}[1]
\Require
{$\MBA\in\xR^{d\times d\times\tVN}\leftarrow$ nonzero elements of $\MBA_\tVN\in\xMtN$}
\Comment{for an evaluation, see Remark~\ref{rem:eval_with_FFT};
optionally $\MBhG\in\xR^{d\times d\times\VN}\leftarrow$ nonzero elements of $\MBhGNN{\cE}\in\xMN$}
\Procedure{multiplication}{$\MBA$, $\MBx$}\Comment{calculates $\MBy=\MBhGNN{\cE}\MBhA^{\mathrm{full}}_\VN\MBx\in\xCdN$}
for $\MBx\in\xCdN$
\State{$\MBy\leftarrow\Vo\in\xCdN$}
\For{$\Vs\in \xS^d=\{0,1\}^d$}
\State{$\MB{z}\leftarrow\MBS_\VN(\Vs)\MBx$}
\Comment{$\M{z}_\alp^\Vk=\omega_\tVN^{-\Vs\Vk}\Mx_\alp^\Vk$ for all $\alp$ and $\Vk\in\ZNd$}
\State{$\MB{z}\leftarrow \MBFi_\VN\MB{z}$}
\Comment{$\MB{z}_\alp = \mathrm{iFFT}(\MB{z}_\alp)$ for all $\alp$ with an inverse FFT of size $\VN$}
\State{$\MB{z}\leftarrow\MBA_\VN(\Vs)\MB{z}$}
\Comment{$\M{z}_{\alp}^{\Vk} = \sum_{\beta} \MA_{\alp\beta}^{2\Vk-\Vs}\M{z}_{\beta}^{\Vk}$ for all $\alp$ and $\Vk\in\ZNd$}
\State{$\MB{z}\leftarrow \MBF_\VN\MB{z}$}
\Comment{$\MB{z}_\alp = \mathrm{FFT}(\MB{z}_\alp)$ for all $\alp$ with an FFT of size $\VN$}
\State{$\MBy\leftarrow\MBy+2^{-d}\MBS_\VN^*(\Vs)\MB{z}$}
\Comment{$\My_\alp^\Vk=\My_\alp^\Vk+2^{-d}\omega_\tVN^{\Vs\Vk}\M{z}_\alp^\Vk$ for all $\alp$ and $\Vk\in\ZNd$}
\EndFor
\State{$\MBy\leftarrow\MBhGNN{\cE}\MBy$}
\Comment{$\My_{\alp}^{\Vk} = \sum_{\beta} \frac{k_\alp k_\beta}{\|\Vk\|_\xRd}
\My_{\beta}^\Vk$ for all $\alp$ and $\Vk\in\ZNd$}
\EndProcedure{\textbf{: return }$\MBy$}
\end{algorithmic}
\end{algorithm}
\begin{remark}[Memory and computational requirements]
\label{rem:mem_req}
For the approximation order $\VN$ of trigonometric polynomials, the linear systems in \eqref{eq:GaNi_linsys} and \eqref{eq:Ga_linsys} have different sizes leading to different memory requirements, see~Table~\ref{tab:memory}.
\JV{Noting that memory demands can be further reduced by incorporating the symmetry of coef. matrix $\MBA$ and by calculating projection matrix $\MBhG$ instead of storing, when needed. 
Despite the different sizes of linear systems,} the number of independent unknowns remains the same and is equal to the dimension of approximation spaces $\cEN,\cJN$ or their discrete relatives $\xENN,\xJNN$.
The subspace for primal formulations has a dimension $\dim\xEN=\pVN-1$ since this can be expressed using potential with zero-mean. Because the dimensions of constant fields and the whole space are $\dim\xUN=d$ and $\dim\xRdN=d\pVN$, the dimension of the dual space is equal to $\dim\xJN = (d-1)(\pVN-1)$.
\\
The linear systems for GaNi \eqref{eq:GaNi_linsys} and for Ga \eqref{eq:Ga_linsys_orig} possess exactly the same mathematical structure with a block-diagonal matrix of material coefficients (for isotropic material, only diagonal); see Algorithms~\ref{alg:GaNi} and~\ref{alg:Ga_orig}; compare Remark~\ref{rem:FD_GaNi} with
Lemma~\ref{lem:integ_eval_VM}. However, the Ga has a double size of vectors and matrices in the linear system. The corresponding higher memory and computational requirements are outperformed with higher accuracy for the Ga scheme, see section~\ref{sec:numerical_errors_for_phase_contrast} for a comparison.
\\
In accordance with Lemma~\ref{lem:reduce_grid}, the reduced Ga scheme \eqref{eq:Ga_linsys_R} benefits from the size reduction of an unknown vector, which is amplified when more vectors are stored (conjugate gradients, nonlinear problems and solvers, etc.). Furthermore, the computational requirements remain approximately the same, which is illustrated in Figure~\ref{fig:Ax_speed}.
\begin{table}
\caption{Memory requirements (no. of components stored in linear systems for anisotropic material coef.)}
\centering
\begin{tabular}{llll}
\toprule
 & GaNi \eqref{eq:GaNi_linsys} & Ga \eqref{eq:Ga_linsys_orig} & Ga reduced \eqref{eq:Ga_linsys_R} \\
\midrule
unknowns/right-hand side & $d\pVN$ & $d\ptVNr\approx d2^d\pVN$ & $d\pVN$ \\
matrix of material coef. $\MBA$ & $d^2\pVN$ & $d^2\ptVNr\approx d^22^d\pVN$ & $d^2\ptVN= d^22^d\pVN$ 
\\
projection matrix $\MBhG$ & $d^2\pVN$ & $d^2\pVN$ & $d^2\pVN$ 
\\
\midrule
\JV{
indepen. unknowns in primal form.} & $\pVN-1$ & $\pVN-1$ & $\pVN-1$ \\
\JV{indepen. unknowns in dual form.} & $(d-1)(\pVN-1)$ & $(d-1)(\pVN-1)$ & $(d-1)(\pVN-1)$ \\
\bottomrule

\end{tabular}
\label{tab:memory}
\end{table}
\end{remark}
\begin{figure}[htp]
\centering
\subfigure{
\includegraphics[scale=0.6]{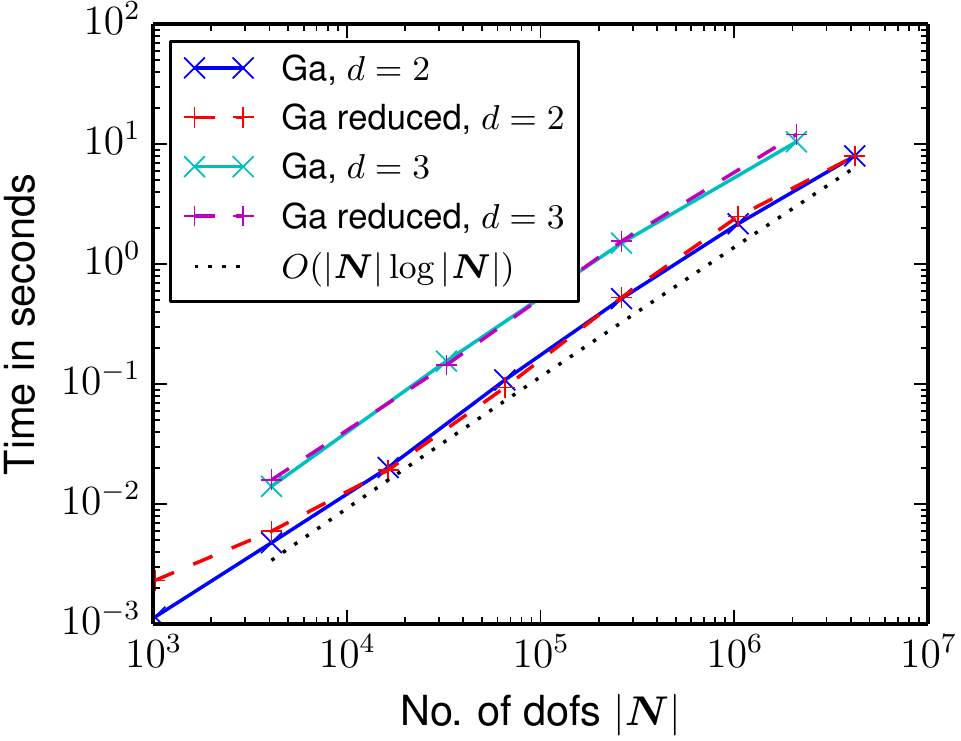}
}
\subfigure{
\includegraphics[scale=0.6]{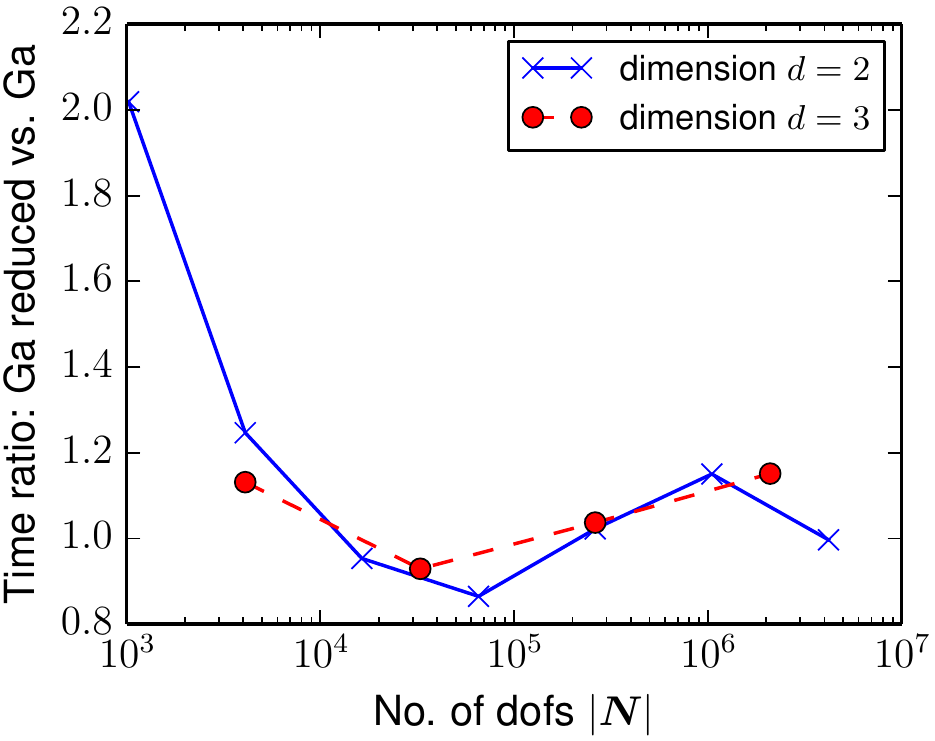}
}
\caption{Comparison of matrix-vector multiplication for Ga \eqref{eq:Ga_linsys_orig} with its reduced version \eqref{eq:Ga_linsys_R}}
\label{fig:Ax_speed}
\end{figure}
\begin{remark}[Evaluation of material coefficients matrices with FFT]
\label{rem:eval_with_FFT}
The matrix \eqref{eq:gridcomp_formula} derived for the grid-based composite \eqref{eq:gridcomp} can be evaluated efficiently using the FFT algorithm; for the inclusion-matrix composite \eqref{eq:inclcomp}, the effective evaluation of \eqref{eq:inclcomp_formula} was discussed in \cite[Remark~48]{VoZeMa2014FFTH}.

In \eqref{eq:gridcomp_formula}, the sum over $\ZPd$ and $\ZtNrd$ is provided by $\VP$-sized FFT and $\VM$-sized inverse FFT algorithm resp., whereas the factor $\FT{\psi}(\Vm)$ for $\Vm\in\ZtNrd$ occurs as an element-wise multiplication. However, for $\VP\neq \tVNr$, the additional treatment has to be provided. For $P_\alp>2N_\alp-1$, the vector
\begin{align*}
\Bigl( \sum_{\Vp\in \ZPd} \frac{\omega_{\VP}^{-\Vm\Vp}}{\pVP} 
 \M{C}_{\VP,\alp\beta}^{\Vp} \Bigr)^{\Vm\in\ZPd}\in\xC^{\VP}
\end{align*}
is truncated to $\xC^{2\VN-1}$, while for $P_\alp<2N_\alp-1$, it is periodically enlarged to $\xC^{2\VN-1}$ thanks to the periodicity of $\omega_\VP^{\cdot\Vp}$.
\end{remark}

\section{Numerical examples}
\label{sec:numerical_examples}
This section is dedicated to numerical examples that confirm the properties of guaranteed bounds \eqref{eq:structure_homog_matrices} with an emphasis on the comparison of Ga \eqref{eq:Ga} with GaNi  \eqref{eq:GaNi} and \eqref{eq:GaNi_apost}.

\begin{problem}
A two-dimensional problem with material coefficients defined on a periodic cell $\puc = (-1,1)\times(-1,1) \subset \xR^2$ is considered and defined via
\begin{align*}
\TA(\Vx) = \mI[1 + \rho f_{\bullet}(\Vx)]\quad\text{for }\Vx\in\puc,
\end{align*}
where $\mI\in\xR^{2\times 2}$ is the identity matrix, $f: \puc\rightarrow \xR$ is a scalar nonnegative function which controls the shape of inclusions (recall Remark~\ref{rem:anal_Fourier} for specific examples), and $\rho>0$ is a parameter corresponding to the phase contrast. Two types of inclusions, square and circle, are considered, namely
\begin{align}
\label{eq:incl_topologies}
f_\mathrm{square}(\Vx) &= 
\begin{cases}
1&\text{for }\|\Vx\|_{\infty}< s \\
0&\text{otherwise}
\end{cases},
&
f_\mathrm{circle}(\Vx) &= 
\begin{cases}
1&\text{for }\|\Vx\|_{2}<s \\
0&\text{otherwise}
\end{cases},
\end{align}
where parameter $2s$ corresponds to an inclusion size, the side of the square and the radius, respectively.
The problem is discretized with odd grids \eqref{eq:N_odd} with an example shown in Figure~\ref{fig:cells} along with inclusion interfaces for both geometries \eqref{eq:incl_topologies}.
\end{problem}
\begin{figure}[htp]
\centering
\subfigure[\small Square inclusion]{
\includegraphics[scale=0.55]{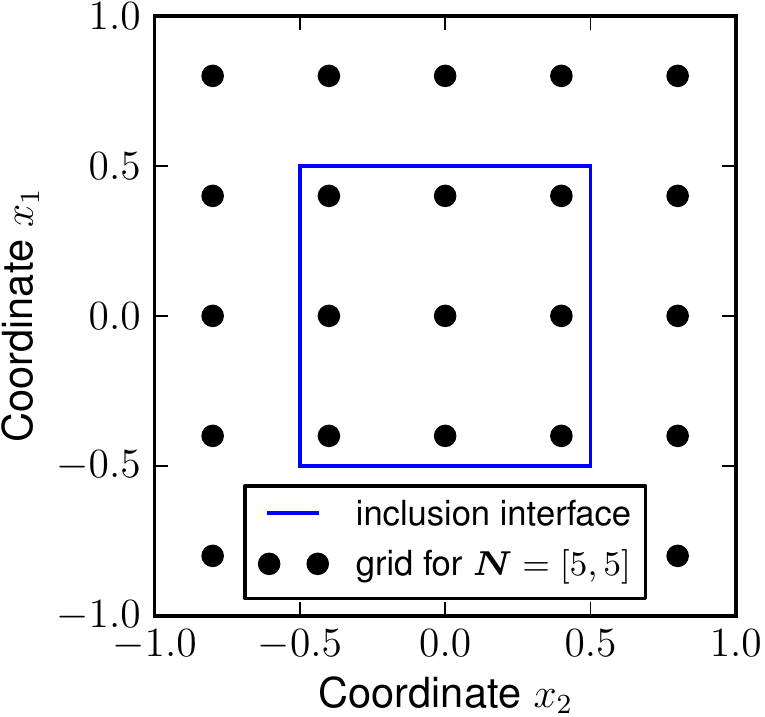}
}
\subfigure[\small Circle inclusion]{
\includegraphics[scale=0.55]{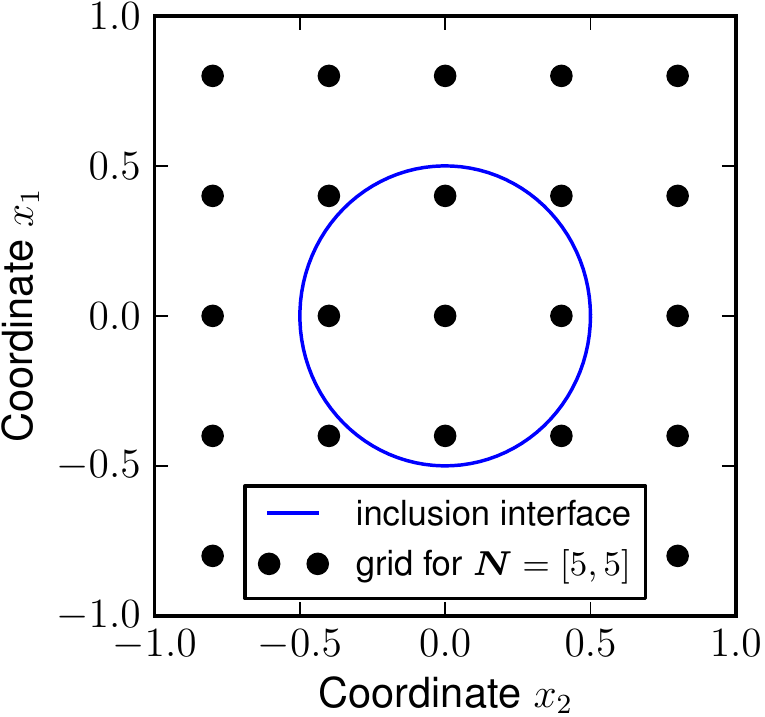}
}
\caption{Cells with grid and inclusion interfaces for size $s=\frac{1}{2}$}
\label{fig:cells}
\end{figure}
\begin{remark} All the computations have been provided using Python software \texttt{FFTHomPy} available at: \url{https://github.com/vondrejc/FFTHomPy.git}.
The linear systems presented in section~\ref{sec:linear_system} have been solved by
conjugate gradients; a convergence criterion on the norm of residuum has been chosen with a relatively small tolerance $10^{-6}$ in order to suppress algebraic error.
\end{remark}

The numerical examples are separated into the following parts:
section~\ref{sec:numerical_sensitivity2size} explores sensitivity of homogenized properties in regard to inclusion size, 
section~\ref{sec:numerical_bounds_wrt_dofs} describes an evolution of upper-lower bounds for an increase in grid points, section~\ref{sec:numerical_errors_for_phase_contrast} treats the behavior with different phase contrasts, and section~\ref{sec:numerical_CG} shows the progress of guaranteed bounds during iterations of conjugate gradients.

\subsection{Numerical sensitivity for the inclusion size}
\label{sec:numerical_sensitivity2size}
Here, homogenized properties are investigated with regard to an inclusion size $s$. Figure~\ref{fig:sensitivity_homog} depicts the results for a relatively small number of discretization points $\VN=(5,5)$ which highlight the difference between the Ga \eqref{eq:Ga} and GaNi \eqref{eq:GaNi} schemes.

\begin{figure}[htp]
\centering
\subfigure[\small Square inclusion]{
\includegraphics[scale=0.6]{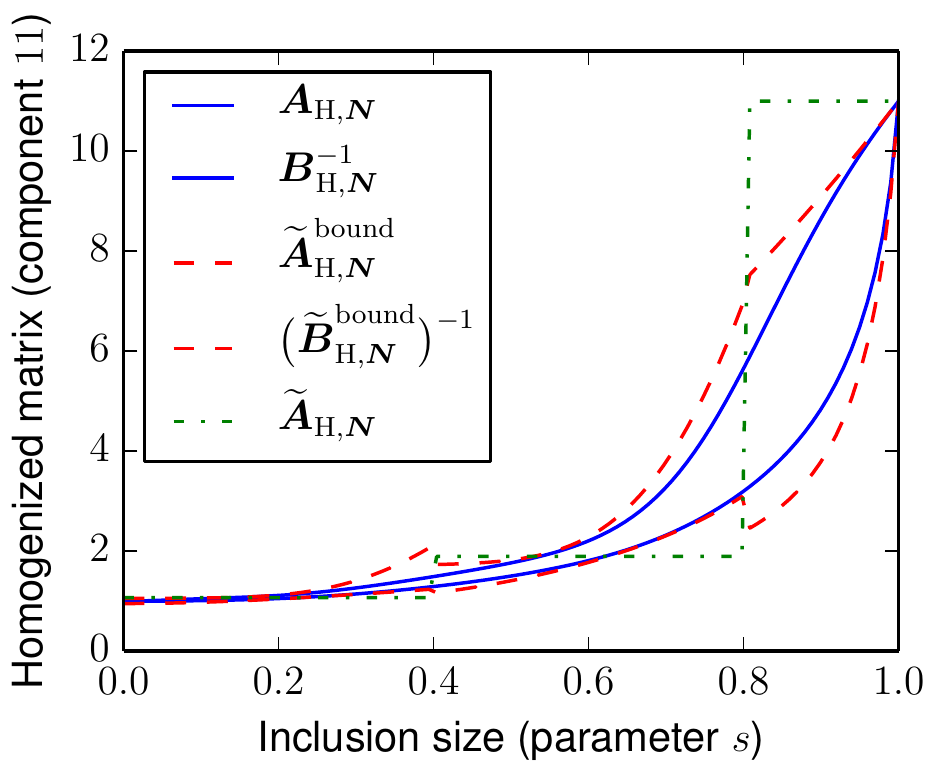}
}
\subfigure[\small Circle inclusion]{
\includegraphics[scale=0.6]{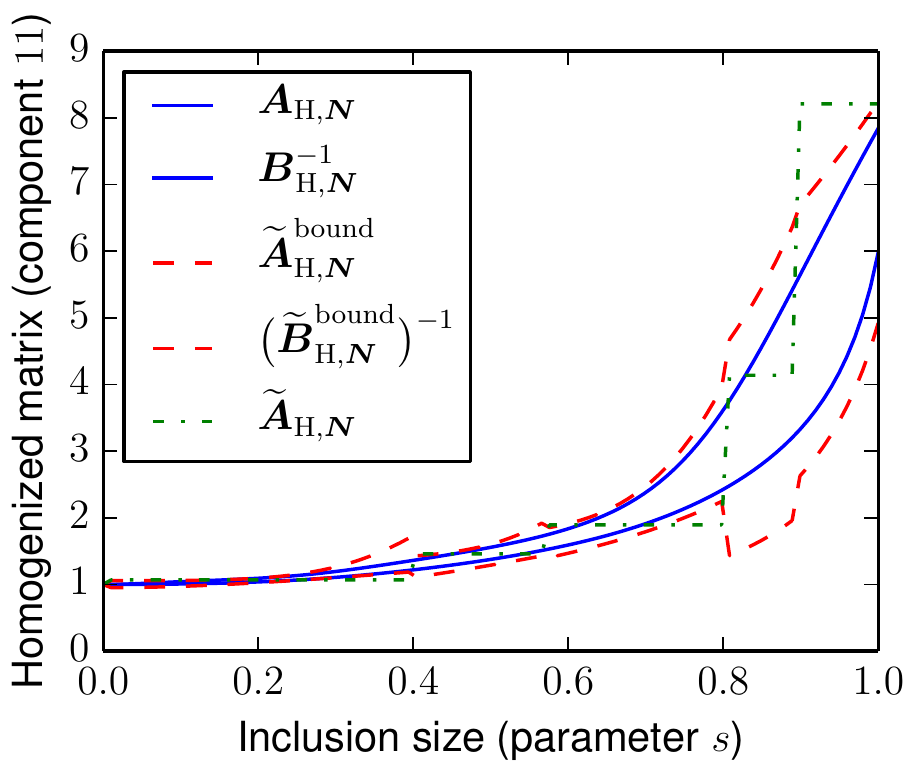}
}
\caption{Sensitivity of homogenized properties for $\rho = 10$ and  $\VN=(5,5)$}
\label{fig:sensitivity_homog}
\end{figure}

The structure in \eqref{eq:structure_homog_matrices} is satisfied, with better results obtained for homogenized coefficients $\AeffN,\BeffN^{-1}$ using the Ga scheme \eqref{eq:Ga} when compared to the guaranteed bounds $\tAeffN\bound,\bigl(\tBeffN\bound\bigr)^{-1}$ of the GaNi \eqref{eq:GaNi_apost}.
The GaNi matrix $\tAeffN$ in \eqref{eq:GaNi} together with its guaranteed bounds \eqref{eq:GaNi_apost} has already been studied in \cite{VoZeMa2014GarBounds}, where the authors pointed out that the homogenized matrix $\tAeffN$, in some cases, underestimates or overestimates its own guaranteed bounds $\tAeffN\bound$ and $\bigl(\tBeffN\bound\bigr)^{-1}$, respectively.
The GaNi scheme \eqref{eq:GaNi} is influenced by inaccurate numerical integration which disregards exact inclusion shapes because the scheme is defined only on grid points. 
As a result of exact integration, the homogenized matrices $\AeffN$, $\BeffN$ change smoothly in relation to the inclusion size $s$.

\subsection{Upper-lower bounds for an increase in the number of grid points}
\label{sec:numerical_bounds_wrt_dofs}
This section is dedicated to the behavior of homogenized properties for an increase in the approximation order of trigonometric polynomials $\VN$, see Definition~\ref{def:trig_pol}. It is depicted in Figures~\ref{fig:bounds_for_dofs} and~\ref{fig:errors_for_dofs} for homogenized properties and also for their guaranteed errors defined as 
\begin{align}
\label{eq:error}
\eta_\VN
&:= \tr\biggl(\frac{\AeffN-\BeffN^{-1}}{2}\biggr),
&
\widetilde{\eta}_\VN\bound &:= \tr\biggl( \frac{\tAeffN\bound-\bigl(\tBeffN\bound\bigr)^{-1}}{2} \biggr).
\end{align}
\begin{figure}[htp]
\centering
\subfigure[\small Square inclusion]{
\includegraphics[scale=0.6]{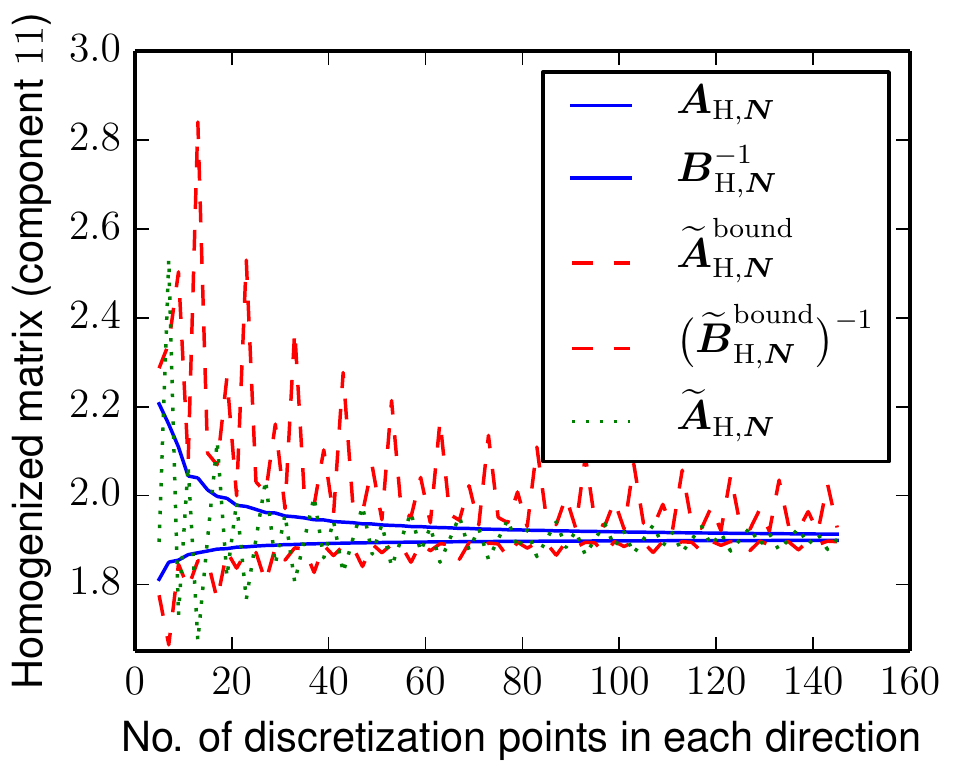}
}
\subfigure[\small Circle inclusion]{
\includegraphics[scale=0.6]{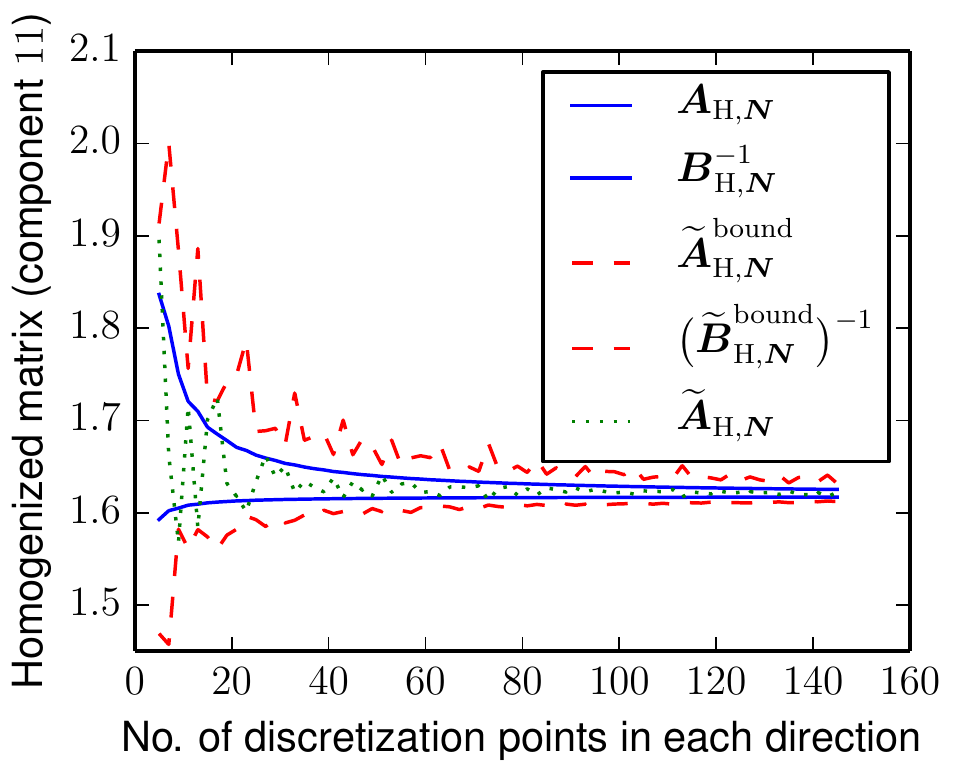}
}
\caption{Bounds on homogenized matrix for inclusion size $s=0.6$ and phase contrast $\rho = 10$}
\label{fig:bounds_for_dofs}
\end{figure}

All the homogenized coefficients from both the Ga and GaNi schemes support the structure of guaranteed bounds \eqref{eq:structure_homog_matrices} and converge to homogenized matrix $\Aeff$ for an increasing number of grid points, which has been proven theoretically in \cite[section~4.2]{VoZeMa2014FFTH} for the Ga scheme; the convergence for GaNi is provided in \cite[section~4.3]{VoZeMa2014FFTH} along with a regularization for discontinuous material coefficients according to \cite[Section~3, pp.~115--117]{Vondrejc2013PhD} or later in \cite{Schneider2014convergence} for Riemann integrable coefficients.
Moreover, thanks to the hierarchy of approximation spaces
\begin{align}
\label{eq:subspace_hierarchy}
\cEN \subseteq\cE_{\VM}\subset\cE\text{ and }\cJN \subseteq\cJ_{\VM}\subset\cJ\quad\text{for }N_\alp\leq M_\alp,
\end{align}
the homogenized matrices of the Ga scheme $\AeffN, \BeffN^{-1}$
evolve monotonically as opposed to the homogenized matrices of the GaNi scheme, which suffer, as already noticed in previous section, from inexact numerical integration causing the so-called ''variational crime'' \cite{strang1972varcrime}.

\begin{figure}[htp]
\centering
\subfigure[\small Square inclusion]{
\includegraphics[scale=0.6]{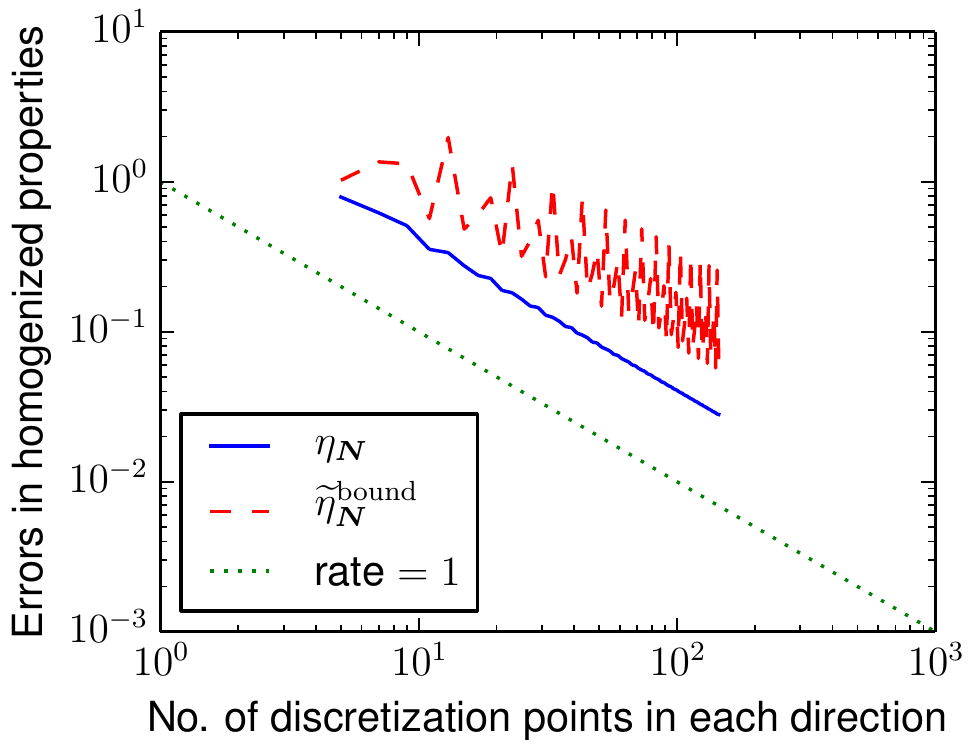}
}
\subfigure[\small Circle inclusion]{
\includegraphics[scale=0.6]{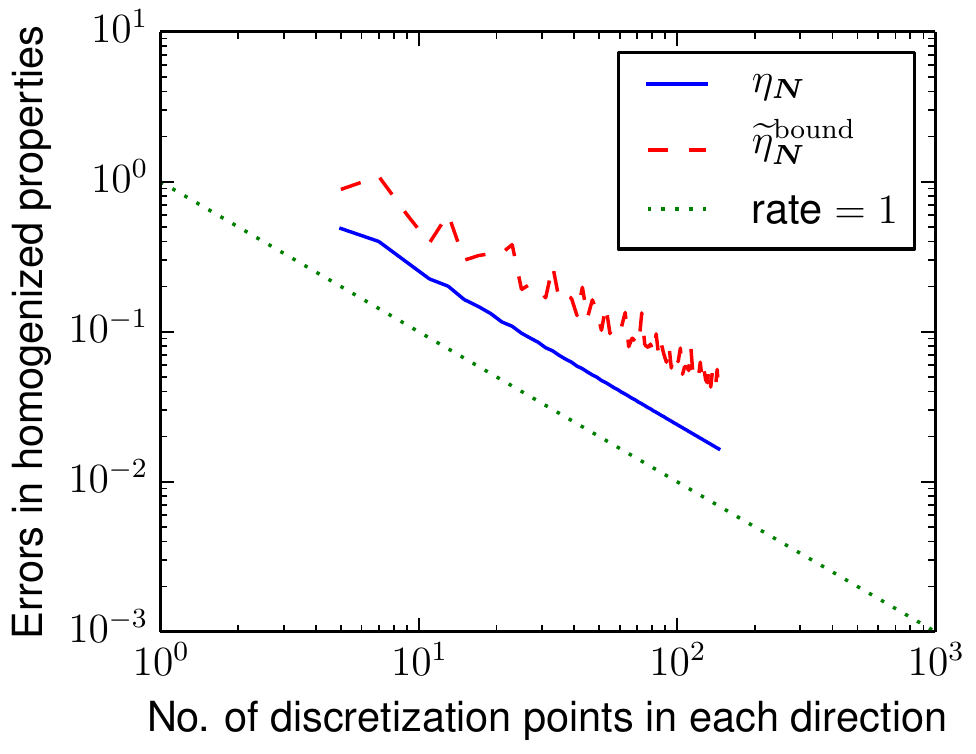}
}
\caption{Errors in homogenized properties \eqref{eq:error} for inclusion size $s=0.6$ and phase contrast $\rho=10^1$}
\label{fig:errors_for_dofs}
\end{figure}

The normalized errors, introduced and studied in \cite{Dvorak1993master} for the Finite Element Method, develop in the same rate for both schemes and this confirms the theoretical results regarding the convergence of minimizers presented in \cite[section~4.2 and 4.3]{VoZeMa2014FFTH} for FFT-based methods.
Moreover, errors in the Ga scheme evolve almost as a straight line and this allows us to predict the number of grid points required to achieve the necessary accuracy.
Finally, both the homogenized properties of the GaNi and the normalized error undergo more zigzag behavior for square than for circle inclusion, because the material coefficients change at all grid points along the square interface for a change in its size or in the number of grid points.

\subsection{Comparison of Ga with GaNi for an increase in phase ratio}
\label{sec:numerical_errors_for_phase_contrast}
This section investigates the homogenized properties in terms of normalized errors \eqref{eq:error} for an increase in phase contrast $\rho$ (see Figure~\ref{fig:errors_wrt_phase}).
Moreover, it enables fair comparion of Ga with GaNi in terms of computational and memory requirements along with the accuracy of individual methods.
Indeed, the Ga \eqref{eq:Ga_linsys_orig} and the GaNi \eqref{eq:GaNi_linsys} linear systems possess the same structure with block-diagonal matrices of material coefficients;
however, the Ga is evaluated on a double grid, resulting in higher computational and memory requirements for the same approximation order $\VN$; see Remark~\ref{rem:mem_req} for a detailed discussion.
Because of this, the GaNi is calculated with a double order $\VN$ than the Ga scheme; for this choice, the computational demands are approximately the same, while the memory requirements are even slightly lower for Ga, especially when the reduced version \eqref{eq:Ga_linsys_R} is used.

Independently of inclusion shapes, the Ga \eqref{def:Ga} progresses with sharply better rates than the GaNi \eqref{eq:GaNi}.
Moreover, for the same computational demands, the Ga scheme produces tighter guaranteed bounds on homogenized properties, which is amplified for higher phase contrasts.

\begin{figure}[htp]
\centering
\subfigure[\small Square inclusion]{
\includegraphics[scale=0.6]{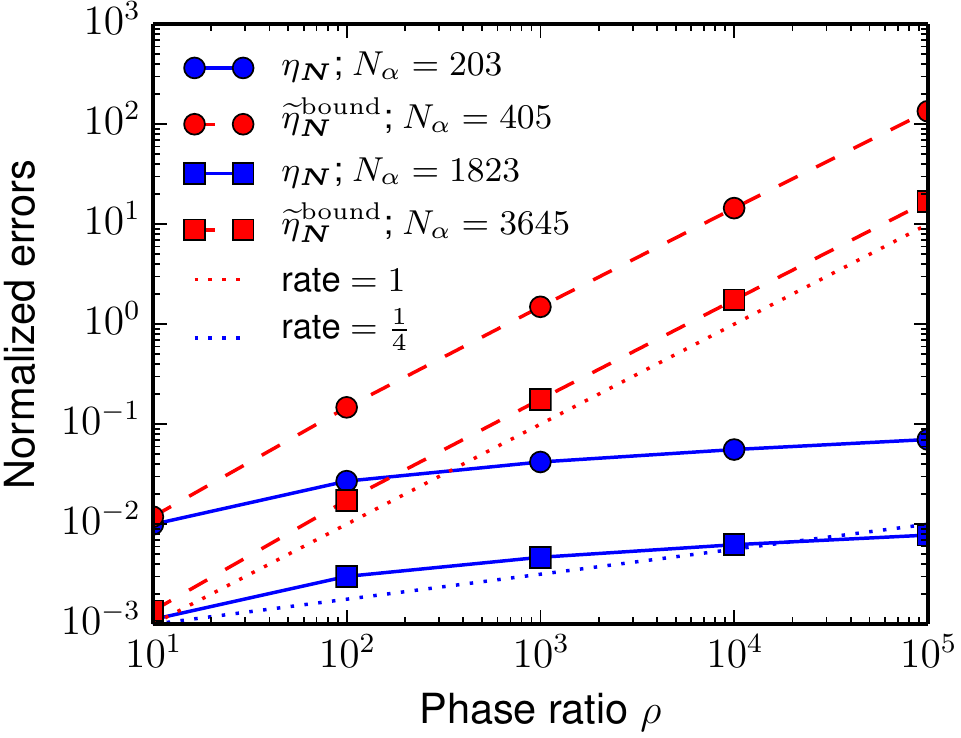}
}
\subfigure[\small Circle inclusion]{
\includegraphics[scale=0.6]{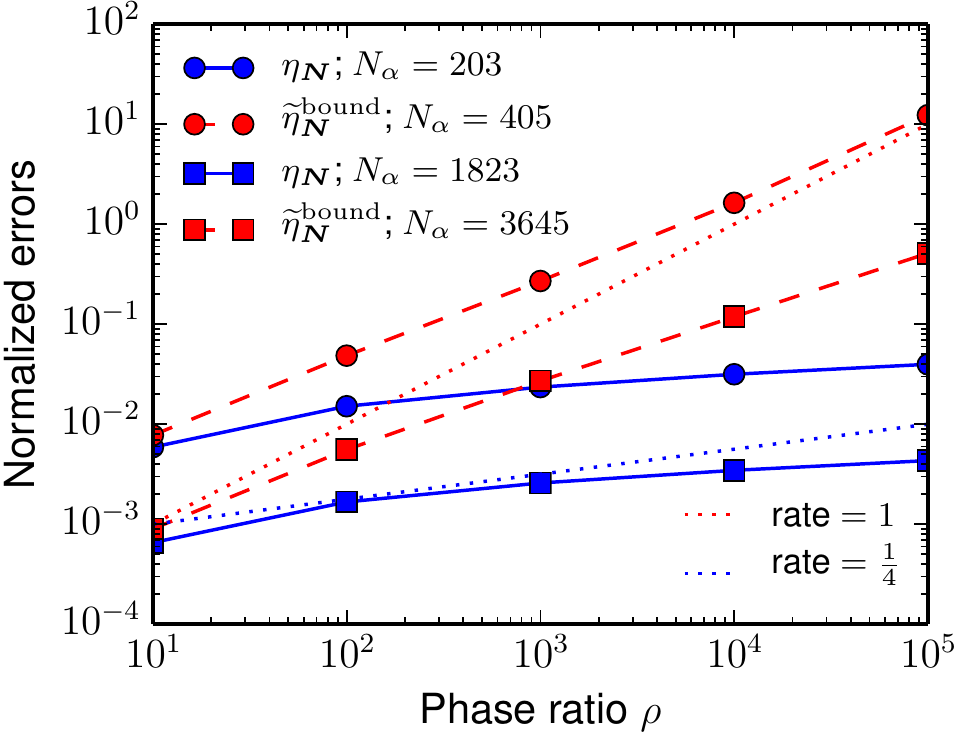}
}
\caption{Normalized errors \eqref{eq:error} for an increase in phase ratio $\rho$, $s=0.6$}
\label{fig:errors_wrt_phase}
\end{figure}

\subsection{The evolution of guaranteed bounds during iterations of conjugate gradients}
\label{sec:numerical_CG}

Here, the author investigates the evolution of bounds  during iterations of conjugate gradients (CG). In each iteration, a guaranteed bound is evaluated using the corresponding quadratic form as in \eqref{eq:minim_krylov}.
The results are shown in Figure~\ref{fig:CG} for primal formulation (upper bound), both topologies, and a relatively high phase contrast $\rho=10^4$ to highlight the behavior.

\begin{figure}[htp]
\centering
\subfigure[\small Square inclusion]{
\includegraphics[scale=0.6]{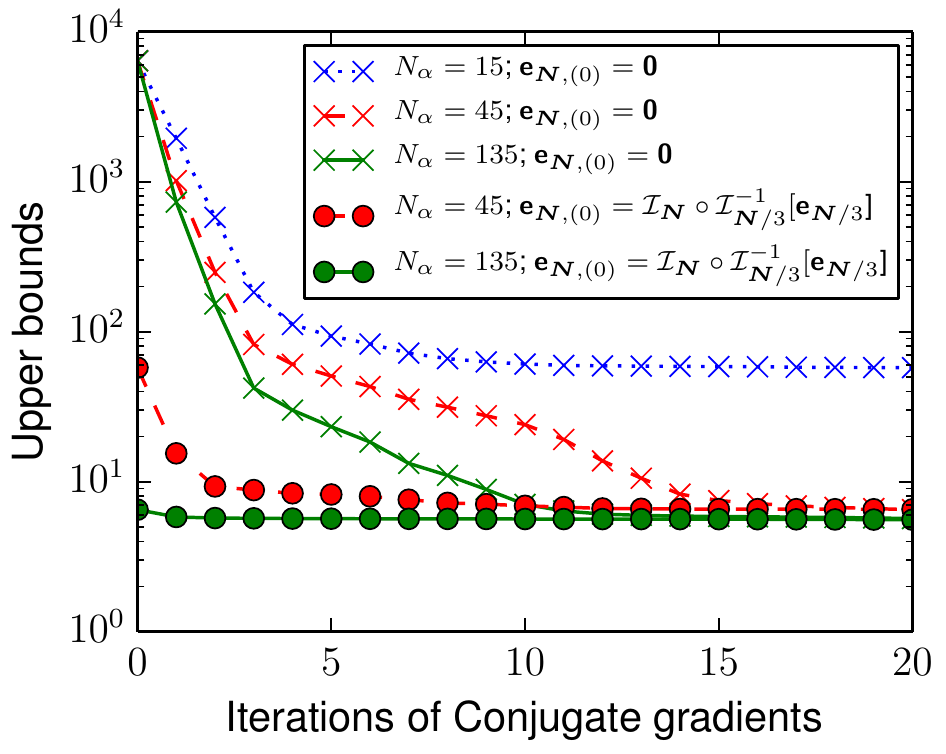}
}
\subfigure[\small Circle inclusion]{
\includegraphics[scale=0.6]{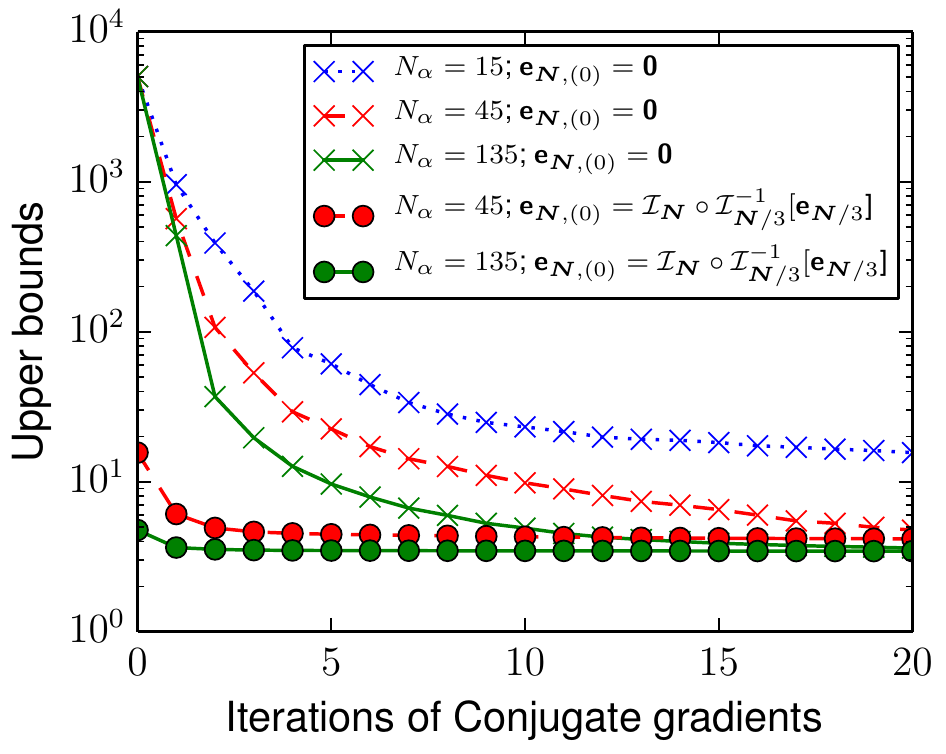}
}
\caption{Upper bounds during iterations of conjugate gradients; $\rho = 10^4$; $s=0.8$}
\label{fig:CG}
\end{figure}

According to the standard results summarized in Remark~\ref{rem:CG}, CG minimize the quadratic functional corresponding to the upper bound; the monotonic evolution of homogenized properties is confirmed in Figure~\ref{fig:CG}.
For all grid sizes, since the initial approximation for CG is taken as a zero vector, the bounds begin from a Voigt bound $\mean{\TA}$, the mean of material coefficients.

This starting point can be significantly improved using a hierarchy of approximation spaces \eqref{eq:subspace_hierarchy} in accordance to the p-version of the FEM \cite{Dvorak1993master} when a solution on a coarse grid is used as an initial approximation on a fine grid. This idea was also used for FFT-based homogenization in \cite{Eyre1999FNS}, where the prolongation was defined on nested grids with the help of modified bi-cubic Hermite polynomials. 
Here, the prolongation operator $\IM{\VN} \circ \IMi{\VM}:\xRdM\rightarrow\xRdN$ is defined on general grids $\VN,\VM\in\xNd$, $N_\alp>M_\alp$ using discretization operator \eqref{eq:IN}.
The initial approximation $\MBe\mac{\VE}_{\VN,(0)}=\IM{\VN} \circ \IMi{\VM}[\MBe\mac{\VE}_{\VM}]$ on a fine grid is then calculated from the solution of a linear system on a coarse grid $\MBe\mac{\VE}_{\VM}\in\xRdM$ with an FFT of size $\VM$ and an inverse FFT of size $\VN$;
in the case of Figure~\ref{fig:CG}, coarse grid $\VM$ is chosen to be $\VN/3$.
Note that no approximation is made in this step because the corresponding trigonometric polynomial on the coarse grid equals the one on the fine grid, i.e. $\IMi{\VM}[\MBe\mac{\VE}_{\VM}] = \IMi{\VN}[\MBe\mac{\VE}_{\VN,(0)}]\in\cE_{\VM}\subset\cEN$.
\subsection{Fly ash foam}
Here, the author shows how these methods can be applied to a  complex material consisting of alkali-activated fly ash foam. The coefficients, according to \cite{Hlavacek2014flyash},
\begin{align*}
\TA(\Vx) = \bigl[ 0.49 \cdot f(\Vx) + 0.029\cdot\bigl(1 - f(\Vx)\bigr) \bigr] \cdot \mI
\quad\text{for }\Vx\in\puc,
\end{align*}
are defined via a fly ash phase characteristic function $f:\puc\rightarrow\xR$ depicted in Figure~\ref{fig:3d_cell} as a voxel-based image with resolution $\VN=[99, 99, 99]$ corresponding to $970 299$ points.
\begin{figure}[htp]
\centering
\includegraphics[scale=0.6]{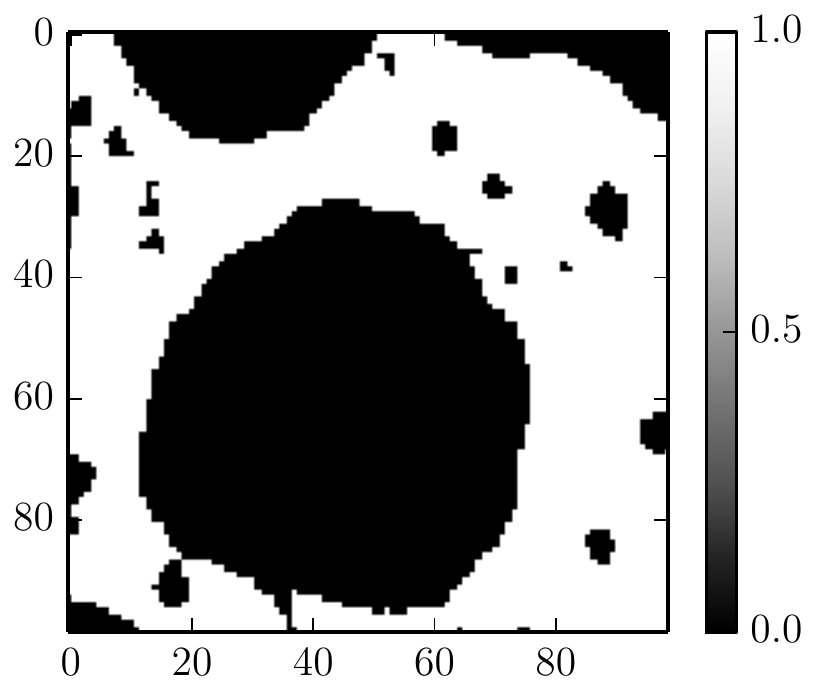}
\caption{A frontal view on a three-dimensional cell}
\label{fig:3d_cell}
\end{figure}

The models were calculated on a conventional
PC (Intel$\textcopyright$ Core$\texttrademark$ i7-4790 CPU @ 3.60GHz and $32$ GB of RAM) within less than half an hour for both the GaNi and the Ga schemes. The results
are represented for eigenvalues of homogenized coefficients because they also satisfy the  structure of upper-lower bounds \eqref{eq:structure_homog_matrices}, i.e.
for the Ga scheme \eqref{eq:Ga}
\begin{subequations}
\label{eq:3dcell_ga_eig}
\begin{align}
\eig\AeffN &= 
\begin{bmatrix}
0.12910304 & 0.13832553 & 0.14775427
\end{bmatrix},
\\
\eig\BeffN^{-1} &=
\begin{bmatrix}
0.11659856 & 0.12501264 & 0.13370292
\end{bmatrix},
\end{align}
\end{subequations}
for the GaNi scheme \eqref{eq:GaNi}
\begin{subequations}
\label{eq:3dcell_gani}
\begin{align}
\eig\tAeffN &= 
\begin{bmatrix}
 0.12635922 &  0.13525791 & 0.14487843
\end{bmatrix},
\\
\eig(\tBeffN^{-1}) &=
\begin{bmatrix}
 0.12635914 & 0.13525783 & 0.14487836
\end{bmatrix},
\end{align}
\end{subequations}
and for their corresponding guaranteed bounds \eqref{eq:GaNi_apost}
\begin{subequations}
\label{eq:3dcell_gani_apost}
\begin{align}
\eig\tAeffN\bound &= 
\begin{bmatrix}
0.13140661 & 0.14087040 & 0.15016215
\end{bmatrix},
\\
\eig\bigl(\tBeffN\bound\bigr)^{-1} &=
\begin{bmatrix}
0.10289562 & 0.11066014 & 0.11795426
\end{bmatrix}.
\end{align}
\end{subequations}
The eigenvalues of the GaNi formulation \eqref{eq:3dcell_gani} differ only because of an algebraic error and this confirms the duality of the GaNi scheme stated in \cite[Propositin~34]{VoZeMa2014GarBounds} (see Remark~\ref{rem:duality_gani} for an overview).
Moreover, they are located between the guaranteed bounds obtained by both the Ga \eqref{eq:3dcell_ga_eig} and the GaNi \eqref{eq:3dcell_gani_apost} schemes, and thus the GaNi provides an applicable prediction of homogenized properties. 
Because the guaranteed bounds comply with the energetic norms of minimizers, the Ga \eqref{eq:3dcell_ga_eig} signifies a better approximation of local fields than the GaNi \eqref{eq:3dcell_gani_apost}. This gap is accentuated in highly-contrasted media.

\section{Conclusion}
This paper focuses on the numerical solution to the variational form of the unit cell problem \eqref{eq:homog_problem}, describing the homogenized properties of periodic heterogeneous materials. For discretization, two Fourier-Galerkin schemes were used and studied: Galerkin approximation (Ga) in \eqref{eq:Ga} and its version with numerical integration (GaNi) in \eqref{eq:GaNi}. In \cite{VoZeMa2014GarBounds}, the computable guaranteed bounds on homogenized properties were introduced for the latter scheme. The approach, consisting in an exact evaluation of the primal-dual variational formulation for materials with an analytical expression of Fourier coefficients, is generalized here and applied to the Ga scheme, also resulting in a comparison with the GaNi. Theoretical results are confirmed with numerical examples. To summarize the most important findings:

\begin{itemize}
\item The structure of the guaranteed bounds on homogenized properties, originating from Ga and GaNi, was established in Proposition~\ref{lem:struct_bounds}, section~\ref{sec:bounds}.
 \item
In Lemma~\ref{lem:grid_compos}, section~\ref{sec:grid_composite}, the methodology for efficient double grid quadrature from the author's previous work \cite[section~6]{VoZeMa2014GarBounds} is generalized for a grid-based composite \eqref{eq:gridcomp}.
These material coefficients, defined via high-resolution images assuming e.g. piece-wise constant or bilinear approximation, can be effectively treated using FFT, see~Remark~\ref{rem:eval_with_FFT}.
\item Both the Ga \eqref{eq:Ga} and GaNi \eqref{eq:GaNi} schemes lead to discrete formulations with a very similar block-sparse structure; compare Remark~\ref{rem:FD_GaNi} with Lemma~\ref{lem:integ_eval_VM} and linear system \eqref{eq:GaNi_linsys} with \eqref{eq:Ga_linsys}. 
 However, the Ga is primarily evaluated on a double grid which can be recast to the original grid using shifts of DFT, see Lemma~\ref{lem:reduce_grid}.
The memory and computational requirements discussed in Remark~\ref{rem:mem_req} are higher for the linear systems of Ga \eqref{eq:Ga_linsys} than the GaNi \eqref{eq:GaNi_linsys}.
Nevertheless, the recast Ga \eqref{eq:Ga_linsys_R} leads to reduced memory requirements compared to the original Ga \eqref{eq:Ga_linsys_orig} without impacting computational
costs involved in solving linear systems.
 \item  The Ga scheme \eqref{eq:Ga} outperforms the GaNi \eqref{eq:GaNi}. Under matching computational costs for both schemes, the Ga provides more accurate results; guaranteed bounds on homogenized properties are more tight. The gap between the two schemes is accentuated in highly-contrasted media, \JV{section~\ref{sec:numerical_errors_for_phase_contrast}}.
 \item Evaluation of guaranteed bounds depends on knowing the Fourier coefficients of material properties $\TA$, see~\cite{VoZeMa2014GarBounds}. So, the approximation of $\TA$ is proposed as a way to produce upper-upper and lower-lower guaranteed bounds, section \ref{sec:approx_bounds}.
 \item Contrary to GaNi, the Ga scheme exhibits monotonous behavior, without oscillations in homogenized properties, for an increase in grid points and for a change in inclusion size, \JV{sections~\ref{sec:numerical_sensitivity2size} and~\ref{sec:numerical_bounds_wrt_dofs}}.
 \item Both schemes have the same rate of convergence of both minimizers and homogenized properties, which confirms the theoretical results in \cite{VoZeMa2014FFTH}. From the rate of convergence, it is possible to predict the grid size for a required level of accuracy, \JV{ section~\ref{sec:numerical_bounds_wrt_dofs}}.
 \item The Ga scheme can be effectively solved using conjugate gradients providing monotonous improvements of guaranteed bounds during iterations. Moreover, an approximate solution on a coarse grid can be easily transferred to a fine grid to significantly improve the convergence of the solution to the linear system, \JV{sections~\ref{sec:linear_system} and~\ref{sec:numerical_CG}}.
\end{itemize}
To conclude, I recommend using the Ga scheme because it leads to more accurate approximations for the same computational effort. Moreover, the numerical behavior of the Ga is more smooth and predictable than the GaNi.

\JV{The methodology used here is also valid for linearized elasticity. When using engineering notation (e.g. Mandel's notation) in topological dimension $3$, elasticity corresponds to a scalar problem treated here for dimension $d=6$ along with a different projection operator $\MBhG$.} Nevertheless, additional investigation is required for more complex problems.


\subsection*{Acknowledgement}
This work has been supported by project EXLIZ -- CZ.1.07/2.3.00/30.0013 which is co-financed by the European Social Fund and the national budget of the Czech Republic and by the Czech Science Foundation through project No.~P105/12/0331.

\end{document}